\documentclass[hidelinks,onefignum,onetabnum]{siamart220329}



\usepackage{lipsum}
\usepackage{amsfonts}
\usepackage{graphicx}
\usepackage{epstopdf}
\usepackage{algorithmic}
\ifpdf
\DeclareGraphicsExtensions{.eps,.pdf,.png,.jpg}
\else
\DeclareGraphicsExtensions{.eps}
\fi


\newsiamremark{remark}{Remark}
\newsiamremark{hypothesis}{Hypothesis}
\crefname{hypothesis}{Hypothesis}{Hypotheses}
\newsiamthm{claim}{Claim}
\newsiamthm{assumption}{Assumption}

\headers{Sparse Representations of Dynamical Networks}{\c{S}. Sab\u{a}u, A. {S}peril\u{a}, C. Oar\u{a}, and A. Jadbabaie}

\title{Sparse Representations of Dynamical Networks:  \\ A Coprime Factorization Approach\thanks{This work was supported by GAR2023 code 71 Research Grant funded and managed by the Patrimony Foundation (Funda\c tia ``Patrimoniu'') of the Romanian Academy, from the Recurrent Fund of Donors, Contract No. 260/28.11.2023.}}
	
	\subtitle{\color{white}\vspace{-12mm}\normalsize\textbf{(ANDREI'S VERSION)}\vspace{4mm}}

\author{\vspace{3mm}\c{S}erban Sab\u{a}u\thanks{Electrical and Computer Engineering Department, Stevens Institute of Technology, Hoboken, NJ 07030 USA (\email{serban.sabau@gmail.com}).}
	\and Andrei {S}peril\u{a}\thanks{Faculty of Automatic Control and Computers,
		``Politehnica" University of Bucharest, Sector 6, 060042 Romania (\email{andrei.sperila@upb.ro}, \email{cristian.oara@upb.ro}).}
	\and Cristian Oar\u{a}\footnotemark[3] \and Ali Jadbabaie\thanks{Institute for Data, Systems and Society, Massachusetts Institute of Technology (MIT), Cambridge, MA 02139 USA (\email{jadbabai@mit.edu}).}}

\usepackage{amsopn}


\usepackage{comment,arydshln}
\usepackage{tikz}
\usepackage[english]{babel}

\ifpdf
\hypersetup{
  pdftitle={Sparse Representations of Dynamical Networks:\\ A Coprime Factorization Approach},
  pdfauthor={\c{S}. Sab\u{a}u, A. {S}peril\u{a}, C. Oar\u{a}, and A. Jadbabaie}
}
\fi


\externaldocument[][nocite]{ex_supplement}


\begin{document}
	
\tikzset{%
	block/.style    = {draw, thick, rectangle, minimum height = 1.7em,
		minimum width = 1.7em},
	block1/.style    = {draw, thick, rectangle, minimum height = 1.7em,
		minimum width = 1.7em,fill=gray!70},
	block2/.style    = {draw, thick, rectangle, minimum height = 1.7em,
		minimum width = 1.7em,fill=gray!30},
	sum/.style      = {draw, circle, node distance = 1.4cm}, 
	input/.style    = {coordinate}, 
	output/.style   = {coordinate} 
}

\maketitle

\begin{abstract}
We study a class of dynamical networks modeled by linear and time-invariant systems which are described by state-space realizations. For these networks, we investigate the relations between various types of factorizations which preserve the structure of their component subsystems' interconnection. In doing so, we provide tractable means of shifting between different types of sparsity-preserving representations and we show how to employ these factorizations to obtain distributed implementations for stabilizing and possibly stable controllers. By formulating all these results for both discrete- and continuous-time systems, we develop specialized distributed implementations that, up to this point, were only available for networks modeled as discrete-time systems.\vspace{-1mm}
\end{abstract}

\begin{keywords}
Sparse representations, distributed controllers, coprime factors, networked systems.\vspace{-1mm} 
\end{keywords}

\begin{MSCcodes}
93A14, 93B99, 93C05\vspace{-2mm}
\end{MSCcodes}

\section{Introduction}

With the growing trend of modern technical applications becoming more interconnected and spatially widespread, distributed process control has become an area of intense research in the specialized literature catering to dynamical systems. However, with its convexity refuted and its generic intractability revealed (see, for example, \cite{DateChow}), recent efforts in the area of {distributed control} for Linear and Time-Invariant (LTI) systems have shifted from the classical formulation of imposing sparsity upon the Transfer Function Matrix (TFM) of the controller \cite{Voul1, Voul2, Rotko, Shah} towards the implementation of decentralized state-space-based control laws.
\vspace{-1mm}

\subsection{Context}

Among the first attempts to formalize a state-space-oriented approach to distributed control was the work from \cite{Realiz}, in which structured realizations are employed in order to obtain control laws that inherit the sparsity structure of the controlled network. A key concept pioneered in \cite{Realiz}, which turns out to be crucial for the results presented in this manuscript, is the fact that the distributed state-space implementation of the sought-after control laws must possess certain structural properties, such as stabilizability and detectability, in order to certify closed-loop stability.

Yet, arguably, the most celebrated results in this new paradigm of distributed control came with the advent of the System-Level Synthesis (SLS) design framework from \cite{SLS}. Based upon the classical work in \cite{BBS}, the SLS approach produces decentralized control laws for discrete-time systems, by employing an affine parametrization of all the achievable closed-loop maps. Additionally, the work in \cite{Separ} has shown that, in the SLS framework, it is possible to decouple controller synthesis from closed-loop design. Closely following the developments in \cite{SLS}, the body of work presented in \cite{Explicit} aims to provide a continuous-time parallel to the SLS framework, with important contributions to the vehicular platoon problem, which was also tackled in \cite{plutonizare}. Notably, the approaches from \cite{Explicit} and \cite{plutonizare} are linked via the implicit use of coprime factorizations, and an underlying dependence upon the Youla parametrization.

In the same vein as the techniques from \cite{Explicit} and \cite{plutonizare}, a new design paradigm, based upon coprime factorizations and heavily inspired by the work from \cite{DSF}, was recently formalized in \cite{NRF}, with the aim of proposing a unitary continuous- and discrete-time approach to optimal distributed control. The theoretical developments presented in \cite{NRF} are also supported by the convex relaxation-based procedures from \cite{aug_sparse}, which are part of a larger class of design methods currently gaining traction in the distributed control literature (see \cite{Lavaei,Jovanovic, reg, Sznaier, trian}). Fortunately, the intrinsic connections between the SLS-focused methods and their Youla-based counterparts has been elucidated in \cite{Tseng} and \cite{Luca3}, while a system-theoretical middle ground between these two approaches was proposed in \cite{Luca1} and \cite{Luca2}, via the Input-Output Parametrization.

\subsection{Motivation}

Although the various techniques mentioned in the previous subsection clearly show that significant progress has been made in moving away from the traditional sparse TFM-based approach to distributed control, a number of important limitations still remain unaddressed in literature. While certain procedures, such as the ones in \cite{Realiz}, build upon highly structured realizations for the investigated networks, obtaining these exceptional representations becomes somewhat unwieldy for large-scale systems. Except for those networks endowed with particularly sparse or with recursive inter-nodal connections (see the numerical example in \cite{aug_sparse}), the effort needed to compose the interconnected plant's model out of its nodes' realizations does not scale gracefully with respect to the network's complexity. 

Similar to the results from \cite{Realiz}, other methods proposed in literature focus exclusively on networks modeled in either continuous- or discrete-time, with \cite{Explicit} and \cite{SLS} serving as respective examples. Oftentimes, this exclusivity is owed to the manner in which stability guarantees are offered for the proposed control schemes. We point out that this particular limitation, which will be further commented upon in the sequel, has been lifted following the introduction of the design framework from \cite{NRF}. In doing so, however, the control laws from \cite{NRF} take on a distinct input-output interpretation, regardless of their distributed and stabilizing state-space-based implementations, as in Section III.C in \cite{NRF}. Thus, the cost paid for this increased flexibility is a degree of opaqueness with respect to network state dynamics, at the implementation level.

\subsection{Contribution}

The aim of this paper is to propose a factorized representation, inspired by the system response TFMs from \cite{SLS}, which is suitable to both network analysis and distributed controller synthesis. This representation, dubbed a System Response-Type Realization (SRTR), captures both the dynamics and the structure of the network, regardless of whether it is modeled as either a continuous- or a discrete-time system. This factorization offers a notably keener insight into the state dynamics of the plant than the Network Realization Function (NRF) representation from \cite{NRF}, while ensuring the same guarantees of closed-loop stability as NRF-based control laws, when implementing controllers via SRTR-based feedback schemes.

In developing the theory associated with this new representation, we implicitly investigate the subtle connections between the classical left coprime factorizations discussed in \cite{V}, the system response TFMs from \cite{SLS} and the NRF pairs introduced in \cite{NRF}. Additionally, we propose numerically tractable means of shifting between the aforementioned types of representations, all while placing a particular emphasis on the preservation of three key properties: stability, coprimeness and sparsity structure. Finally, we show how the first property allows us to elegantly tackle the problem of strong stabilization in the decentralized setting, how the second property implicitly allows for the commutation between representations, and how the third property directly promotes the implementation of distributed control laws.

\subsection{Outline of the Paper} In the second section, we include a series of preliminary notions which are standard to control theory literature. The third section of the paper contains a brief outline of the theoretical concepts behind various representations for LTI systems. In the fourth section, we show that, while the NRF representation of a given LTI system is not guaranteed to be coprime, this property can be ensured for a closely related representation associated with our system, heavily inspired by those in \cite{SLS}. We also provide state-space formulas for the class of all such coprime representations, related to a particular type of realization for our LTI system. 

The fourth section also contains the main results of the paper: an explanation of the intrinsic connections between NRF pairs, the coprime representations inspired by \cite{SLS}, and the classical left coprime factorizations associated with a given TFM, along with the means of employing them to obtain distributed and stabilizing control laws. The fifth section showcases the direct application of these results through a numerical example. Finally, the sixth section contains a number of concluding remarks. 

In order to ensure that the paper is as self-contained as possible, some more specialized notions and technical results have been included in Appendices~\ref{sec:ap_A} and~\ref{sec:ap_B}, while Appendix~\ref{sec:ap_C} contains a series of auxiliary statements, employed in the proofs of the main results. Moreover, the aforementioned proofs have been placed in Appendix~\ref{sec:ap_D}, in order to promote the flow and readability of the paper.

\section{Preliminaries} \label{prelim}

\subsection{Notation}

We denote by $\mathbb{R}$ the set of real numbers and by $\mathbb{N}$ the set of natural ones, while $\mathbb{C}$ stands for the complex plane and $\mathbb{C}_\infty:=\mathbb{C}\cup\{\infty\}$ denotes the extended complex plane. Let $\mathbb{R}^{p\times m}$ be the set of $p \times m$ real matrices, for whose elements we denote by $\|\cdot\|$ the spectral norm, \emph{i.e.}, the largest singular value. Let also $\mathbb{R}(\lambda)^{p\times m}$ be the set of  $p \times m$ TFMs, which are matrices having real-rational functions as entries. Henceforth, we denote such TFMs through the use of boldface symbols.

Additionally, we point out that our results apply to both continuous- and discrete-time LTI systems. Thus, we assimilate the indeterminate
$\lambda$ with  the complex variables $s$ or $z$ appearing in the
Laplace or $\mathcal{Z}$-transform, respectively, depending upon the type of system which is under investigation. 
\subsection{Realization Theory}

The main objects of study in this paper are LTI systems which are described in the time domain by the state equations
\begin{subequations}
	\begin{align} 
		\sigma{ x}  &=  A x  +  Bu  ,\  {x}(t_o)={x}_o, & \label{ss0ab} \\
		y &=  C  x  + D u, & \label{ss0c}
	\end{align}
\end{subequations}

\noindent  with $A\in\mathbb{R}^{n\times n}$ being the \emph{state matrix}, $B\in\mathbb{R}^{n\times m}$ the \emph{input matrix}, $C\in\mathbb{R}^{p\times n}$ the \emph{output matrix} and $D\in\mathbb{R}^{p\times m}$ the \emph{feedthrough matrix}, while $n$ is called the {order} of the realization. The operator $\sigma$ in \eqref{ss0ab}-\eqref{ss0c} stands for either the \emph{time derivative}, in the continuous-time context, or the \emph{forward unit-shift}, in the discrete-time case. Given any $n$-dimensional state-space realization \eqref{ss0ab}-\eqref{ss0c}  of  an LTI system, its input-output representation is expressed via the TFM denoted by
\begin{equation*}
	\mathbf G(\lambda) =  \left[\begin{array}{c|c}A-\lambda I_n & B \\ \hline C & D \end{array}\right] := D + C(\lambda I_n - A)^{-1}B.
\end{equation*}\newpage

A pair $(A,B)$ or a realization \eqref{ss0ab}-\eqref{ss0c} for which $\begin{bmatrix}
	A-\lambda I_n & B
\end{bmatrix}$ has full row rank at $\lambda\in\mathbb{C}$ is called \emph{controllable at $\lambda$}. Similarly, a pair $(C,A)$ or a realization \eqref{ss0ab}-\eqref{ss0c} that is \emph{observable at $\lambda$} is one for which the pair $(A^\top,C^\top)$ is controllable at $\lambda$. A \emph{controllable}/\emph{observable} matrix pair or realization \eqref{ss0ab}-\eqref{ss0c} is one that is controllable/observable at any $\lambda\in\mathbb{C}$.	A realization whose order is the smallest out of all others for the same $\mathbf{G}(\lambda)$ is called \emph{minimal} (see Section 6.5.1 of \cite{Kai}). Any two minimal realizations $(A,B,C,D)$ and $(\widehat{A},\widehat{B},\widehat{C},\widehat{D})$ of $\mathbf{G}(\lambda)$ having order $\nu$ are related via a state equivalence transformation $x=T {\widehat x}$, with $T$ invertible in $\mathbb{R}^{\nu\times \nu}$, given by
\begin{equation} \label{similea}
	\begin{array}{llll}
		A = T\widehat A T^{-1}, &  B = T\widehat B, &
		C = \widehat C T^{-1},  &  D = \widehat{D}.
	\end{array}
\end{equation}

\begin{definition}
	Any two realizations which are related by a state equivalence transformation, as in \eqref{similea}, are called \emph{equivalent}.
\end{definition}

Let $\mathbb{S}$ denote the \emph{domain of stability}, which is either the open left-half plane, \emph{i.e.}, $\left\{\lambda\in\mathbb{C}\ :\ \text{Re}(\lambda)<0\right\}$ in the case of continuous-time systems or the open unit disk, \emph{i.e.}, $\left\{\lambda\in\mathbb{C}\ :\ |\lambda|<1\right\}$ for discrete-time ones. A \emph{stabilizable} realization  \eqref{ss0ab}-\eqref{ss0c} or pair $(A,B)$ is one that is controllable at any $\lambda\in\mathbb{C}\backslash\mathbb{S}$. By duality, a \emph{detectable} realization \eqref{ss0ab}-\eqref{ss0c} or pair $(C,A)$ is one that is observable at any $\lambda\in\mathbb{C}\backslash\mathbb{S}$.

\subsection{Rational Matrix Theory}

The matrix polynomial $A - \lambda E$ is called a \emph{pencil}. If it is square and $\det(A-\lambda E)\not\equiv 0$, we refer to it as \emph{regular} (see chapter 12 of \cite{gantmacher} for more details). Note also that a pencil is merely a polynomial TFM.
The finite poles and zeros of any TFM $\mathbf G(\lambda)$ are rigorously defined in terms of the \emph{Smith-McMillan Form} (see Section 6.5.2 of \cite{Kai}), whereas the infinite poles and zeros of $\mathbf{G}(\lambda)$ are those of $\mathbf{G}\left(\frac{1}{\lambda}\right)$ at $\lambda=0$. Any realization \eqref{ss0ab}-\eqref{ss0c} that is both controllable and observable is also minimal (see \cite{Wonham}) and the poles of its TFM coincide with the eigenvalues of the regular pencil $A-\lambda I_n$, known as the realization's \emph{pole pencil}.

The \emph{McMillan degree} of $\mathbf{G}(\lambda)$ is the sum of the degrees of its poles (see Section 6.5.3 of \cite{Kai}), both finite and infinite. A TFM's \emph{normal rank}, which is its rank for almost all points in $\mathbb{C}_\infty$, is also rigorously defined via the TFM's Smith-McMillan Form. If a TFM has no poles at $\{\infty\}$, it is called \emph{proper}. Otherwise, it is called \emph{improper}. A TFM $\mathbf{G}\in\mathbb{R}(\lambda)^{p\times m}$ for which $\lim_{\lambda\rightarrow\infty} \mathbf{G}(\lambda)=O$ is called \emph{strictly proper}. Finally, a TFM is deemed \emph{stable} if it has no poles located in $\mathbb{C}_\infty\backslash\mathbb{S}$.

Two TFMs $\mathbf{N}(\lambda)\in\mathbb{R}(\lambda)^{p\times m}$ and $\mathbf{M}(\lambda)\in\mathbb{R}(\lambda)^{p\times p}$, with $\mathbf{M}(\lambda)$ having full normal rank, that satisfy $\mathbf{G}(\lambda)=\mathbf{M}^{-1}(\lambda)\mathbf{N}(\lambda)$ are said to be a \emph{left factorization} of $\mathbf{G}(\lambda)$. When, additionally, $\begin{bmatrix}
	\mathbf{M}(\lambda) & \mathbf{N}(\lambda)
\end{bmatrix}$ has full normal rank (this is always ensured by $\mathbf{M}(\lambda)$ being invertible) along with no zeros in $\mathbb{C}_\infty$ (see also \cite{SIMAX}), we say that $\mathbf{M}(\lambda)$ and $\mathbf{N}(\lambda)$ constitute a \emph{fully left coprime factorization} (FLCF) of $\mathbf{G}(\lambda)$ (see \cite{V, SIMAX, Rose70} for equivalent characterizations). When $\begin{bmatrix}
	\mathbf{M}(\lambda) & \mathbf{N}(\lambda)
\end{bmatrix}$ has full row normal rank along with no poles or zeros located outside of $\mathbb{S}$, the pair is called an \emph{LCF over $\mathbb{S}$}.


\section{Representations for LTI Networks} \label{adoua}

This section contains a discussion based upon various representations for networks of LTI systems, with an emphasis on those tentatively prototyped in \cite{CDC2013} and now discussed in a broader context.

\subsection{State-Space Realizations}

We start with the given system $\mathbf G(\lambda)$, described by the state equations of order $n$ given explicitly by
\begin{subequations}
	\begin{align} 
		\sigma{\widetilde x}  &= \widetilde A\widetilde x  +\widetilde Bu  ,\   \widetilde{x}(t_o)=\widetilde{x}_o, & \label{ss1ab} \\
		y &= \widetilde C\widetilde x.  & \label{ss1c}
	\end{align}
\end{subequations}\newpage

The following hypothesis is in effect throughout the paper.

\begin{assumption} \label{Regularity} The matrix $\widetilde C$ from \eqref{ss1c} has $p<n$ and full row rank.
\end{assumption}

\begin{remark}
Note that Assumption~\ref{Regularity} is in no way restrictive from a practical standpoint, where oftentimes $n\gg p$. The lack of full row rank for the matrix $\widetilde{C}$ implies that some of the sensor measurements are redundant and can be amalgamated via linear combinations to produce a new, full row rank $C$-matrix, all while preserving the property of observability (see Section 3.4 of \cite{Wonham} for the underlying connection with reduced-order observers). Notice also that when $n=p$, the matrix $\widetilde{C}$ is invertible and the problem can be reduced to the trivial, full state measurement case, in which the main representations discussed in the sequel take on the classical form given in \eqref{ss1ab}.
\end{remark}

We now choose any matrix $\overline C$ such that $T:= \begin{bmatrix} \widetilde C^\top & \overline C^\top \end{bmatrix}^\top$ is nonsingular, noticing that such a $\overline C$ always exists due to $\widetilde C$ having full row rank, and we apply the state equivalence transformation given by $T$ on \eqref{ss1ab}-\eqref{ss1c}, in order to get
\begin{subequations}
	\begin{align} 
		\begin{bmatrix} \sigma{y}  \\ \sigma{q} \end{bmatrix}  &= \begin{bmatrix}  A_{11} & A_{12} \\ A_{21} & A_{22}  \end{bmatrix}  \begin{bmatrix} y   \\ q  \end{bmatrix} + \begin{bmatrix} B_1 \\ B_2 \end{bmatrix} u,\    \label{ss2a}
		\begin{bmatrix} y(t_o) \\ q (t_o) \end{bmatrix} = \begin{bmatrix} y_o \\ q_o \end{bmatrix},\ \begin{bmatrix} y \\ q  \end{bmatrix}  = T \widetilde x,   \\
		y  &= \begin{bmatrix} \hspace{2mm}I_p &\hspace{2.25mm} O \hspace{1.75mm}\end{bmatrix} \begin{bmatrix} y  \\ q \end{bmatrix}.   & \label{ss2b}
	\end{align}
\end{subequations}

Henceforth, we will assume that the following statement holds.

\begin{assumption} \label{Observability} The pair $(\widetilde C, \widetilde A)$ in \eqref{ss1ab}-\eqref{ss1c} is observable.
\end{assumption}

\begin{remark} \label{Leuenberger} Notice that, due to the particular structure of the output matrix from \eqref{ss2b}, namely that it is given by $\begin{bmatrix}
		I_p&O
	\end{bmatrix}$, Assumption~\ref{Observability} is equivalent to the pair $( A_{12}, A_{22})$ from \eqref{ss2a} being observable.
\end{remark}

Taking now the Laplace or $\mathcal{Z}$-transform of \eqref{ss2a}, depending on the time domain, and considering all initial conditions to be 0, we get that
\begin{equation} \label{Lapla}
	\small\begin{bmatrix} \lambda I_p -A_{11} & - A_{12} \\ - A_{21}&\lambda I_{n-p} - A_{22}  \end{bmatrix} \small\begin{bmatrix} \mathbf{Y}(\lambda)\\\mathbf{Q}(\lambda) \end{bmatrix} =  \small\begin{bmatrix} B_1 \\ B_2 \end{bmatrix} \mathbf{U}(\lambda).
\end{equation}
By left-multiplying \eqref{Lapla} with the following TFM
\begin{equation} \label{Omega_i}
	\mathbf \Omega (\lambda) := \small\begin{bmatrix} I_{p} &  A_{12} ( \lambda I_{n-p} - A_{22})^{-1} \\ 
		O          & I_{n-p} \\ \end{bmatrix},
\end{equation}
\noindent we obtain the identity
\begin{equation} 
	\footnotesize\begin{bmatrix} (\lambda I_p - A_{11}) -  A_{12} (\lambda I_{n-p} - A_{22})^{-1}A_{21}  & O \\ * & * \end{bmatrix}\begin{bmatrix} \mathbf{Y}(\lambda)\\\mathbf{Q}(\lambda) \end{bmatrix}
	= \begin{bmatrix} B_1 + A_{12} ( \lambda I_{n-p} - A_{22})^{-1} B_2 \\ * \end{bmatrix}\mathbf{U}(\lambda),\normalsize\label{WV1}
\end{equation}
\noindent where $*$ denotes entries which are not of interest. Taking the inverse Laplace or $\mathcal{Z}$-transform in the first block-row of \eqref{WV1}, we are able to express (recalling that we consider all initial conditions to be 0) the following input-output relationship
\begin{multline*} 
	\hspace{-2mm}\lambda \mathbf{Y}(\lambda) = \big( A_{11} +  A_{12} (\lambda I_{n-p} - A_{22})^{-1}A_{21} \big)  \mathbf{Y}(\lambda)+ \big(B_1 +  A_{12} (\lambda I_{n-p} - A_{22})^{-1}B_2 \big) \mathbf{U}(\lambda),
\end{multline*}
\noindent from which we introduce the notation 
\begin{equation*} 
	\begin{array}{ll}
		\mathbf W(\lambda) :=  A_{11} + A_{12} (\lambda I_{n-p} - A_{22})^{-1}A_{21}, &
		\mathbf V(\lambda) :=  B_1 +  A_{12} (\lambda I_{n-p} - A_{22})^{-1}B_2. 
	\end{array}
\end{equation*}

In doing so, we finally get the following equation, which describes the relationship between the network's manifest variables
\begin{equation} \label{WVin5} \lambda \mathbf{Y}(\lambda) =  \mathbf W(\lambda) \mathbf{Y}(\lambda) + \mathbf V(\lambda) \mathbf{U}(\lambda).
\end{equation}


\subsection{Network Realization Functions}

Promisingly, the identity from \eqref{WVin5} generates numerous avenues of investigation. We proceed to investigate one of them, by first pointing out that a routine computation produces the fact that $\displaystyle \mathbf G(\lambda) = \big(\lambda I_p -\mathbf W(\lambda)\big)^{-1}\mathbf V(\lambda)$. Notice that, since $\mathbf W(\lambda)$ is always proper, it follows that $\big(\lambda I_p -\mathbf W(\lambda)\big)$ is always invertible as a TFM. Let $\mathbf D(\lambda)$ denote the TFM obtained by taking the diagonal entries of $\mathbf W(\lambda)$, that is $\displaystyle \mathbf D(\lambda) := \mathrm{diag} \{ \mathbf W_{11}(\lambda), \mathbf W_{22}(\lambda), \dots, \mathbf W_{pp}(\lambda) \}$. Then, 
\begin{multline*} 
	\mathbf G(\lambda) = \Big[ \big( \lambda I_p -\mathbf D(\lambda) \big) - \big( \mathbf W(\lambda) -\mathbf D(\lambda) \big) \Big]^{-1}\mathbf V(\lambda)=\\
	= \Big[ I_p- \big( \lambda I_p -\mathbf D(\lambda) \big)^{-1} \big( \mathbf W(\lambda) -\mathbf D(\lambda) \big) \Big]^{-1} \big( \lambda I_p -\mathbf D(\lambda) \big)^{-1}  \mathbf V(\lambda),
\end{multline*}
\noindent where $(\mathbf{W}-\mathbf{D})$ has only zeros on its diagonal and, by denoting
\begin{subequations} 
	\begin{align}
		\mathbf \Phi(\lambda) & :=   \big( \lambda I_p -\mathbf D(\lambda) \big)^{-1} \big( \mathbf W(\lambda) -\mathbf D(\lambda) \big), & \label{Q} \\
		\mathbf \Gamma(\lambda) & :=  \big( \lambda I_p -\mathbf D(\lambda) \big)^{-1}  \mathbf V(\lambda), &  \label{P} 
	\end{align}
\end{subequations}

\noindent we get that $\displaystyle \mathbf G(\lambda) = \big( I_p -\mathbf \Phi(\lambda)\big)^{-1}\mathbf \Gamma(\lambda)$ or, equivalently, that
\begin{equation} \label{NRF}
	\mathbf{Y}(\lambda) = \mathbf \Phi(\lambda) \mathbf{Y}(\lambda) + \mathbf \Gamma(\lambda) \mathbf{U}(\lambda).
\end{equation}

\begin{remark} \label{AliObs}
	The splitting and the ``extraction'' of the diagonal in \eqref{Q} are intended to ensure that $\mathbf \Phi(\lambda)$ has the sparsity (and the meaning) of the {\em adjacency matrix} of the graph, \emph{i.e.}, the square matrix which indicates the adjacency between any two of the graph's vertices, thus describing the causal relationships between the system's various outputs. Consequently, $\mathbf \Phi(\lambda)$ will always have zero entries on its diagonal, which is a key factor in facilitating any implementations based upon \eqref{Q}-\eqref{P}.
\end{remark}

We now formally introduce the notion of NRF pair, discussed at length in \cite{NRF}. 

\begin{definition} \label{NRFdefinw} Given a $\mathbf G(\lambda)\in\mathbb{R}^{p \times m}(\lambda)$ and any two TFMs $\mathbf \Phi(\lambda)\in\mathbb{R}^{p \times p}(\lambda)$ and $\mathbf \Gamma(\lambda)\in\mathbb{R}^{p \times m}(\lambda)$, satisfying $\displaystyle \mathbf G(\lambda) = \big( I_p -\mathbf \Phi(\lambda) \big)^{-1}\mathbf \Gamma(\lambda)$ and with $\mathbf \Phi(\lambda)$ having zero entries on its diagonal, the pair $\small\big( \mathbf \Phi(\lambda), \mathbf \Gamma(\lambda) \big)$ \normalsize is called an \emph{NRF pair} of $\mathbf G(\lambda)$.
\end{definition}

\begin{remark}\label{rem:NRF_MIMO}
	Note that the outputs of the network's nodes need not be scalar and that constraining only the diagonal entries of $\mathbf \Phi(\lambda)$ represents the most generic case. Indeed, when implementing controllers via their NRF pairs, it may be convenient for certain nodes to compute more than one command signal, with the latter representing our network's outputs. Therefore, it is natural to impose identically zero elements on the block-diagonal of $\mathbf \Phi(\lambda)$, in order to avoid the redundant use of local information. This merely amounts to an additional sparsity constraint on the entries of $\mathbf \Phi(\lambda)$ and we point out that the results from \cite{NRF} are not dependent upon the structure of the aforementioned TFM, except for its identically zero diagonal. Moreover, as explained in Remark III.8 of \cite{NRF}, the block-row implementations which arise naturally from imposing identically zero blocks on the block-diagonal of $\mathbf \Phi(\lambda)$ retain all the desirable properties of the row implementations provided in the results of Section III from \cite{NRF}.
\end{remark}

\begin{remark}\label{Kron}
	Another benefit of the these NRF representations consists in the preservation of the network's adjacency graph, both with respect to the input-output relationships that characterize it, as modeled by $\mathbf{\Gamma}(\lambda)$, and also of the intrinsic interconnections of its manifest variables, as encoded through the structure of $\mathbf{\Phi}(\lambda)$. These graph-oriented representations have received significant attention in the context of Kron reduction for electrical networks (see \cite{Kron1,Kron3,Kron2} and the references therein).
\end{remark}


It is for the reasons discussed in this subsection that NRF pairs present themselves as tractable representations for the purpose of network modeling and control. Their versatility in all areas of control theory, such as system identification, controller synthesis and even model reduction, is one of the chief motivators for investigating their relationships with the other types of representations from this paper.



\subsection{An Example of NRF Modelling}\label{subsec:mdl}




\begin{figure}[t]
	\centering
	\begin{tikzpicture}[scale=0.3]
		\draw[xshift=0.1cm, >=stealth ] [->] (4,0) -- (6,0);
		\draw[xshift=0.1cm]    (3,0) node {$z_1$};
		\draw[ thick, xshift=0.1cm]  (6,-1) rectangle +(4,2);
		\draw [xshift=0.1cm](8,0)   node {{${\bf \Gamma_1(\lambda)}$}};
		\draw[xshift=0.1cm, >=stealth ] [->] (10,0) -- (12,0);
		\draw[ xshift=0.1cm ]  (12.5,0) circle(0.5);
		\draw[ xshift=0.1cm,  >=stealth] [->] (13,0) -- (19,0);
		\draw [xshift=0.1cm](16,0.75)   node {$u_1$} ;
		\draw[ thick, xshift=0.1cm]  (19,-1) rectangle +(4,2);
		\draw [xshift=0.1cm](21,0)   node {{${\bf \Phi_2(\lambda)}$}};
		\draw[ xshift=0.1cm ]  (25.5,0) circle(0.5);
		\draw[ thick, xshift=0.1cm]  (23.5,2.5) rectangle +(4,2);
		\draw [xshift=0.1cm](25.5,3.5)   node {{${\bf \Gamma_2(\lambda)}$}};
		\draw[xshift=0.1cm, >=stealth ] [->] (21.5,3.5) -- (23.5,3.5);
		\draw[xshift=0.1cm]    (20.5,3.5) node {$z_2$};
		\draw[xshift=0.1cm, >=stealth ] [->] (23,0) -- (25,0);
		\draw[xshift=0.1cm, >=stealth ] [->] (25.5,2.5) -- (25.5,0.5);
		\draw[xshift=0.1cm, >=stealth ] [-]  (26,0) -- (32,0);
		\draw[xshift=0.1cm]    (33.5,0) node {$u_2$};
		\draw[xshift=0.1cm, >=stealth ] [-]  (32,0) -- (32,-2.5);
		\draw[ thick, xshift=0.1cm]  (30,-4.5) rectangle +(4,2);
		\draw [xshift=0.1cm](32,-3.5)   node {{${\bf \Phi_3(\lambda)}$}};
		\draw[ xshift=0.1cm,  >=stealth] [->] (32,-4.5) -- (32,-6);
		\draw[ xshift=0.1cm ]  (32,-6.5) circle(0.5);
		\draw[ xshift=0.1cm,  >=stealth] [->] (12.5,-2.5) -- (12.5,-0.5);
		\draw[ thick, xshift=0.1cm]  (10.5,-4.5) rectangle +(4,2);
		\draw [xshift=0.1cm](12.5,-3.5)   node {{${\bf \Phi_1(\lambda)}$}};
		\draw[ xshift=0.1cm,  >=stealth] [->] (12.5,-6.5) -- (12.5,-4.5);
		\draw[ xshift=0.1cm,  >=stealth] [-]  (12.5,-6.5) -- (31.5,-6.5);
		\draw[ xshift=0.1cm ] [->] (34.5,-6.5) -- (32.5,-6.5);
		\draw[ thick, xshift=0.1cm]  (34.5,-7.5) rectangle +(4,2);
		\draw [xshift=0.1cm](36.5,-6.5)   node {{${\bf \Gamma_3(\lambda)}$}};
		\draw[ xshift=0.1cm ] [->] (40.5,-6.5) -- (38.5,-6.5);
		\draw[xshift=0.1cm]    (41.5,-6.5) node {$z_3$};
		\draw[xshift=0.1cm]    (25.5,-5.75) node {$u_3$};
	\end{tikzpicture}
	\caption{Three-hop network architecture}
	\label{fig:three_hop}
	\hrulefill
\end{figure}
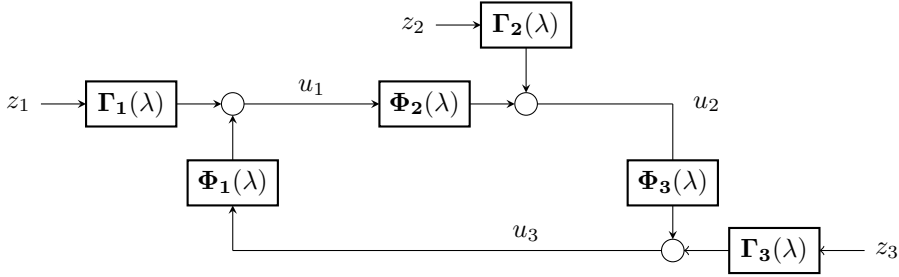

We now showcase a practical example, in order to illustrate the inherent potential of NRF representations and to provide motivation for the means (which will be discussed in the sequel) of obtaining them from the various coprime factorizations of the network's TFM. 

We will focus on the 3-hop ring network in Fig.~\ref{fig:three_hop}, where all the $\mathbf \Phi_i(\lambda)$  and $\mathbf \Gamma_i(\lambda)$ blocks represent scalar TFMs of LTI systems, with all $\mathbf \Gamma_i(\lambda)$ blocks being strictly proper. We will denote by $\mathbf G(\lambda)$ the TFM from all the input signals $u_i$ to all the outputs $y_i$. By directly inspecting the signal flow graph in Fig.~\ref{fig:three_hop}, we can write 
\begin{equation}
	\label{RingRing}
	\small\begin{bmatrix} \mathbf{U}_1(\lambda) \\ \mathbf{U}_2(\lambda) \\ \mathbf{U}_3(\lambda) \end{bmatrix}  \hspace{-1mm}=\hspace{-1mm} \small\begin{bmatrix} O & O & \mathbf \Phi_{1}(\lambda) \\ 
		\mathbf \Phi_{2}(\lambda) & O  & O\\ O &  \mathbf \Phi_{3}(\lambda) &O\end{bmatrix} \normalsize\small\begin{bmatrix} \mathbf{U}_1(\lambda) \\  \mathbf{U}_2(\lambda) \\ \mathbf{U}_3(\lambda) \end{bmatrix} + \small\begin{bmatrix} \mathbf \Gamma_{1}(\lambda) & O&O\\ 
		O & \mathbf \Gamma_{2}(\lambda) &O\\ O&O&\mathbf \Gamma_{3}(\lambda) \end{bmatrix}   \normalsize\small\begin{bmatrix}  \mathbf{Z}_1(\lambda) \\ \mathbf{Z}_2(\lambda) \\ \mathbf{Z}_3(\lambda) \end{bmatrix},\normalsize
\end{equation}
where $\mathbf{U}_i(\lambda)$ and $\mathbf{Y}_i(\lambda)$ are the Laplace or $\mathcal{Z}$-transforms of $u_i$ and, respectively, of $y_i$, while also introducing the notation
	\begin{equation}
	\label{lokala1}
	\mathbf \Phi(\lambda):=\small\begin{bmatrix} O & O & \mathbf \Phi_{1}(\lambda) \\ 
		\mathbf \Phi_{2}(\lambda) & O  & O\\ O &  \mathbf \Phi_{3}(\lambda) &O\end{bmatrix}, \quad  \mathbf \Gamma(\lambda):=  \small\begin{bmatrix} \mathbf \Gamma_{1}(\lambda) & O&O\\ 
		O & \mathbf \Gamma_{2}(\lambda) &O\\ O&O&\mathbf \Gamma_{3}(\lambda) \end{bmatrix},\normalsize
\end{equation}
in order to state that $\big( \mathbf \Phi(\lambda),\mathbf \Gamma(\lambda) \big)$ is the NRF pair associated with the TFM $\mathbf G(\lambda)$.

\begin{remark}
	A noteworthy observation consists in the fact that the structure of the subsystem interconnections in Fig.~\ref{fig:three_hop} is no longer recognizable from the input-output relation described by the TFM of the system, $\displaystyle  \mathbf G(\lambda)=\big( I - \mathbf \Phi(\lambda) \big)^{-1}\mathbf \Gamma(\lambda)$,  since $\mathbf G(\lambda)$ does not have any sparsity pattern, nor (in general) does it have any other particularities. Notice, however, that the aforementioned structure is successfully captured in the sparsity patterns of $\mathbf \Phi(\lambda)$ and $\mathbf \Gamma(\lambda)$. This remarkable property makes the NRF a well-suited theoretical instrument for modeling any LTI network.
\end{remark}

Additionally, we wish to further illustrate how the NRF determines, via \eqref{NRF}, the topology of the LTI network describing the given TFM $\mathbf G(\lambda)$.  By considering $\mathbf \Phi_{13}(\lambda)$ identically zero, which would imply the ``breaking'' of the ring network from Fig.~\ref{fig:three_hop}, the latter becomes a cascade connection and $\mathbf G(\lambda)$ can be seen as a ``line'' network. 
This is relevant in the motion control of vehicles moving in ``platoon'' formation \cite{plutonizare}.


\subsection{System Response-Type Realizations}


With the utility of NRF representation now established, we return to the pair of TFMs introduced just before \eqref{WVin5}, and which were employed in obtaining \eqref{Q}-\eqref{P}. Given the need to impose a certain sparsity structure on the $\mathbf{\Phi}(\lambda)$ term of an NRF pair, in order to facilitate distributed implementations (as discussed in Remark~\ref{rem:NRF_MIMO}), and the opaqueness of the NRF formalism with respect to state dynamics, it becomes desirable to investigate a new type of representation that alleviates these limitations. It is in this context that we introduce one of the central objects of this paper, via the following definition.


\begin{definition}\label{obiectu} Let the system described by the realization \eqref{ss2a}-\eqref{ss2b}, of order $n$, have a TFM be denoted $\mathbf G(\lambda)$. Then, a pair $\big( \mathbf W(\lambda),\mathbf V(\lambda) \big)$ of {\em proper} TFMs, where $\mathbf W(\lambda)\in\mathbb{R}^{p \times p}(\lambda)$ and $\mathbf V(\lambda)\in\mathbb{R}^{p \times m}(\lambda)$, with $\begin{bmatrix}\mathbf{W}(\lambda) &\mathbf{V}(\lambda)\end{bmatrix}$ having a McMillan degree of at most $n-p$ and which satisfy
	\begin{equation} \label{WV}
		\mathbf G(\lambda) = \big( \lambda I_p -\mathbf W(\lambda) \big)^{-1}\mathbf V(\lambda),
	\end{equation}
	is called a {\em System Response-Type Realization} of the system given in \eqref{ss2a}-\eqref{ss2b}.
\end{definition}

\begin{remark}\label{rem:implem}
	It is worthwhile to point out that the two TFMs which make up the SRTR pair denoted $\big( \mathbf W(\lambda),\mathbf V(\lambda) \big)$ fulfill the same role as the TFMs $\widetilde{\mathbf{R}}^+(z)$ and $\widetilde{\mathbf{N}}(z)$ from the state iteration equation of (18) from \cite{SLS}. Denoting by $\mathbf{B}(z)$ the $\mathcal{Z}$-transform of the state vector belonging to the controller, the aforementioned state iteration is expressed in the same manner as \eqref{WVin5} via the identity
	\begin{equation}\label{eq:SLS_implem}
		z\mathbf{B}(z) = \widetilde{\mathbf{R}}^+(z)\mathbf{B}(z)+\widetilde{\mathbf{N}}(z)\mathbf{Y}(z).
	\end{equation}
	It is this very fact which brought about the designation of \emph{System Response-Type Realization} and which can be exploited to obtain sparse and scalable representations for systems in the context of distributed control. Note that, in contrast to the implementations from \cite{SLS}, the one from \eqref{WVin5} is also available for continuous-time systems.
\end{remark}

\begin{remark}
	In the sequel, SRTR pairs of controllers will be employed to implement control laws in a distributed fashion. However, in contrast to the approach from \cite{SLS} and its implementations of type \eqref{eq:SLS_implem}, we point out that closed-loop maps are not intrinsically linked to the definition of an SRTR pair. Therefore, the newly introduced representation bears all the hallmarks of the so-called \emph{implementation matrices} from \cite{Separ}, in which a mutation of the classical SLS paradigm allows for the decoupling of controller synthesis from closed-loop design. Although the framework proposed in the sequel shares this benefit, we continue to refer to SRTR pairs as such due to \eqref{eq:SLS_implem} being the inspiration for the proposed representation and, more importantly, due to not wishing to overshadow their significant potential for system analysis, which would inevitably result from naming them ``implementation matrices''.
\end{remark}

\begin{remark} \label{sprs} Another feature of the SRTR representation is that any pair $\big(\hspace{-0.25mm} \mathbf W\hspace{-0.25mm}(\lambda),$ $\mathbf V(\lambda) \big)$ possesses the same sparsity pattern with its subsequent NRF pair $\big( \mathbf \Phi(\lambda), \mathbf \Gamma(\lambda) \big)$, obtained via \eqref{Q}-\eqref{P}. Therefore, $\mathbf W(\lambda)$ is lower triangular if and only if $\mathbf \Phi(\lambda)$ is lower triangular and, similarly, $\mathbf V(\lambda)$ is tridiagonal if and only if $\mathbf \Gamma(\lambda)$ is tridiagonal.
\end{remark}

In light of Remark~\ref{sprs} and the TFMs from \eqref{Q}-\eqref{P}, it becomes clear that, in order to obtain a rich parametrization of factored representations, for a given $\mathbf G(\lambda)$ and having a particular sparsity pattern, it suffices to study the set of all SRTR pairs $\big( \mathbf W(\lambda),\mathbf V(\lambda) \big)$ associated with the investigated network. This can be achieved via the following result, that will prove critical to many of the results presented in the sequel.

\begin{theorem} \label{Main} For the system described by the realization given in \eqref{ss2a}-\eqref{ss2b}, we have that:
	
	\begin{enumerate}
		\item[$\mathbf{a)}$] An SRTR pair $\big( \mathbf W(\lambda),\mathbf V(\lambda) \big)$ associated with the aforementioned system is expressed via
	\end{enumerate}
	\begin{equation} \label{VWdef}
		\small\begin{bmatrix}
			\mathbf W(\lambda) &\hspace{-2mm} \mathbf V(\lambda)
		\end{bmatrix}= \left[\begin{array}{c|cc}
			A_{22}+ K A_{12}  -\lambda I_{n-p}  & K A_{11} - KA_{12}K + A_{21} -A_{22}K   &  KB_1+B_2  \\ 
			\hline  A_{12}   & A_{11} - A_{12}K & B_1  \\ \end{array}\right],
	\end{equation}
	\begin{enumerate}
		\item[\phantom{$\mathbf{a)}$}] where $K$ is any matrix  in $\mathbb{R}^{(n-p) \times p}$;
		
		\item[$\mathbf{b)}$] All SRTR pairs associated with the system described by \eqref{ss2a}-\eqref{ss2b} are expressed via \eqref{VWdef} for some $K\in\mathbb{R}^{(n-p)\times p}$ if and only if the aforementioned $n$-th order realization is minimal.
	\end{enumerate}

\end{theorem}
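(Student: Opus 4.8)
The plan is to handle the two claims with a single state-equivalence transformation, and then to use minimality to show that this transformation captures every SRTR. For part a), I would introduce the invertible shear $T_K := \begin{bmatrix} I_p & O \\ K & I_{n-p} \end{bmatrix}$, whose inverse is obtained by negating $K$, and apply it as a state-equivalence transformation to \eqref{ss2a}-\eqref{ss2b}. The key point is that, since $T_K$ is block lower-triangular with identity $(1,1)$-block, the output matrix $\begin{bmatrix} I_p & O \end{bmatrix}$ is left invariant; hence the transformed realization is again of the form \eqref{ss2a}-\eqref{ss2b} and still realizes $\mathbf{G}(\lambda)$. A direct computation shows that its transformed blocks $(\bar A_{11}, \bar A_{12}, \bar A_{21}, \bar A_{22}, \bar B_1, \bar B_2)$ are exactly the entries appearing in \eqref{VWdef}. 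Feeding these into the purely algebraic construction that produced \eqref{WVin5} — which is valid for \emph{any} realization of the form \eqref{ss2a}-\eqref{ss2b} — yields precisely the pair \eqref{VWdef} together with the identity $\mathbf{G}(\lambda) = (\lambda I_p - \mathbf{W}(\lambda))^{-1}\mathbf{V}(\lambda)$. As the realization \eqref{VWdef} has order $n-p$ and a finite feedthrough, the pair is proper with McMillan degree at most $n-p$, so it is a bona fide SRTR; this settles part a) for every $K$.

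For the \emph{sufficiency} in part b) (minimality implies every SRTR has the form \eqref{VWdef}), I would start from an arbitrary SRTR $\big(\mathbf{W},\mathbf{V}\big)$ and a minimal realization $(F, \begin{bmatrix} G_1 & G_2 \end{bmatrix}, H, \begin{bmatrix} J_1 & J_2 \end{bmatrix})$ of the pair, of order $r \le n-p$. Reading \eqref{WVin5} as a feedback interconnection between the manifest state $\mathbf{Y}$ and the internal state of this realization produces a combined realization of $\mathbf{G}(\lambda)$ of order $p+r$ whose output matrix is $\begin{bmatrix} I_p & O \end{bmatrix}$; a rank check on the associated observability pencil shows it is automatically observable. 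When the original realization is minimal, $\mathbf{G}(\lambda)$ has McMillan degree $n$, which forces $p+r \ge n$ and hence $r = n-p$, so the combined realization is itself minimal. Two minimal realizations of $\mathbf{G}(\lambda)$ are related by a state equivalence $T$, and the shared output structure $\begin{bmatrix} I_p & O \end{bmatrix} = \begin{bmatrix} I_p & O \end{bmatrix}T^{-1}$ forces $T^{-1}$ to be block lower-triangular with identity $(1,1)$-block. Running the computation of part a) in reverse — during which the invertible $(n-p)\times(n-p)$ block of $T$ merely re-coordinates the internal state of the SRTR realization and thus leaves $\mathbf{W}$ and $\mathbf{V}$ unchanged — recovers \eqref{VWdef} for a suitable $K$.

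For the \emph{necessity} I would argue by contraposition: assuming the realization is not minimal, I would exhibit one SRTR that is not of the form \eqref{VWdef}. Since Assumption~\ref{Observability} (equivalently, Remark~\ref{Leuenberger}) guarantees observability throughout, non-minimality means loss of controllability, so $\mathbf{G}(\lambda)$ has McMillan degree $\nu < n$ and the original realization carries at least one uncontrollable (\emph{hidden}) mode. By part a), every pair of the form \eqref{VWdef} has a combined realization that is state-equivalent to the original and therefore shares its spectrum — in particular, its hidden modes sit at fixed locations. The witness is then produced by relocating a hidden mode: I would append to a minimal realization of $\mathbf{G}(\lambda)$ an observable, uncontrollable block whose eigenvalue is chosen off the original's uncontrollable spectrum, bring the result to the output-normalized form \eqref{ss2a}-\eqref{ss2b}, and take its canonical ($K=O$) SRTR. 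This pair realizes the same $\mathbf{G}(\lambda)$, yet its combined realization has a spectrum differing from the original's, so it cannot be state-equivalent to it and hence cannot equal \eqref{VWdef} for any $K$. I expect this construction to be the main obstacle: one must guarantee that the relocated realization stays fully observable, reproduces $\mathbf{G}(\lambda)$ exactly, and can be normalized to the output form $\begin{bmatrix} I_p & O \end{bmatrix}$ while keeping the McMillan degree of $\begin{bmatrix} \mathbf{W} & \mathbf{V} \end{bmatrix}$ within $n-p$ — and this is where the coprime-factorization and realization results collected in the appendices would be invoked.
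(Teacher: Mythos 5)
Parts $\mathbf{a)}$ and the sufficiency half of $\mathbf{b)}$ are correct and essentially reproduce the paper's own proof: your shear $T_K$ is the paper's state-equivalence transformation, your combined order-$(p+r)$ realization with output matrix $\begin{bmatrix}I_p&O\end{bmatrix}$ is what the paper constructs via Lemma~\ref{lem:SRTR_aux}, minimality follows in both write-ups from the order-equals-McMillan-degree count, and your final step (equivalences fixing $\begin{bmatrix}I_p&O\end{bmatrix}$ are block lower-triangular shears whose $(2,2)$-block merely re-coordinates the pair's realization) is precisely the paper's parametrization $\overline{T}=\begin{bmatrix}I_p&O\\LK&L\end{bmatrix}$ followed by the change of coordinates $T_L=L^{-1}$.

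The necessity half of $\mathbf{b)}$ has a genuine gap, located in the inference ``the witness's combined realization has a spectrum different from the original's, hence the witness pair cannot equal \eqref{VWdef} for any $K$.'' The spectrum of a combined realization is an invariant of the \emph{chosen} order-$(n-p)$ realization of the pair, not of the pair itself; in the necessity setting everything in sight is non-minimal ($\mathbf{G}$ has McMillan degree $\nu<n$), so two realizations of the same pair need not be equivalent and their combined realizations need not share a spectrum. Worse, your recipe as stated can output a ``witness'' that lies in the family. Take the paper's own example, $\mathbf{G}(\lambda)=\tfrac{1}{\lambda+1}$, $n=2$, $p=1$, with $A=\begin{bmatrix}-1&1\\0&-1\end{bmatrix}$, $B=\begin{bmatrix}1\\0\end{bmatrix}$, $C=\begin{bmatrix}1&0\end{bmatrix}$. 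Following your construction, append to the minimal realization $(-1,1,1)$ an observable, uncontrollable block with eigenvalue $a\neq-1$ (off the uncontrollable spectrum $\{-1\}$): $A'=\diag(-1,a)$, $B'=\begin{bmatrix}1\\0\end{bmatrix}$, $C'=\begin{bmatrix}1&1\end{bmatrix}$. Output-normalizing with $T=\begin{bmatrix}1&1\\0&1\end{bmatrix}$ gives $\bar A'=\begin{bmatrix}-1&1+a\\0&a\end{bmatrix}$, $\bar B'=\begin{bmatrix}1\\0\end{bmatrix}$, and the $K=O$ SRTR of this realization is $\begin{bmatrix}\mathbf{W}&\mathbf{V}\end{bmatrix}=\begin{bmatrix}-1&1\end{bmatrix}$: its combined realization indeed has spectrum $\{-1,a\}\neq\{-1,-1\}=\operatorname{spec}(A)$, and yet this pair \emph{is} the $K=0$ member of \eqref{VWdef}. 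The relocated mode silently cancelled out of the pair because the appended block decouples ($\bar A'_{21}=0$, $\bar B'_2=0$), and nothing in your construction prevents this.

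A correct repair must therefore do two things: (i) constrain the coupling of the appended block so that the relocated eigenvalue $\mu$ genuinely appears in the pair, and (ii) replace the spectral criterion with a pair-level invariant. For (ii), the Schur-complement identity $\det\big(\lambda I_p-\mathbf{W}_K(\lambda)\big)\det\big(\lambda I_{n-p}-A_{22}-KA_{12}\big)=\det(\lambda I_n-A)$ shows that every member of \eqref{VWdef} has all zeros of $\det\big(\lambda I_p-\mathbf{W}_K(\lambda)\big)$ inside $\operatorname{spec}(A)$; so a witness for which $\det\big(\lambda I_p-\mathbf{W}(\lambda)\big)$ has a zero at some $\mu\notin\operatorname{spec}(A)$ (off the \emph{whole} spectrum, not merely off the uncontrollable part as you propose) lies outside the family, provided the coupling makes $\mu$ survive --- exactly as in the paper's witness $\begin{bmatrix}-2&\tfrac{\lambda+2}{\lambda+1}\end{bmatrix}$, where the relocated mode $-2$ lands in $\mathbf{W}$ itself. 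Note, for comparison, that the paper does not attempt your general per-realization contrapositive: its necessity proof consists of this single hand-checked two-state counterexample, which suffices to show that minimality cannot be dropped from the statement.
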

\begin{proof} See Appendix~\ref{sec:ap_D}.
\end{proof}

	

\begin{remark} \label{stable}
	The poles of $\mathbf W(\lambda)$ and $\mathbf V(\lambda)$ can be placed at will in the complex plane, by a suitable choice of the matrix $K$ and the observability of the  pair $(A_{12}, A_{22})$ (recalling Assumption~\ref{Observability}). Thus, when implementing a controller with a realization of type \eqref{ss2a}-\eqref{ss2b} via its SRTR pair, it may be possible to obtain a distributed control law of type \eqref{WVin5} using TFMs whose entries are all stable. If this is the case, then this implementation can be employed even for a plant which is not strongly stabilizable, \emph{i.e.}, there exist no (centralized) controllers having stable TFMs which render all closed-loop maps stable. This possibility will be investigated in the sequel.
\end{remark}

\begin{figure*}
	
	\begin{center}
		\begin{tikzpicture}[every node/.style={transform shape},scale=.70]
			
			\node(n1)[minimum height=0.5cm, minimum width=0.5cm,draw]at(0,0) {$
				\begin{array}{ccl}
					\begin{bmatrix} \sigma{y}  \\ \sigma{q} \end{bmatrix}  &\hspace{-2mm}=\hspace{-2mm}& \begin{bmatrix}  A_{11} & A_{12} \\ A_{21} & A_{22}  \end{bmatrix}  \begin{bmatrix} y   \\ q  \end{bmatrix} + \begin{bmatrix} B_1 \\ B_2 \end{bmatrix} u\\
					y  &\hspace{-2mm}=\hspace{-2mm}& \begin{bmatrix} \hspace{2mm}I_p & \hspace{1.5mm}O\hspace{2mm} \end{bmatrix}\hspace{0.5mm} \begin{bmatrix} y  \\ q \end{bmatrix}
				\end{array}
				$
			};
			
			\node(n2)[minimum height=0.5cm, minimum width=0.5cm,draw]at(11,0) {$
				\begin{array}{l}
					\lambda \mathbf{Y}(\lambda) =  \mathbf{W}(\lambda) \mathbf{Y}(\lambda) +  \mathbf{V}(\lambda) \mathbf{U}(\lambda)\\\phantom{}\vspace{-2mm}\\
					\big(\lambda I_p-\mathbf{W}(\lambda),\mathbf{V}(\lambda)\big)\text{ FLCF}
				\end{array}
				$
			};
			
			\node(n3)[minimum height=0.5cm, minimum width=0.5cm,draw]at(11,-4) {$
				\begin{array}{lcl}
					\mathbf{Y}(\lambda) = \mathbf \Phi(\lambda) \mathbf{Y}(\lambda) + \mathbf \Gamma(\lambda) \mathbf{U}(\lambda)\\\phantom{}\vspace{-2mm}\\
					\mathbf{\Phi}_{ii}(\lambda)=0,\ \forall\, 1\leq i\leq p
				\end{array}
				$
			};
			
			\node(n4)[minimum height=0.5cm, minimum width=0.5cm,draw]at(0,-4) {$
				\begin{array}{l}
					\mathbf{Y}(\lambda)=\mathbf{M}^{-1}(\lambda)\mathbf{N}(\lambda)\mathbf{U}(\lambda)\\\phantom{}\vspace{-2mm}\\
					\big(\mathbf{M}(\lambda),\mathbf{N}(\lambda)\big)\text{ LCF over $\mathbb{S}$}
				\end{array}
				$
			};
			
			\draw[-latex](2.95,0.2)--(8.1,0.2) node[midway, above]{Theorem~\ref{Main}};
			\draw[-latex](8.1,-0.2)--(2.95,-0.2) node[midway, below]{\eqref{eq:SRTR_init}-\eqref{eq:idGr}};
			\draw[-latex](-0.2,-0.95)--(-0.2,-3.35) node[midway, left]{Theorem~\ref{SIMAX99}$\ $};
			\draw[-latex](0.2,-3.35)--(0.2,-0.95) node[midway, right]{$\begin{array}{c}
					\eqref{Kontroller}\\\text{via}\\\eqref{dc2}
				\end{array}$};
			\draw[-latex](8.55,-0.65)--(2.35,-3.6) node[midway, right]{$\quad\quad$Theorem~\ref{step2}};
			\draw[-latex](1.9,-3.35)--(8.1,-0.4) node[midway, left]{Theorem~\ref{inversa}$\quad\quad$};
			\draw[-latex](11,-0.65)--(11,-3.35) node[midway, right]{$\ $\eqref{Q}-\eqref{P}};
			\draw[-latex](2.3,-4)--(8.3,-4) node[midway, below]{(9) in \cite{NRF}};
			
		\end{tikzpicture}
	\end{center}
	\caption{Diagram showing the means of switching between various LTI network representations}\label{fig:thm_diagram}
	\hrulefill\vspace{-2mm}
\end{figure*}
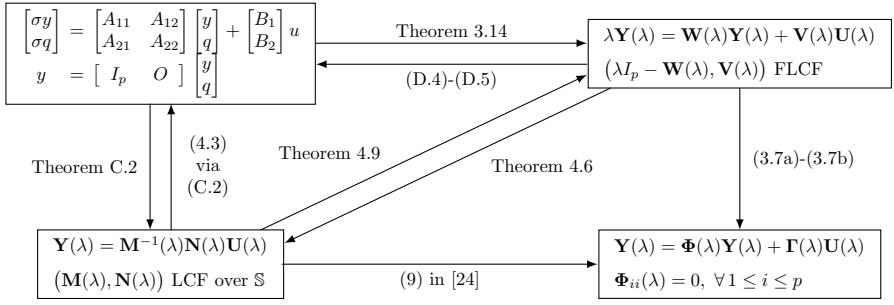

\section{Main Results}\label{sec:main}

The ultimate goal of this line of research is the computation of controllers whose factorized representation possesses a desired sparsity pattern or which inherits the structure of a closely related factorization, from which we may shift by employing numerically reliable procedures. This would allow us, for instance, to compute controllers that can be implemented as a ``ring'' network (recall Fig.~\ref{fig:three_hop}) or as a ``line'' network, which is important for motion control of vehicles moving in ``platoon'' formation (see \cite{plutonizare}). A key insight in this regard is the fact that the most prominent classical results in control theory, such as the Youla Parameterization, state the expression of a stabilizing controller via a coprime factorization over $\mathbb{S}$ of its TFM.  

As a further step towards employing Youla-like methods for the synthesis of controllers featuring either structured NRF or SRTR pairs, we need to understand the connections between the (stable) left coprime factorizations (of a given stabilizing controller) and its NRF- or SRTR-based factorizations. To help in keeping track of these various representations, the diagram displayed in Fig.~\ref{fig:thm_diagram} at the top of this page illustrates how the various results presented in this paper interact with one another.

\begin{remark}
	A poignant insight into the relationship between the SRTR and NRF representations is hinted at in Fig.~\ref{fig:thm_diagram}. As can be seen from the indicated diagram, an SRTR pair bears much stronger connections with a network's state dynamics than its NRF counterpart. Indeed, it is this very affinity which serves to motivate the study of SRTR representations, in contrast to the more input-output oriented perspective offered by an NRF pair. This also makes SRTR pairs a far more appealing tool for system analysis than their correspondents in the NRF formalism, as the latter was primarily developed with the aim of controller synthesis (see \cite{NRF} and \cite{aug_sparse}).  
\end{remark}

\subsection{Coprimeness of SRTR Pairs} In this subsection, we will prove that (by chance, rather than by design), for any SRTR  $\big(\mathbf W(\lambda),\mathbf V(\lambda) \big)$ associated with a system described by a minimal realization \eqref{ss2a}-\eqref{ss2b} and having $\mathbf G(\lambda)$ as its TFM, it follows that $\displaystyle  \big(\lambda I_p -\mathbf W(\lambda), \mathbf V(\lambda)\big)$  is an FLCF of $\mathbf G(\lambda)$.	   
Before doing so, we recall the pivotal role of controllability, as established by point $\mathbf{b)}$ in Theorem~\ref{Main} and, from now onward, we assume that the following statement holds.

\begin{assumption} \label{Controllability}
	  The realization \eqref{ss2a}-\eqref{ss2b} is controllable.
\end{assumption} 

\begin{remark}
	A direct consequence of point $\mathbf{b)}$ in Theorem~\ref{Main} is that the set of all SRTR pairs that can be obtained from a system described by a minimal realization \eqref{ss2a}-\eqref{ss2b}, and having a TFM denoted $\mathbf{G}(\lambda)$, coincides with the set of all SRTR pairs expressed via \eqref{VWdef} for any other system described by a minimal realization of the same $\mathbf{G}(\lambda)$. Given the fact that, in the sequel, we consider only realizations of type \eqref{ss2a}-\eqref{ss2b} which are minimal, the subsequently presented SRTR-based results are free to employ any minimal realization of type \eqref{ss2a}-\eqref{ss2b}, for a given TFM.
\end{remark}

As previously discussed, coprimeness is an important property for output feedback stabilization, since  classical results 
require stable coprime factors of the plant, in addition to the stable coprime factors of the controllers. Therefore, we now characterize a class of SRTR-based coprime factorizations via the following result. 

\begin{theorem} \label{CupruMin} Consider a system given by a minimal realization \eqref{ss2a}-\eqref{ss2b}, having $\mathbf G(\lambda)$ as its TFM. Then, for any SRTR pair $\big(\mathbf W(\lambda),\mathbf V(\lambda) \big)$ of the latter system expressed as in \eqref{VWdef}, we have that the pair $\displaystyle \big(\lambda I_p -\mathbf W(\lambda),\mathbf V(\lambda)\big)$ is an FLCF of $\mathbf G(\lambda)$.
\end{theorem}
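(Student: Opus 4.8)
The plan is to check the two defining attributes of an FLCF in turn. That $\big(\lambda I_p - \mathbf{W}, \mathbf{V}\big)$ is a left factorization of $\mathbf{G}$ is immediate: equation~\eqref{WV} gives $\mathbf{G} = (\lambda I_p - \mathbf{W})^{-1}\mathbf{V}$, and $\lambda I_p - \mathbf{W}$ has full normal rank because $\mathbf{W}$ is proper, so that $\lambda I_p - \mathbf{W}$ is invertible as a TFM (as noted following~\eqref{WVin5}). All the content therefore resides in showing that $\begin{bmatrix}\lambda I_p - \mathbf{W}(\lambda) & \mathbf{V}(\lambda)\end{bmatrix}$ retains full row rank $p$ at every point of $\mathbb{C}_\infty$, i.e. that it has no zeros there.

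Before doing so, I would remove the parameter $K$ from the analysis. Applying the state-similarity transformation $T' = \begin{bmatrix} I_p & O \\ K & I_{n-p}\end{bmatrix}$ to the realization~\eqref{ss2a}--\eqref{ss2b} leaves the output matrix $\begin{bmatrix}I_p & O\end{bmatrix}$ unchanged and preserves both minimality and the TFM $\mathbf{G}$; a short computation shows that \eqref{VWdef} for a given $K$ is exactly the $K=0$ instance of \eqref{VWdef} for the transformed realization. Hence it suffices to treat $K=0$, where $\begin{bmatrix}\mathbf{W} & \mathbf{V}\end{bmatrix}$ is realized by $\big(A_{22}, \begin{bmatrix}A_{21} & B_2\end{bmatrix}, A_{12}, \begin{bmatrix}A_{11} & B_1\end{bmatrix}\big)$, an observable realization by Assumption~\ref{Observability} and Remark~\ref{Leuenberger}.

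For the finite zeros, I would study the Rosenbrock system matrix attached to $\mathbf{T} := \begin{bmatrix}\lambda I_p - \mathbf{W} & -\mathbf{V}\end{bmatrix}$, namely
\[
 P(\lambda) = \begin{bmatrix} \lambda I_{n-p} - A_{22} & A_{21} & B_2 \\ A_{12} & \lambda I_p - A_{11} & -B_1 \end{bmatrix},
\]
whose Schur complement with respect to the $(1,1)$ block returns $\mathbf{T}$. A direct inspection of left null vectors shows that $P(\lambda_0)$ loses row rank only if $\begin{bmatrix}\lambda_0 I_n - A & -B\end{bmatrix}$ does; by controllability (Assumption~\ref{Controllability}, which holds since the realization is minimal) the latter never occurs, so $P(\lambda_0)$ has full row rank $n$ for every finite $\lambda_0$, and thus $P$ has no finite Smith zeros. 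Evaluating the Schur complement at any $\lambda_0 \notin \sigma(A_{22})$ then forces $\mathrm{rank}\,\mathbf{T}(\lambda_0) = p$, so $\mathbf{T}$ has normal rank $p$ and no zeros away from the poles. The genuinely delicate case is that of the poles $\sigma(A_{22})$, at which $\mathbf{T}$ is undefined and a zero could in principle hide behind a pole; this is the main obstacle. I would resolve it through the decomposition of the Smith zeros of $P$ into transmission and decoupling zeros: since the realization is observable there are no output-decoupling zeros, so the transmission zeros of $\mathbf{T}$ form a subset of the (empty) Smith zeros of $P$. Hence $\mathbf{T}$ has no finite zeros at all, the poles included.

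It remains to rule out zeros at infinity, which follows from properness alone. Substituting $\lambda = 1/\mu$ and scaling by $\mu$ yields $\mu\,\mathbf{T}(1/\mu) = \begin{bmatrix} I_p - \mu\mathbf{W}(1/\mu) & -\mu\mathbf{V}(1/\mu)\end{bmatrix} \to \begin{bmatrix} I_p & O\end{bmatrix}$ as $\mu \to 0$, because $\mathbf{W}$ and $\mathbf{V}$ are proper. This limit is analytic at $\mu = 0$ with full row rank $p$, so the factor carries only poles and no zeros at infinity. Combining the three steps, $\begin{bmatrix}\lambda I_p - \mathbf{W} & \mathbf{V}\end{bmatrix}$ has full normal rank and no zeros in $\mathbb{C}_\infty$, which is exactly the assertion that $\big(\lambda I_p - \mathbf{W}, \mathbf{V}\big)$ is an FLCF of $\mathbf{G}$.
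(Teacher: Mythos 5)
Your proof is correct, and while its skeleton (a Rosenbrock system matrix, rank arguments for the finite points, a separate analysis at infinity) parallels the paper's, it differs in three substantive ways that are worth recording. First, your reduction to $K=O$ via the similarity $T'=\begin{bmatrix} I_p & O\\ K & I_{n-p}\end{bmatrix}$ is sound --- it is exactly the device the paper itself uses in the proof of Theorem~\ref{Main} (cf.~\eqref{eq:all_Gs}) --- whereas the paper's proof of Theorem~\ref{CupruMin} carries a general $K$ through all computations, at the cost of the lengthy factorization identities in \eqref{coprime} and \eqref{eq:huge}. Second, your $P(\lambda)$ is, up to block row/column permutations and sign changes, precisely the controllability pencil $\begin{bmatrix}\lambda I_n-A & B\end{bmatrix}$ of the minimal realization, so full row rank at every finite $\lambda$ is immediate from Assumption~\ref{Controllability}; the paper proves the corresponding statement for its larger $(n+p)\times(n+p+m)$ system matrix \eqref{mare} via \eqref{eq:huge}. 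Third, for the ``zero hiding behind a pole'' step you invoke the decomposition of the Smith zeros of $P$ into transmission and decoupling zeros under observability alone; this classical fact is true, but you do not need it in that generality: since $P(\lambda)$ has full row rank for every finite $\lambda$, every subset of its rows is linearly independent, so in particular $\begin{bmatrix}\lambda I_{n-p}-A_{22} & A_{21} & B_2\end{bmatrix}$ has full row rank everywhere, meaning there are no input-decoupling zeros either; combined with observability (Assumption~\ref{Observability} and Remark~\ref{Leuenberger}), your realization is irreducible in the sense of Definition~\ref{def:ired}, and Theorem~\ref{thm:iso}$\,\mathbf{b)}$ then gives ``finite zeros of the TFM $=$ finite Smith zeros of $P=\emptyset$'' directly, poles included --- which is exactly the mechanism of part {\bf I)} of the paper's proof, so tightening this step costs nothing. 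Finally, your treatment of infinity is genuinely different from the paper's: you note that $\mu\,\mathbf{T}(1/\mu)$ is analytic with full row rank at $\mu=0$, so all its local Smith--McMillan indices there vanish, whence those of $\mathbf{T}(1/\mu)$ all equal $-1$ ($p$ simple poles, no zeros at infinity); the paper's part {\bf II)} instead builds a second irreducible Rosenbrock realization \eqref{eq:lcf_mu} of the flipped TFM and checks the rank of its system matrix at $\lambda=0$. Your valuation-shift argument is shorter and more elementary; the paper's version buys uniformity, in that both the finite and infinite analyses run through the same Rosenbrock machinery.
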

\begin{proof} See Appendix~\ref{sec:ap_D}.
\end{proof}

\begin{remark} \label{nevercop}
	While any SRTR pair $\big(\mathbf W(\lambda),\mathbf V(\lambda) \big)$ produced by point $\mathbf{b)}$ of Theorem~\ref{Main} and associated with the system having $\mathbf{G}(\lambda)$ as its TFM offers an FLCF $\mathbf G(\lambda)=\big( \lambda I_p- \mathbf W(\lambda)\big)^{-1}\mathbf V(\lambda)$, the NRF representation $\displaystyle \mathbf G(\lambda) = \big( I_p -\mathbf \Phi(\lambda)\big)^{-1}\mathbf \Gamma(\lambda)$ is \emph{not guaranteed} to be fully left coprime. This is owed to the fact that some of the finite poles belonging to $\big(\lambda I_p - \mathbf D(\lambda)\big)$ may not cancel out when multiplying its inverse with $\begin{bmatrix} \lambda I_p -\mathbf W(\lambda) &  \mathbf V(\lambda) \end{bmatrix}$ to obtain, via \eqref{Q}-\eqref{P}, the fact that
	\begin{equation*}
		\begin{bmatrix}
			I_p-\mathbf{\Phi}&\mathbf{\Gamma}
		\end{bmatrix}=\big(\lambda I_p - \mathbf D(\lambda)\big)^{-1}\begin{bmatrix} \lambda I_p -\mathbf W(\lambda) &  \mathbf V(\lambda) \end{bmatrix}.
	\end{equation*}
	We point out that, due to the diagonal structure of $\big(\lambda I_p - \mathbf D(\lambda)\big)^{-1}$, the statements from Remark~\ref{sprs} regarding the sparsity pattern of the resulting NRF are reinforced.
\end{remark}


The inherent coprimeness of SRTR pairs turns out to be an important property, allowing for the commutation between these pairs and the classical coprime factorizations of the network's model. Building upon this property, we proceed to investigate in the sequel reliable means of shifting between these two sets of coprime representations. 

\subsection{Getting from SRTR pairs to Left Coprime Factorizations}\label{subsec:s2c}
In this subsection, we show that any SRTR pair $\big( \mathbf W(\lambda), \mathbf V(\lambda) \big)$ with both $\mathbf W(\lambda)$ and $\mathbf V(\lambda)$ having no poles located outside of $\mathbb{S}$ can produce an entire class of factorizations which are left coprime over $\mathbb{S}$. 
Notice that, for any SRTR pair $\big(\mathbf W(\lambda), \mathbf V(\lambda) \big)$, the TFM $\begin{bmatrix} \lambda I_p -\mathbf W(\lambda) &\hspace{-2mm}  \mathbf V(\lambda) \end{bmatrix}$ is {\em improper} and has exactly $p$ poles at infinity of multiplicity one, and recall that the poles of $\mathbf W(\lambda)$ and $\mathbf V(\lambda)$ can be placed at will, as per Remark~\ref{stable}.


We aim to showcase the means of shifting from a stable SRTR pair $\big(\mathbf W(\lambda),\mathbf V(\lambda) \big)$ of the system having $\mathbf G(\lambda)$ as its TFM to an LCF over $\mathbb{S}$ of $\mathbf G(\lambda) = \mathbf M^{-1}(\lambda)\mathbf N(\lambda)$. Crucial to this endeavor, we provide in Appendix~\ref{sec:ap_C} the characterization for a class of LCFs over $\mathbb{S}$ for our network's TFM. It is this very result which we now employ, in order to obtain a procedure which extracts an LCF over $\mathbb{S}$ from a stable SRTR pair.


	

\begin{theorem}\label{step2} Consider a system given by a minimal realization \eqref{ss2a}-\eqref{ss2b}, having $\mathbf G(\lambda)$ as its TFM, and let $\big(\mathbf W(\lambda),\mathbf V(\lambda) \big)$ be a stable SRTR pair associated with this system. Define
	\begin{equation} \label{Theta}
		\mathbf \Theta(\lambda):= \left[\begin{array}{c|c} A_x  -\lambda I_p & B_x \\
			\hline C_x   & O \end{array}\right],
	\end{equation}
	\noindent for any $A_x\in\mathbb{R}^{p\times p}$ having all its eigenvalues located in $\mathbb{S}$, along with any $B_x\in\mathbb{R}^{p\times p}$ and any $C_x\in\mathbb{R}^{p\times p}$ which are both invertible. Then, we have that:
	\begin{enumerate}
		\item[$\mathbf{a)}$] The following TFM\end{enumerate}
		\begin{equation}\label{eq:multip}
			\begin{bmatrix} \mathbf M(\lambda) &  \mathbf N(\lambda) \end{bmatrix}{=}\  \mathbf \Theta(\lambda) \begin{bmatrix}  \lambda I_p -\mathbf W(\lambda)  &  \mathbf V(\lambda) \end{bmatrix}\normalsize
		\end{equation}\begin{enumerate}
		\item[\phantom{$\mathbf{a)}$}] designates an LCF over $\mathbb{S}$ of $\mathbf G(\lambda)=  \mathbf M^{-1}(\lambda) \mathbf N(\lambda)$;
		
		\item[$\mathbf{b)}$] The multiplication with $\mathbf{\Theta}(\lambda)$ from \eqref{eq:multip} preserves the poles of the SRTR pair in the $\mathbf M(\lambda)$ and $\mathbf N(\lambda)$ factors.
	\end{enumerate}

\end{theorem}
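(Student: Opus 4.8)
The plan is to lean on two structural facts already available. By Theorem~\ref{CupruMin}, the pair $\big(\lambda I_p-\mathbf{W}(\lambda),\mathbf{V}(\lambda)\big)$ is an FLCF of $\mathbf{G}(\lambda)$, so $\begin{bmatrix}\lambda I_p-\mathbf{W}(\lambda) & \mathbf{V}(\lambda)\end{bmatrix}$ has full row normal rank $p$ and no zeros anywhere in $\mathbb{C}_\infty$. The multiplier $\mathbf{\Theta}(\lambda)=C_x(\lambda I_p-A_x)^{-1}B_x$ is, since $B_x$ and $C_x$ are invertible, itself invertible as a TFM with $\mathbf{\Theta}^{-1}(\lambda)=B_x^{-1}(\lambda I_p-A_x)C_x^{-1}$. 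Because this inverse is a regular pencil, $\mathbf{\Theta}(\lambda)$ has its only poles at the eigenvalues of $A_x$ (all in $\mathbb{S}$) and its only zeros at infinity, where it has exactly $p$ of them, each of order one.

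First I would settle the factorization and rank claims in part $\mathbf{a)}$. Invertibility of $\mathbf{\Theta}(\lambda)$ yields $\mathbf{M}^{-1}\mathbf{N}=(\lambda I_p-\mathbf{W})^{-1}\mathbf{\Theta}^{-1}\mathbf{\Theta}\mathbf{V}=(\lambda I_p-\mathbf{W})^{-1}\mathbf{V}=\mathbf{G}$, and the product of an invertible TFM with a full-row-normal-rank TFM again has full row normal rank $p$. The real content is to check that $\begin{bmatrix}\mathbf{M} & \mathbf{N}\end{bmatrix}$ has no poles and no zeros in $\mathbb{C}_\infty\setminus\mathbb{S}$. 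For the poles, all finite poles of both factors lie in $\mathbb{S}$ (those of $\mathbf{\Theta}$ are the eigenvalues of $A_x$, those of the SRTR pair lie in $\mathbb{S}$ by the stability hypothesis) and a product manufactures no new finite poles; at infinity I would show the $p$ first-order infinite poles of $\begin{bmatrix}\lambda I_p-\mathbf{W} & \mathbf{V}\end{bmatrix}$ are exactly annihilated by the $p$ infinite zeros of $\mathbf{\Theta}$, via the leading-order expansions $\mathbf{M}(\lambda)=C_xB_x+O(\lambda^{-1})$ and $\mathbf{N}(\lambda)=O(\lambda^{-1})$, so that $\begin{bmatrix}\mathbf{M} & \mathbf{N}\end{bmatrix}$ is proper with value $\begin{bmatrix}C_xB_x & O\end{bmatrix}$ at infinity. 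For the zeros, at any finite $\lambda_0\notin\mathbb{S}$ the matrix $\mathbf{\Theta}(\lambda_0)$ is nonsingular (as $A_x$ has no eigenvalue there) while $\begin{bmatrix}\lambda_0 I_p-\mathbf{W}(\lambda_0) & \mathbf{V}(\lambda_0)\end{bmatrix}$ has full row rank $p$ (the FLCF has no zeros and $\lambda_0$ is not a pole), so the product retains full row rank $p$; at infinity the value $\begin{bmatrix}C_xB_x & O\end{bmatrix}$ has rank $p$ because $C_xB_x$ is invertible. Together these establish that $\big(\mathbf{M},\mathbf{N}\big)$ is an LCF over $\mathbb{S}$ of $\mathbf{G}$.

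For part $\mathbf{b)}$ I would argue that no finite pole of the SRTR pair is cancelled by the multiplication. Writing $\begin{bmatrix}\lambda I_p-\mathbf{W} & \mathbf{V}\end{bmatrix}=\mathbf{\Theta}^{-1}\begin{bmatrix}\mathbf{M} & \mathbf{N}\end{bmatrix}$ and recalling that $\mathbf{\Theta}^{-1}$ is a polynomial pencil with no finite poles, any finite $\lambda_1$ that is a pole of the left-hand side must already be a pole of $\begin{bmatrix}\mathbf{M} & \mathbf{N}\end{bmatrix}$; equivalently, $\mathbf{\Theta}$ has no finite transmission zeros, so left-multiplication by it may add poles but can never delete finite ones. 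Since the finite poles of $\begin{bmatrix}\lambda I_p-\mathbf{W} & \mathbf{V}\end{bmatrix}$ are precisely those of the SRTR pair, they are all retained in $\mathbf{M}$ and $\mathbf{N}$. If a more quantitative statement is wanted, I would read the multiplicities off the series realizations of $\mathbf{N}=\mathbf{\Theta}\mathbf{V}$ and $\mathbf{M}=\mathbf{\Theta}(\lambda I_p-\mathbf{W})$, whose state matrices are block triangular with the SRTR dynamics $A_{22}+KA_{12}$ in one diagonal block and $A_x$ in the other.

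I anticipate that the delicate point is the behaviour at infinity: one must confirm simultaneously that the $p$ first-order poles at infinity of $\begin{bmatrix}\lambda I_p-\mathbf{W} & \mathbf{V}\end{bmatrix}$ are cancelled (properness) and that this cancellation does not create a zero at infinity — which is exactly where the hypotheses that $\mathbf{\Theta}$ has feedthrough $O$ and that $B_x,C_x$ are invertible do their work, pinning the value at infinity to the nonsingular $\begin{bmatrix}C_xB_x & O\end{bmatrix}$. A more systematic alternative would be to match $\begin{bmatrix}\mathbf{M} & \mathbf{N}\end{bmatrix}$ directly against the explicit state-space class of LCFs over $\mathbb{S}$ characterized in Appendix~\ref{sec:ap_C}, which would subsume both parts in one stroke.
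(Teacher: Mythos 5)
Your proof is correct, but it takes a genuinely different route from the paper's. The paper works entirely in state space: it decomposes $\mathbf{\Theta}(\lambda)\begin{bmatrix}\lambda I_p-\mathbf{W}(\lambda) & \mathbf{V}(\lambda)\end{bmatrix}$ as a parallel connection of $\begin{bmatrix}\lambda\mathbf{\Theta}(\lambda) & O\end{bmatrix}$ and $\mathbf{\Theta}(\lambda)\begin{bmatrix}-\mathbf{W}(\lambda) & \mathbf{V}(\lambda)\end{bmatrix}$, manufactures realizations of the two summands sharing the same state and output matrices (padding the first with uncontrollable modes), adds the input and feedthrough matrices to obtain the explicit realization \eqref{MN}, and then invokes Lemma~\ref{lem:indeed_coprime} --- and through it Theorem~\ref{SIMAX99}, the state-space characterization of all LCFs over $\mathbb{S}$ --- to certify coprimeness; part $\mathbf{b)}$ is then settled by proving that \eqref{MN} is controllable and observable, so the eigenvalues of $A_{22}+KA_{12}$ necessarily survive as poles. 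You instead argue at the level of rational matrices: coprimeness of the SRTR pair from Theorem~\ref{CupruMin}, invertibility of $\mathbf{\Theta}(\lambda)$ with a \emph{polynomial} inverse, and pole/zero bookkeeping at finite points and at infinity. Your route is shorter, avoids all realization algebra, and makes transparent why the hypotheses on $\mathbf{\Theta}(\lambda)$ (eigenvalues of $A_x$ in $\mathbb{S}$, invertible $B_x$ and $C_x$, zero feedthrough) are exactly what is needed; the paper's route buys the explicit formula \eqref{MN}, which is not a mere by-product but an ingredient reused later --- Theorem~\ref{inversa} literally reverses that construction --- and it delivers pole preservation \emph{with multiplicities} for free from minimality. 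Two touch-ups for your version: (i) Theorem~\ref{CupruMin} is stated for SRTR pairs expressed as in \eqref{VWdef}, so you should first cite point $\mathbf{b)}$ of Theorem~\ref{Main}, which uses minimality of \eqref{ss2a}-\eqref{ss2b} to guarantee your given pair has that form; (ii) your part $\mathbf{b)}$ rigorously establishes preservation of pole \emph{locations}, which matches the claim, but the one-line sketch for multiplicities (reading them off block-triangular series realizations) is incomplete without a controllability/observability argument for those realizations --- which is precisely the content of the paper's proof of that point.
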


\begin{proof} See Appendix~\ref{sec:ap_D}.
\end{proof}

\begin{remark} \label{buna} Notice that, for any \emph{diagonal} $A_x$ having all its eigenvalues located in $\mathbb{S}$, the factor $\mathbf \Theta(\lambda)=(\lambda I_p - A_x)^{-1}$ can be used to produce
	an LCF over $\mathbb{S}$ of $\mathbf G(\lambda)$ with precisely the same sparsity structure as the SRTR pair $\big(\mathbf W(\lambda),\mathbf V(\lambda) \big)$.
\end{remark}

\subsection{Getting from Left Coprime Factorizations to SRTR pairs}\label{subsec:c2s}

In this subsection, we show that, given an LCF over $\mathbb{S}$ of $\mathbf G(\lambda) = \mathbf M^{-1}(\lambda)\mathbf N(\lambda)$, which is described by a realization of type \eqref{ss2a}-\eqref{ss2b} and for which $\begin{bmatrix} \mathbf M(\lambda) &  \mathbf N(\lambda) \end{bmatrix}$ has the same McMillan degree as $\mathbf{G}(\lambda)$, it is possible to use a certain type of realization for the aforementioned factorization, in order to retrieve a stable SRTR pair  $\big(\mathbf W(\lambda),\mathbf V(\lambda) \big)$.

To begin with, recall from the previous subsection that Appendix~\ref{sec:ap_C} includes a characterization for a particular class of LCFs over $\mathbb{S}$ for our network's TFM. Thus, we start with a given LCF over $\mathbb{S}$ of $\mathbf G(\lambda)$ from this class, where the latter is described by an $n$-th order minimal realization. In this case, let the given LCF over $\mathbb{S}$ have a state-space representation as in \eqref{dc2} from the Appendix~\ref{sec:ap_C},
to which we apply a state equivalence transformation described by $T\in \mathbb{R}^{n \times n}$, so that $\displaystyle CT^{-1} = \begin{bmatrix} I_p & O \end{bmatrix}$. Such a $T$ matrix always exists, due to Assumption~\ref{Regularity}, and it follows that \eqref{dc2} turns into
\begin{equation} \label{Kontroller}
	\begin{bmatrix}\mathbf M(\lambda) & \mathbf N(\lambda) \end{bmatrix} =  \footnotesize\left[\begin{array}{cc|cc} A_{11} +F_1  -\lambda I_p& A_{12}  & F_1& B_1 \\
		A_{21} +F_2 & A_{22} -\lambda I_{n-p} & F_2 & B_2 \\ 
		\hline U & O  & U & O  \end{array}\right].
	\normalsize
\end{equation}

\noindent
The key to retrieving stable SRTR pairs rests with a particular class of solutions to the continuous-time non-symmetric algebraic Riccati equation (CTNARE) given by
\begin{equation} \label{Riccati}
	K(A_{11}+F_1)    -KA_{12}K  +  (A_{21} + F_2) -A_{22}K  = O.
\end{equation}

As will be shown in the sequel, it is these solutions which underpin the correspondence between the set of systems from \eqref{Kontroller} and their related SRTR pairs. Thus, given their paramount importance, we now formally introduce these particular solutions.

\begin{definition}\label{def:stab}
	A matrix $K$ that is a solution of \eqref{Riccati} and which ensures that $(A_{11}+F_1-A_{12}K)$ has all its eigenvalues located in $\mathbb{S}$ is called a \emph{right stabilizing solution} of the CTNARE.
\end{definition} 

The existence of these solutions is shown in Appendix~\ref{sec:ap_B} to be a generic property (in light of Assumption~\ref{Observability}), with the aforementioned appendix offering means of reliably computing them. This being said, we proceed to employ these right stabilizing solutions, in order to state the central result of this subsection.

\begin{theorem} \label{inversa}
	Consider a system given by a minimal realization \eqref{ss2a}-\eqref{ss2b}, having $\mathbf G(\lambda)$ as its TFM, and such that $\mathbf G(\lambda)=\mathbf M^{-1}(\lambda)\mathbf N(\lambda)$ is an LCF over $\mathbb{S}$, having a realization of type \eqref{Kontroller}. Let also $K$ be a right stabilizing solution of the CTNARE from \eqref{Riccati} and denote $A_x:=A_{11} + F_1 - A_{12}K$. Then, we have that:
	\begin{enumerate}
		\item[$\mathbf{a)}$] The realization 
	\end{enumerate}
		\begin{equation} \label{MN1}
		\begin{bmatrix} \mathbf M(\lambda) &  \hspace{-2mm}\mathbf N(\lambda) \end{bmatrix} =  \left[\scriptsize\begin{array}{cc|cc} \hspace{-1mm}A_x  -\lambda I_{p}  &  A_{12}  & A_x - A_{11} +A_{12}K   & B_1 \\
			O & \hspace{-2mm}A_{22}+KA_{12} -\lambda I_{n-p} & KA_{12}K+A_{22}K-KA_{11}-A_{21}  & \hspace{-2mm}KB_1+B_2 \\ 
			\hline U & O & U  & O \\
		\end{array}\right] \normalsize
	\end{equation}\normalsize
	\begin{enumerate}
		\item[\phantom{$\mathbf{a)}$}] is equivalent to the one from \eqref{Kontroller};
	\end{enumerate}

		\begin{enumerate}
		\item[$\mathbf{b)}$] A stable SRTR pair $\big( \mathbf W(\lambda),\mathbf V(\lambda) \big)$ of the system can be obtained, from the available LCF over $\mathbb{S}$, via
	\end{enumerate}
		\begin{equation}\label{eq:lcf2srtr}
			\begin{bmatrix}
				\mathbf W(\lambda) &\mathbf V(\lambda)
			\end{bmatrix}=\begin{bmatrix}
				\lambda I_p & O
			\end{bmatrix}+(\lambda I_p-A_x)U^{-1}\begin{bmatrix}
				-\mathbf{M}(\lambda)&\mathbf{N}(\lambda)
			\end{bmatrix};
		\end{equation}
		\begin{enumerate}
		\item[$\mathbf{c)}$] The stable SRTR pair from \eqref{VWdef}, with $K$ taken as a right stabilizing solution of \eqref{Riccati}, is identical to the stable SRTR pair from \eqref{eq:lcf2srtr} for the same $K$.
	\end{enumerate}
\end{theorem}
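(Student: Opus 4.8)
The plan is to establish point $\mathbf{a)}$ by an explicit state equivalence, and then to obtain points $\mathbf{c)}$ and $\mathbf{b)}$ together, by inserting the realization \eqref{MN1} into \eqref{eq:lcf2srtr} and verifying that the outcome reproduces \eqref{VWdef} term by term. For point $\mathbf{a)}$, I would apply to \eqref{Kontroller} the state equivalence transformation $T=\left[\begin{smallmatrix} I_p & O \\ K & I_{n-p}\end{smallmatrix}\right]$, with $T^{-1}=\left[\begin{smallmatrix} I_p & O \\ -K & I_{n-p}\end{smallmatrix}\right]$. A direct computation following \eqref{similea} shows that the transformed state matrix is block upper-triangular, with diagonal blocks $A_x=A_{11}+F_1-A_{12}K$ and $A_{22}+KA_{12}$: its $(2,1)$ block equals the left-hand side of the CTNARE \eqref{Riccati} and therefore vanishes. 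The same equation, together with the identity $A_x-A_{11}+A_{12}K=F_1$, rewrites the transformed input matrix into the two columns displayed in \eqref{MN1}, while the output matrix $\begin{bmatrix} U & O\end{bmatrix}$ and the feedthrough stay invariant because the zero block of $\begin{bmatrix} U & O\end{bmatrix}$ absorbs the off-diagonal entry of $T^{-1}$. Hence \eqref{MN1} and \eqref{Kontroller} are equivalent.

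For point $\mathbf{c)}$, I would read off from \eqref{MN1} the block upper-triangular state matrix $A=\left[\begin{smallmatrix} A_x & A_{12}\\ O & A_{22}+KA_{12}\end{smallmatrix}\right]$, the output matrix $\begin{bmatrix} U & O\end{bmatrix}$, the feedthrough $\begin{bmatrix} U & O\end{bmatrix}$, and the two input columns of \eqref{MN1}. Because $A$ is block upper-triangular, its resolvent is explicit and the only quantity entering \eqref{eq:lcf2srtr} is the top block-row $\begin{bmatrix} I_p & O\end{bmatrix}(\lambda I_n-A)^{-1}$, equal to $\begin{bmatrix}(\lambda I_p-A_x)^{-1} & (\lambda I_p-A_x)^{-1}A_{12}(\lambda I_{n-p}-A_{22}-KA_{12})^{-1}\end{bmatrix}$. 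The identity $U^{-1}\begin{bmatrix} U & O\end{bmatrix}=\begin{bmatrix} I_p & O\end{bmatrix}$ converts the prefactor $(\lambda I_p-A_x)U^{-1}$ of \eqref{eq:lcf2srtr} into $(\lambda I_p-A_x)\begin{bmatrix} I_p & O\end{bmatrix}(\lambda I_n-A)^{-1}$, whereupon the leading $(\lambda I_p-A_x)$ cancels the top-left resolvent block $(\lambda I_p-A_x)^{-1}$.

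The decisive bookkeeping lies in the feedthrough terms. For $\mathbf W(\lambda)$, the improper contribution $(\lambda I_p-A_x)U^{-1}U=\lambda I_p-A_x$ arising from the feedthrough $U$ of $\mathbf M(\lambda)$ is cancelled by the explicit $\lambda I_p$ in \eqref{eq:lcf2srtr}, leaving the constant $A_x-(A_x-A_{11}+A_{12}K)=A_{11}-A_{12}K$ and a strictly proper part that collapses to $A_{12}(\lambda I_{n-p}-A_{22}-KA_{12})^{-1}\big(KA_{11}-KA_{12}K+A_{21}-A_{22}K\big)$, using $-(KA_{12}K+A_{22}K-KA_{11}-A_{21})=KA_{11}-KA_{12}K+A_{21}-A_{22}K$. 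The analogous computation for $\mathbf V(\lambda)$, with no feedthrough present, yields $B_1+A_{12}(\lambda I_{n-p}-A_{22}-KA_{12})^{-1}(KB_1+B_2)$. These are exactly the TFMs read off from the realization \eqref{VWdef} at the same $K$, so the two pairs coincide, which proves point $\mathbf{c)}$. Point $\mathbf{b)}$ then follows immediately: being of the form \eqref{VWdef}, the pair is an SRTR pair by Theorem~\ref{Main}, and it is stable because its poles lie among the eigenvalues of $A_{22}+KA_{12}$.

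The step I expect to be most delicate is not conceptual but the alignment of the improper and feedthrough terms: one must confirm that the leading polynomial factor $(\lambda I_p-A_x)$ leaves no residual improper part in $\mathbf W(\lambda)$, so that the pair is proper as Definition~\ref{obiectu} requires, and that $A_x$ along with the injection data $F_1,F_2$ drop out of \eqref{eq:lcf2srtr} entirely. This is exactly why point $\mathbf{a)}$ must be invoked first, to make the convenient realization \eqref{MN1} available. The stability asserted for both pairs is then justified by noting that the block-triangular state matrix of \eqref{MN1} is similar to the state matrix of \eqref{Kontroller}, which is stable since it realizes an LCF over $\mathbb{S}$; its spectrum, the union of the spectra of $A_x$ and of $A_{22}+KA_{12}$, therefore lies in $\mathbb{S}$, so in particular $A_{22}+KA_{12}$, the pole pencil of the common SRTR pair, is stable.
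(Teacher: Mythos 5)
Your proof is correct and follows essentially the same route as the paper's: point $\mathbf{a)}$ via the state equivalence $\left[\begin{smallmatrix} I_p & O \\ K & I_{n-p} \end{smallmatrix}\right]$ combined with the CTNARE \eqref{Riccati}, and points $\mathbf{b)}$--$\mathbf{c)}$ via the same block-triangular resolvent computation that collapses $(\lambda I_p-A_x)U^{-1}\begin{bmatrix}-\mathbf{M}(\lambda)&\mathbf{N}(\lambda)\end{bmatrix}$ into the realization \eqref{VWdef}. Your only departures are cosmetic — establishing $\mathbf{c)}$ first and deducing $\mathbf{b)}$ from it, and spelling out the stability argument (similarity of \eqref{MN1} to the stable state matrix of \eqref{Kontroller} forces the eigenvalues of $A_{22}+KA_{12}$ into $\mathbb{S}$) that the paper leaves implicit.
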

\begin{proof}  See Appendix~\ref{sec:ap_D}.
\end{proof}

\begin{remark}
	The realization of the stable SRTR pair, mentioned in point $\mathbf{c)}$ of Theorem~\ref{inversa}, can also be obtained from \eqref{MN1}. Notice that the state-space representation from \eqref{VWdef} can be formed by simply extracting the appropriate submatrices which make up the realization given in \eqref{MN1}.
	
\end{remark}

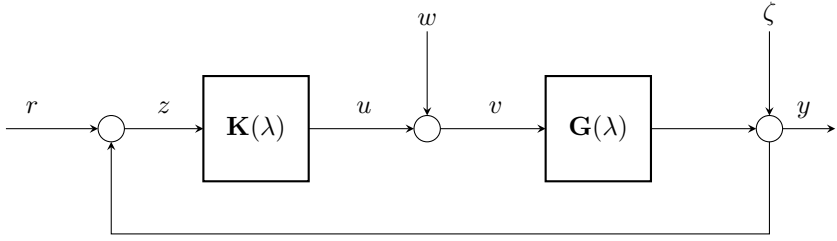
\begin{figure}[t]
	\centering
	\begin{tikzpicture}[scale=0.35]
		\draw[xshift=0.1cm, >=stealth ] [->] (0,0) -- (3.5,0);
		\draw[ xshift=0.1cm ]  (4,0) circle(0.5);
		\draw[xshift=0.1cm] (3,1)   node {\bf{ }} (1,0.8) node {$r$};
		\draw [xshift=0.1cm](6,0.8)   node {$z$} ;
		\draw[ xshift=0.1cm,  >=stealth] [->] (4.5,0) -- (7.5,0);
		\draw[ thick, xshift=0.1cm]  (7.5,-2) rectangle +(4,4);
		\draw [xshift=0.1cm](9.5,0)   node {{${\bf K(\lambda)}$}} ;
		\draw[ xshift=0.1cm,  >=stealth] [->] (11.5,0) -- (15.5,0);
		\draw[ xshift=0.1cm ]  (16,0) circle(0.5cm);
		\draw [xshift=0.1cm](13.6,0.8)   node {$u$} ;
		\draw [xshift=0.1cm](18.6,0.8)   node {$v$} ;
		\draw [xshift=0.1cm] (14.8,0.7)   node {\bf{ }};
		\draw[  xshift=0.1cm,  >=stealth] [->] (16,3.7) -- (16,0.5);
		\draw [xshift=0.1cm] (16,3.6)  node[anchor=south] {$w$}  (15.3,1.5)  node {\bf{ }};
		\draw[  xshift=0.1cm,  >=stealth] [->] (16.5,0) -- (20.5,0);
		\draw[ thick, xshift=0.1cm ]  (20.5,-2) rectangle +(4,4) ;
		\draw [xshift=0.1cm] (22.5,0)   node {{${\bf G(\lambda)}$}} ;
		\draw[ xshift=0.1cm,  >=stealth] [->] (24.5,0) -- (28.5,0);
		\draw[ xshift=0.1cm ] (29,0)  circle(0.5);
		\draw [xshift=0.1cm] (29,3.6)  node[anchor=south] {$\zeta$}  (28.5,1.5)  node {\bf{ }};
		\draw [xshift=0.1cm] (28,0.7)   node {\bf{ }};
		\draw [xshift=0.1cm] (30.3,0.7)   node {$y$};
		\draw[  xshift=0.1cm,  >=stealth] [->] (29.5,0) -- (31.5,0);
		\draw[  xshift=0.1cm,  >=stealth] [->] (29,3.7) -- (29,0.5);
		\draw[ xshift=0.1cm,  >=stealth] [->] (29,-0.5) -- (29,-4) -- (4,-4)-- (4, -0.5);
		\draw [xshift=0.1cm] (3.3,-1.3)   node {};
		\useasboundingbox (0,0.1);
	\end{tikzpicture}
	\caption{Feedback loop of the plant $\bf G$ with the controller $\bf K$}
	\label{fig:2Block}
	\hrulefill
\end{figure}

\subsection{Stable and Stabilizing SRTR Implementations}

In this subsection, we elaborate upon Remark~\ref{stable} and present the means of implementing distributed controllers via their SRTR representations. The results given here are, in many ways, the analogues of Theorems III.4 and III.6 from \cite{NRF}, thus highlighting the connections being showcased in Fig.~\ref{fig:thm_diagram}. Before doing so, we introduce the following definitions.

\begin{definition}\label{def:int_stab}
	Let $\mathbf{K}\in\mathbb{R}(\lambda)^{p\times m}$ and $\mathbf{G}\in\mathbb{R}(\lambda)^{m\times p}$. A closed-loop interconnection (such as the one in Fig.~\ref{fig:2Block}) between $\mathbf{K}(\lambda)$ and $\mathbf{G}(\lambda)$ for which the closed-loop maps from all exogenous signals ($r$, $w$ and $\zeta$) to all internal ones ($y$, $u$, $z$ and $v$) are stable TFMs is called \emph{internally stable}.
\end{definition}

\begin{definition}
	A system, described by its TFM $\mathbf{K}\in\mathbb{R}(\lambda)^{p\times m}$, which ensures that the closed-loop interconnection with $\mathbf{G}\in\mathbb{R}(\lambda)^{m\times p}$ from Fig.~\ref{fig:2Block} is internally stable, is termed a \emph{stabilizing controller} for $\mathbf{G}(\lambda)$.
\end{definition}

We start by proving that the specialized implementations of type \eqref{WVin5}, corresponding to the TFMs of stabilizing controllers, are themselves stabilizing.

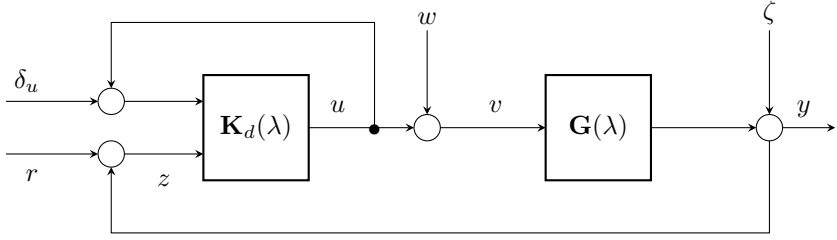
\begin{figure}[t]
	\centering
	\begin{tikzpicture}[scale=0.35]
		\draw[xshift=0.1cm, >=stealth ] [->] (0,-1) -- (3.5,-1);
		\draw[ xshift=0.1cm ]  (4,-1) circle(0.5);
		\draw[xshift=0.1cm] (3,1)   node {\bf{ }} (1,-1.8) node {$r$};
		\draw [xshift=0.1cm](6,-2)   node {$z$} ;
		\draw[ xshift=0.1cm,  >=stealth] [->] (4.5,-1) -- (7.5,-1);
		\draw[ thick, xshift=0.1cm]  (7.5,-2) rectangle +(4,4);
		\draw [xshift=0.1cm](9.5,0)   node {{$\mathbf{K}_{d}(\lambda)$}} ;
		\draw[ xshift=0.1cm,  >=stealth] [->] (11.5,0) -- (15.5,0);
		\draw[ xshift=0.1cm ]  (16,0) circle(0.5cm);
		\draw [xshift=0.1cm](12.6,0.8)   node {$u$} ;
		\draw[  xshift=0.1cm,  >=stealth] [->] (14,0) -- (14,4)--(4,4)--(4,1.5);
		\draw [xshift=0.1cm](18.6,0.8)   node {$v$} ;
		\draw [xshift=0.1cm](14,-0.1)   node {$\bullet$} ;
		\draw [xshift=0.1cm] (14.8,0.7)   node {\bf{ }};
		\draw[  xshift=0.1cm,  >=stealth] [->] (16,3.7) -- (16,0.5);
		\draw [xshift=0.1cm] (0.8,1)  node[anchor=south] {$\delta_u$};
		\draw[ xshift=0.1cm ]  (4,1) circle(0.5);
		\draw[  xshift=0.1cm,  >=stealth] [->] (4.5,1) -- (7.5,1);
		\draw[  xshift=0.1cm,  >=stealth] [->] (0,1) -- (3.5,1);
		\draw [xshift=0.1cm] (16,3.6)  node[anchor=south] {$w$}  (15.3,1.5)  node {\bf{ }};
		\draw[  xshift=0.1cm,  >=stealth] [->] (16.5,0) -- (20.5,0);
		\draw[ thick, xshift=0.1cm ]  (20.5,-2) rectangle +(4,4) ;
		\draw [xshift=0.1cm] (22.5,0)   node {{${\bf G(\lambda)}$}} ;
		\draw[ xshift=0.1cm,  >=stealth] [->] (24.5,0) -- (28.5,0);
		\draw[ xshift=0.1cm ] (29,0)  circle(0.5);
		\draw [xshift=0.1cm] (29,3.6)  node[anchor=south] {$\zeta$}  (28.5,1.5)  node {\bf{ }};
		\draw [xshift=0.1cm] (28,0.7)   node {\bf{ }};
		\draw [xshift=0.1cm] (30.3,0.7)   node {$y$};
		\draw[  xshift=0.1cm,  >=stealth] [->] (29.5,0) -- (31.5,0);
		\draw[  xshift=0.1cm,  >=stealth] [->] (29,3.7) -- (29,0.5);
		\draw[ xshift=0.1cm,  >=stealth] [->] (29,-0.5) -- (29,-4) -- (4,-4)-- (4, -1.5);
		\draw [xshift=0.1cm] (3.3,-1.3)   node {\bf{}};
		\useasboundingbox (0,0.1);
	\end{tikzpicture}
	\caption{Feedback loop of the plant ${\bf G}$ with the controller  ${\bf K}$ in an SRTR-based implementation $\lambda\mathbf{U}(\lambda)={\bf W}(\lambda) (\mathbf{U}(\lambda)+\mathbf{\Delta_U}(\lambda))+ {\bf V}(\lambda)\mathbf{Z}(\lambda)$}
	\label{fig:2BlockAgain}
	\hrulefill
\end{figure}

\begin{theorem}\label{thm:IO_stab}
		Let $\mathbf{K}\in\mathbb{R}(\lambda)^{p\times m}$ be a stabilizing controller for $\mathbf{G}\in\mathbb{R}(\lambda)^{m\times p}$, given by a minimal realization \eqref{ss2a}-\eqref{ss2b} and by a stable SRTR pair $\big(\mathbf W(\lambda),\mathbf V(\lambda) \big)$ associated with the latter via \eqref{VWdef}, as in Definition~\ref{obiectu}. Then, the implementation
	\begin{subequations}
		\begin{align}\label{eq:SRTR_ctl}
			\mathbf{K}_d(\lambda):=&\begin{bmatrix}
				\frac{1}{\lambda}\mathbf{W}(\lambda) & \frac{1}{\lambda}\mathbf{V}(\lambda)
			\end{bmatrix},\\
			\mathbf{U}(\lambda)=&\ \mathbf{K}_d(\lambda)\begin{bmatrix}
				\mathbf{U}(\lambda)^\top+\mathbf{\Delta}^\top_{\mathbf{U}}(\lambda)&\mathbf{Z}^\top(\lambda)
			\end{bmatrix}^\top,\label{eq:SRTR_conv}
		\end{align}
	\end{subequations}
	internally stabilizes the closed-loop interconnection in Fig.~\ref{fig:2BlockAgain}, where $\delta_u$ designates a bounded communication disturbance and $\mathbf{\Delta_U}(\lambda)$ denotes its Laplace or $\mathcal{Z}$-transform.
\end{theorem}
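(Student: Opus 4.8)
The plan is to recognize the SRTR-based implementation depicted in Fig.~\ref{fig:2BlockAgain} as nothing more than a particular state-space realization of the very same controller $\mathbf{K}(\lambda)$, augmented with $\delta_u$ as an exogenous input that leaves the controller's dynamics matrix untouched. First I would read off from \eqref{eq:SRTR_ctl}--\eqref{eq:SRTR_conv} the state iteration $\lambda\mathbf{U}(\lambda)=\mathbf{W}(\lambda)\big(\mathbf{U}(\lambda)+\mathbf{\Delta_U}(\lambda)\big)+\mathbf{V}(\lambda)\mathbf{Z}(\lambda)$, where $\mathbf{Z}(\lambda)$ is the signal fed into the controller, and realize $\begin{bmatrix}\mathbf{W}(\lambda)&\mathbf{V}(\lambda)\end{bmatrix}$ through the explicit formula \eqref{VWdef}. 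Introducing the $(n-p)$-dimensional state $\xi$ of that realization, and observing that the factor $\tfrac{1}{\lambda}$ in \eqref{eq:SRTR_ctl} promotes $u$ itself to an additional $p$-dimensional state, I obtain an order-$n$ realization of the implemented controller, from the inputs $(\mathbf{Z},\mathbf{\Delta_U})$ to the output $u$.

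The central computation is to identify the dynamics matrix of this realization. Collecting the two state equations yields, in the state ordering $(u,\xi)$, the matrix $\begin{bmatrix}A_{11}-A_{12}K & A_{12}\\ K(A_{11}-A_{12}K)+A_{21}-A_{22}K & A_{22}+KA_{12}\end{bmatrix}$. I would then verify that the state-equivalence transformation $T=\begin{bmatrix}I_p & O\\ -K & I_{n-p}\end{bmatrix}$ conjugates this matrix into precisely $\begin{bmatrix}A_{11}&A_{12}\\A_{21}&A_{22}\end{bmatrix}$, the dynamics matrix of the minimal realization \eqref{ss2a}--\eqref{ss2b} of $\mathbf{K}$. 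Since the implemented controller has order $n$ and, with $\delta_u=0$, realizes $\mathbf{U}=(\lambda I_p-\mathbf{W})^{-1}\mathbf{V}\mathbf{Z}=\mathbf{K}\mathbf{Z}$ by \eqref{WV} and Definition~\ref{obiectu}, this shows the SRTR implementation to be an equivalent minimal realization of $\mathbf{K}$ whose eigenvalues coincide with the poles of $\mathbf{K}$; crucially, $\delta_u$ enters only as an input and hence does not perturb the dynamics matrix.

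With this established, internal stability follows from the realization-independence of the closed-loop spectrum. Because the interconnection with $\mathbf{G}(\lambda)$ in Fig.~\ref{fig:2BlockAgain} uses the same plant and feeds back the same controller TFM $\mathbf{K}(\lambda)$ in the same manner as the loop of Fig.~\ref{fig:2Block}, the two closed-loop dynamics matrices are related by the block-diagonal equivalence $\diag(T,I)$, acting as the identity on the plant states, and therefore share their spectrum. Since $\mathbf{K}$ is a stabilizing controller for $\mathbf{G}$, that spectrum lies in $\mathbb{S}$; moreover $\mathbf{K}$ is strictly proper, as the realization \eqref{ss2a}--\eqref{ss2b} has no feedthrough term, so the loop is well posed. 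Consequently every closed-loop map from the exogenous signals $(r,w,\zeta,\delta_u)$ to the internal ones $(y,u,z,v)$ admits a realization on this stable dynamics matrix and is thus a stable TFM, which is exactly internal stability in the sense of Definition~\ref{def:int_stab}.

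I expect the main obstacle to be the bookkeeping that certifies the augmented channel $\delta_u$ introduces no new modes and that the closed-loop spectrum is genuinely preserved across the two realizations; in particular, one must rule out that the $\tfrac{1}{\lambda}$ integrator and the internal feedback of $u$ through $\mathbf{W}$ spawn hidden unstable dynamics, which is exactly what the similarity of the second paragraph settles. As an alternative that leans directly on Theorem~\ref{CupruMin}, I could instead solve the loop equations for the return difference $\mathbf{\Lambda}(\lambda):=(\lambda I_p-\mathbf{W})+\mathbf{V}\mathbf{G}$, observe that the $\mathbf{\Delta_U}$-maps reduce to $\mathbf{\Lambda}^{-1}\mathbf{W}$ and $\mathbf{G}\mathbf{\Lambda}^{-1}\mathbf{W}$, and then exploit that $\mathbf{\Lambda}^{-1}\begin{bmatrix}\lambda I_p-\mathbf{W}&\mathbf{V}\end{bmatrix}$ is stable, being assembled from the stable sensitivity operators of the stabilizing loop, together with the full left coprimeness of $\big(\lambda I_p-\mathbf{W},\mathbf{V}\big)$ to cancel any would-be unstable poles of $\mathbf{\Lambda}^{-1}$, recovering the same conclusion.
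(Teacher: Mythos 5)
Your main argument takes a genuinely different route from the paper's, and its core computations are correct. The paper never introduces a realization of $\mathbf{G}$ at all: it works purely at the TFM level, showing that $\mathbf{K}_d(\lambda)$ stabilizes the augmented plant $\begin{bmatrix}I_p&\mathbf{G}^\top(\lambda)\end{bmatrix}^\top$ by building an LCF over $\mathbb{S}$ of $\mathbf{K}$ from the stable SRTR pair via Theorem~\ref{step2}, taking an RCF over $\mathbb{S}$ of $\mathbf{G}$, invoking point 5 of Lemma 5.10 in \cite{zhou}, and factoring every block of the closed-loop map $\mathbf{H}(\lambda)$ from \eqref{eq:all_cl} into products of stable TFMs (the key identity being that $\lambda\mathbf{\Theta}(\lambda)=\mathbf{M}_{\mathbf{K}}(\lambda)+\mathbf{\Theta}(\lambda)\mathbf{W}(\lambda)$ is stable); internal stability of Fig.~\ref{fig:2BlockAgain} then follows from its equivalence with Fig.~\ref{fig:eq_con}, a formulation that is reused later in the proof of Theorem~\ref{thm:implem}. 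You instead identify the implementation as an order-$n$ state-space realization of $\mathbf{K}$ itself: your dynamics matrix and the similarity via $T=\begin{bmatrix}I_p&O\\-K&I_{n-p}\end{bmatrix}$ are exactly right (this is the same transformation used in the paper's proof of Theorem~\ref{Main}, cf.\ \eqref{eq:idGt}), $\delta_u$ indeed enters only through input matrices, and the two closed-loop dynamics matrices are related by $\diag(T,I)$ as you compute. This is more elementary and makes transparent why the $\tfrac{1}{\lambda}$ integrator and the feedback of $u$ through $\mathbf{W}$ create no hidden modes.

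One step, however, needs repair as written: ``Since $\mathbf{K}$ is a stabilizing controller for $\mathbf{G}$, that spectrum lies in $\mathbb{S}$.'' Internal stability in the sense of Definition~\ref{def:int_stab} is a property of closed-loop TFMs; it does \emph{not} constrain the spectrum of a closed-loop dynamics matrix built from an \emph{arbitrary} realization of $\mathbf{G}$ --- an uncontrollable or unobservable unstable mode of the plant realization survives as a closed-loop eigenvalue while all closed-loop maps remain stable. Since the theorem makes no assumption on any realization of $\mathbf{G}$, you must explicitly choose one that is stabilizable and detectable (a minimal one will do) and invoke the standard state-space/TFM equivalence (Lemmas 5.2--5.3 and Section 5.3 of \cite{zhou}, with well-posedness granted by strict properness of $\mathbf{K}$) to conclude that the Fig.~\ref{fig:2Block} closed-loop matrix has spectrum in $\mathbb{S}$; the converse implication you use afterwards (stable dynamics matrix $\Rightarrow$ stable closed-loop TFMs, including the $\delta_u$ channels) is unconditional. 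With that choice made explicit, your argument closes. A minor further note: in your alternative sketch, under the paper's loop convention --- visible in \eqref{eq:all_cl}, where the loop gain enters as $\big(I_p-\mathbf{K}_d(\lambda)\begin{bmatrix}I_p&\mathbf{G}^\top(\lambda)\end{bmatrix}^\top\big)^{-1}$ --- the return difference is $(\lambda I_p-\mathbf{W})-\mathbf{V}\mathbf{G}$, not $(\lambda I_p-\mathbf{W})+\mathbf{V}\mathbf{G}$.
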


\begin{proof}
	See Appendix~\ref{sec:ap_D}.
\end{proof}

\begin{remark}\label{rem:discr_stab}
	Notice that the stabilizing and distributed controller from \eqref{eq:SRTR_ctl} always has a stable TFM in the discrete-time case, as previously anticipated in Remark~\ref{stable}. Recalling the parallels discussed in Remark~\ref{rem:implem}, this conclusion comes naturally and serves to further emphasize the close connections between the SRTR representations and the celebrated SLS framework \cite{SLS}, for discrete-time systems.
\end{remark}

\begin{remark}
	In the continuous-time case, the sparsity structure of the distributed controller from \eqref{eq:SRTR_ctl} is much more versatile than the ones obtained in \cite{Explicit}. We do not restrict $\mathbf{K}_d(\lambda)$ to possessing only banded nonzero entries and the generic sparsity structure it inherits from $\mathbf{W}(\lambda)$ and $\mathbf{V}(\lambda)$ offers more flexibility with respect to the network's communication infrastructure. Note that, in contrast to the results from \cite{Explicit}, we make no assumptions on the spatial invariance of the investigated networks.
\end{remark}


Although the TFM from \eqref{eq:SRTR_ctl} is guaranteed to be stable in the discrete-time case, we now show that this feature is unavailable for continuous-time controllers, when considering implementations obtained via SRTR pairs produced by Theorem~\ref{Main}.

\begin{proposition}\label{prop:not_stab}
	In the continuous-time case, $\mathbf{K}_d(s)$ from \eqref{eq:SRTR_ctl} is guaranteed to be an unstable TFM when it is formed from any stable SRTR pair, obtained from a minimal realization \eqref{ss2a}-\eqref{ss2b} of $\mathbf{K}(s)$ via \eqref{VWdef}, as considered in Theorem~\ref{thm:IO_stab}.\vspace{-1mm}
\end{proposition}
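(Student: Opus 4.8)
The plan is to isolate the single point at which continuous- and discrete-time behavior diverge, namely the origin $\lambda=0$ introduced by the scalar factor $\tfrac{1}{\lambda}$ in \eqref{eq:SRTR_ctl}. First I would record that, since the SRTR pair $\big(\mathbf{W}(s),\mathbf{V}(s)\big)$ is stable, both factors are analytic at $s=0$: the origin lies on the boundary $\mathbb{C}\setminus\mathbb{S}$ of the continuous-time stability domain, so a stable TFM has no pole there. Consequently the poles of $\mathbf{K}_d(s)=\tfrac{1}{s}\,[\,\mathbf{W}(s)\ \ \mathbf{V}(s)\,]$ are contained in the poles of $\mathbf{W}(s)$ and $\mathbf{V}(s)$ — all of which lie in $\mathbb{S}$ — together with the single candidate pole at $s=0$ created by $\tfrac{1}{s}$. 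Thus the entire question reduces to whether this pole at the origin survives, i.e.\ whether $[\,\mathbf{W}(0)\ \ \mathbf{V}(0)\,]=O$.

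My strategy is to show that this cancellation can \emph{never} occur, and the decisive tool is the coprimeness established in Theorem~\ref{CupruMin}. By that result, $\big(\lambda I_p-\mathbf{W}(\lambda),\mathbf{V}(\lambda)\big)$ is an FLCF of $\mathbf{G}(\lambda)$, so the TFM $[\,\lambda I_p-\mathbf{W}(\lambda)\ \ \mathbf{V}(\lambda)\,]$ has full row normal rank $p$ and \emph{no zeros anywhere in} $\mathbb{C}_\infty$; in particular it retains rank $p$ at $\lambda=0$. Evaluating there yields $[-\mathbf{W}(0)\ \ \mathbf{V}(0)\,]$, which therefore cannot be the zero matrix, and hence $[\,\mathbf{W}(0)\ \ \mathbf{V}(0)\,]\neq O$.

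Combining the two observations, at least one entry of $[\,\mathbf{W}(s)\ \ \mathbf{V}(s)\,]$ is nonzero at $s=0$, so dividing by $s$ produces a genuine, uncancelled pole of $\mathbf{K}_d(s)$ at the origin. Since $0\notin\mathbb{S}$ in continuous time, $\mathbf{K}_d(s)$ possesses an unstable pole, which proves the claim. I would close the argument with the complementary remark that this same computation explains why the discrete-time verdict in Remark~\ref{rem:discr_stab} is the opposite: there $\lambda=z$ and $0$ lies \emph{inside} the open unit disk $\mathbb{S}$, so the identical pole at the origin is harmless and $\mathbf{K}_d(z)$ remains stable.

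The one step deserving genuine care — and the place where the proposition acquires its force as a \emph{guarantee} for \emph{any} stable SRTR pair — is ruling out the cancellation $[\,\mathbf{W}(0)\ \ \mathbf{V}(0)\,]=O$ uniformly over all admissible choices. This is exactly where the full left coprimeness of Theorem~\ref{CupruMin} is indispensable, since it forbids $[\,\lambda I_p-\mathbf{W}\ \ \mathbf{V}\,]$, and thus $[-\mathbf{W}(0)\ \ \mathbf{V}(0)\,]$, from losing rank at $\lambda=0$; the remainder is routine pole-location bookkeeping.
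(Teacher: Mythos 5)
Your proof is correct, but it takes a genuinely different route from the paper's. The paper works entirely in state space: it cascades the standard minimal realization of $\frac{1}{s}I_p$ with the realization \eqref{VWdef} to build an order-$n$ realization \eqref{VWalt} of $\mathbf{K}_d(s)$, then proves that realization minimal via two pencil-rank computations (observability from Assumption~\ref{Observability} and Remark~\ref{Leuenberger}, controllability from Assumption~\ref{Controllability}); minimality forces all $p$ eigenvalues of the state matrix at $s=0$ to be poles of the TFM. You instead argue at the transfer-function level: stability of the SRTR pair places every pole of $\mathbf{W}$ and $\mathbf{V}$ in $\mathbb{S}$ and makes $\begin{bmatrix} \lambda I_p-\mathbf{W}(\lambda) & \mathbf{V}(\lambda)\end{bmatrix}$ analytic at the origin, so the whole question is whether the $\frac{1}{s}$ pole cancels, and Theorem~\ref{CupruMin} (no zeros of the FLCF anywhere in $\mathbb{C}_\infty$, hence full row rank of $\begin{bmatrix}-\mathbf{W}(0) & \mathbf{V}(0)\end{bmatrix}$) rules that cancellation out. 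Your route is shorter, reuses an already-established coprimeness result instead of redoing pencil computations, and makes transparent why the discrete-time verdict flips (the origin lies inside the unit disk); moreover, your rank-$p$ observation at $\lambda=0$ actually recovers the paper's sharper count of $p$ poles at the origin, not merely one, if you care to push it. What the paper's heavier route buys is the explicit minimal realization \eqref{VWalt} itself, which is not a throwaway: it is reused verbatim in part $\mathbf{II)}$ of the proof of Theorem~\ref{thm:implem}, so the state-space detour does double duty there.
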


\begin{proof}
	See Appendix~\ref{sec:ap_D}.
\end{proof}

We now show that a particularly useful state-space implementation of \eqref{eq:SRTR_conv} can be used to drive the closed-loop system's state vector to the origin for any finite initial conditions. This critical result is formalized via the following theorem.

\begin{theorem}\label{thm:implem}
	Let $\mathbf{G}\in\mathbb{R}(\lambda)^{m\times p}$ be given by a stabilizable and detectable realization and let $\mathbf{K}\in\mathbb{R}(\lambda)^{p\times m}$ be a stabilizing controller of $\mathbf{G}(\lambda)$. Let $\mathbf{K}(\lambda)$ be given by a minimal realization \eqref{ss2a}-\eqref{ss2b} and by a stable SRTR pair $(\mathbf{W}(\lambda),\mathbf{V}(\lambda))$ associated with the latter via \eqref{VWdef}. Let $e_i$ be the $i^{\text{th}}$ vector in the canonical basis of $\mathbb{R}^{p}$. By implementing stabilizable and detectable realizations for $e_i^\top\mathbf{K}_d(\lambda)$, with $i\in1:p$, and by computing the command signals as in \eqref{eq:SRTR_conv}, the state dynamics of the closed-loop system from
	Fig.~\ref{fig:2BlockAgain} are made asymptotically stable (see Section 5.3 in \cite{zhou}).\vspace{-1mm}
\end{theorem}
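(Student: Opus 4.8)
The plan is to bridge the gap between Theorem~\ref{thm:IO_stab}, which certifies only transfer-function (input-output) internal stability, and the present, stronger assertion of asymptotic stability of the closed-loop \emph{state} vector. The governing principle is the classical fact (see Section 5.3 of \cite{zhou}) that a realization which is simultaneously stabilizable and detectable has a stable state matrix if and only if its TFM is stable: stabilizability confines the uncontrollable modes and detectability the unobservable ones to $\mathbb{S}$, while the remaining eigenvalues coincide with the poles of the TFM. It therefore suffices to (i) exhibit a state-space realization of the entire closed loop in Fig.~\ref{fig:2BlockAgain} whose TFM from the exogenous signals to the internal ones is stable, and (ii) prove that this realization is stabilizable and detectable. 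Part (i) is immediate from Theorem~\ref{thm:IO_stab}, which already guarantees that every map from the exogenous inputs ($r$, $w$, $\zeta$ and the disturbance $\delta_u$) to the internal signals is a stable TFM, so the work concentrates on (ii).

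For part (ii), I would assemble the closed-loop realization from its constituent blocks. The plant $\mathbf{G}$ is furnished with its given stabilizable and detectable realization. The controller realization of $\mathbf{K}_d(\lambda)$ is built by stacking the $p$ prescribed stabilizable and detectable realizations of the rows $e_i^\top\mathbf{K}_d(\lambda)$. Since each row reads out of its own local state, this aggregation produces a block-diagonal state matrix together with a block-diagonal output matrix; by the PBH characterizations, stabilizability of the stack reduces blockwise to the stabilizability of each row realization, and detectability reduces likewise to that of each row. Hence the aggregate realization of $\mathbf{K}_d(\lambda)$ is itself stabilizable and detectable.

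I would then recover the controller $\mathbf{K}(\lambda)=(\lambda I_p-\mathbf{W}(\lambda))^{-1}\mathbf{V}(\lambda)$ from $\mathbf{K}_d(\lambda)$ by closing the self-loop that feeds $u$ back through the disturbance node carrying $\delta_u$, as in \eqref{eq:SRTR_conv}. This loop is well posed because the factor $\tfrac{1}{\lambda}$ renders $\mathbf{K}_d(\lambda)$ strictly proper, so there is no algebraic loop on $u$; closing it amounts to static output feedback, under which stabilizability and detectability are invariant. Interconnecting the resulting controller realization with that of $\mathbf{G}$ according to Fig.~\ref{fig:2BlockAgain} is again a feedback interconnection of two stabilizable and detectable systems, which preserves both properties in the composite realization. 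Thus the full closed-loop realization is stabilizable and detectable; combined with part (i) and the cited result from \cite{zhou}, its state matrix has all eigenvalues in $\mathbb{S}$, and the closed-loop state dynamics are asymptotically stable.

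The main obstacle I anticipate lies in the bookkeeping of step (ii): verifying rigorously that stabilizability and detectability survive the block stacking of the row realizations, the well-posed self-loop that reconstitutes $\mathbf{K}$, and the final plant-controller feedback, all while correctly designating which signals serve as inputs and outputs of the composite realization so that the PBH tests apply cleanly. The delicate point that makes the whole argument legitimate is the strict properness of $\mathbf{K}_d(\lambda)$, inherited from the $\tfrac{1}{\lambda}$ factor, since it is precisely this feature that guarantees the self-loop on $u$ is algebraically well defined and that the output-feedback invariance of stabilizability and detectability may be invoked.
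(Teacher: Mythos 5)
Your overall architecture (internal stability from Theorem~\ref{thm:IO_stab} plus stabilizability/detectability of the implemented realizations, then the machinery of Section 5.3 of \cite{zhou}) matches the paper's, but there is a genuine gap at the step you yourself flag as ``bookkeeping'': the claim that stabilizability of the stacked realization of $\mathbf{K}_d(\lambda)$ ``reduces blockwise to the stabilizability of each row realization'' by PBH is false. When you stack the $p$ row realizations, the state and output matrices are indeed block-diagonal, so \emph{detectability} does decouple; but every block is driven by the \emph{same} input vector, so the composite input matrix is a stacked column $\begin{bmatrix}\widetilde B_{d1}^\top&\cdots&\widetilde B_{dp}^\top\end{bmatrix}^\top$, not block-diagonal. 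A left null vector of the composite PBH matrix at an unstable $\lambda_0$ only needs $x_i^\top(\lambda_0 I-\widetilde A_{di})=0$ for each $i$ and $\sum_i x_i^\top \widetilde B_{di}=0$; individual stabilizability forbids each term from vanishing, not the sum. Cancellation across blocks sharing the same unstable eigenvalue is exactly the failure mode, and it is not hypothetical here: in continuous time, by Proposition~\ref{prop:not_stab}, \emph{every} row $e_i^\top\mathbf{K}_d(s)$ has a pole at $s=0$, so all $p$ blocks carry the same unstable frequency and the blockwise argument collapses precisely where it is needed.

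The paper closes this hole with two time-domain-specific arguments that your proposal has no substitute for. In discrete time (Remark~\ref{rem:discr_stab}), $\mathbf{K}_d(z)$ is stable, so each stabilizable and detectable row realization has a stable state matrix; the block-diagonal stacked state matrix in \eqref{eq:z_col_concat} is then stable outright, and stabilizability/detectability of the stack is trivial. In continuous time, the paper proves via the explicit realizations \eqref{VWalt} and \eqref{VWalti} that $\mathbf{K}_d(s)$ has \emph{exactly} $p$ unstable poles, all at $s=0$, and that each row has exactly one; hence the stacked pole pencil in \eqref{eq:s_col_concat} has exactly $p$ unstable eigenvalues, matching the unstable McMillan structure of the TFM, so none of them can be uncontrollable or unobservable (the counting argument imported from part (II) of the proof of Theorem III.6 in \cite{NRF}). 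Without this pole-counting step, the continuous-time half of the theorem---which is the case the paper emphasizes as its advance over \cite{SLS}---remains unproven in your proposal. The remaining ingredients you use (strict properness of $\mathbf{K}_d$ for well-posedness of the self-loop, invariance of stabilizability/detectability under that output feedback, and the final plant--controller interconnection handled by Lemmas 5.2 and 5.3 of \cite{zhou}) are sound and agree with the paper.
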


\begin{proof}
	See Appendix~\ref{sec:ap_D}.
\end{proof}

\begin{remark}\label{rem:NRFlike}
	For the means to obtain realizations of the type described in Theorem~\ref{thm:implem}, consult Remark III.7 in \cite{NRF}. Fortuitously, Remark III.8 in \cite{NRF} also holds for block-row implementations (recall Remark~\ref{rem:NRF_MIMO}) of the control laws from \eqref{eq:SRTR_ctl}-\eqref{eq:SRTR_conv}. Additionally, the synthesis techniques described in \cite{NRF} and \cite{aug_sparse} can be employed in order to obtain the stabilizing controllers that are required by the results given in this subsection. Such a design scenario, in which the techniques from \cite{NRF} and \cite{aug_sparse} are used to obtain a stabilizing controller which generates sparse SRTR-based control laws, will be showcased in Section~\ref{sec:num_ex} through a comprehensive numerical example.
\end{remark}

\begin{remark}
	Among the chief benefits of the distributed control schemes proposed in Theorems~\ref{thm:IO_stab} and \ref{thm:implem} is that they provide stability guarantees in both the continuous- and discrete-time contexts. In contrast to this statement, we now refer to the proof of sufficiency for Theorem 2 from \cite{SLS} and to the proof of Lemma 4 from \cite{SLS}. The fact that $\frac{1}{z}I_p$ is a discrete-time stable TFM has been employed in these proofs so as to provide guarantees for both internal stability (recall Definition~\ref{def:int_stab}) and for the fact that the free evolution of the plant's and the controller's states tends to zero from any finite initial conditions (see the proof of Theorem~\ref{thm:implem} in Appendix~\ref{sec:ap_D}), in closed-loop configuration. Thus, the arguments made in \cite{SLS} cannot be employed for the continuous-time setting, since $\frac{1}{s}I_p$ is not a stable TFM in the latter time domain.
\end{remark}

Given that we now posses theoretical guarantees for closed-loop distributed stabilization via SRTR-based control laws, it becomes natural to investigate the means by which these representations can be constrained to possess various sparsity structures, meant to promote decentralized implementation. In the following subsection, we turn our attention to this problem, with the aim of finding reliable numerical procedures that enforce a desired sparsity pattern upon an SRTR pair.\vspace{-1mm}

\subsection{Forming structured and stabilizing SRTR-based implementations}\label{subsec:struc}

In light of Theorem~\ref{inversa} and of the results presented in Section III of \cite{aug_sparse}, it is tempting to envision a procedure that produces stable SRTR pairs by way of structured LCFs over $\mathbb{S}$. However, a closer inspection of Theorem~\ref{thm:implem} suggests a different approach. 

Recall that, as per Remark~\ref{rem:NRFlike}, we aim to implement state-space realizations of (block-)rows belonging to the TFM from \eqref{eq:SRTR_ctl}. Moreover, notice that the latter has the same sparsity structure as the TFM given in \eqref{VWdef} and that, for the purpose of computational efficiency when calculating the command signals, we wish to obtain realizations for these (block-)rows which have orders that are as small as possible.

It turns out that a convenient solution to these objectives can be obtained by employing the class of realizations parametrized by the constant matrices $K\in\mathbb{R}^{(n-p) \times p}$ and given explicitly in \eqref{VWdef}. Given two binary matrices, which encode the desired sparsity structure of the stable SRTR pair, and a set of integers, which impose the orders of the sought-after realizations, the following result offers tractable solutions.

\begin{theorem}\label{thm:MM}
	Consider a system given by a minimal realization \eqref{ss2a}-\eqref{ss2b} and let $e_{q,i}$ denote the $i^{th}$ vector from the canonical basis of $\mathbb{R}^q$, for any nonzero $q\in\mathbb{N}$. For each $i\in1:p$, denote by $Q_i\in\mathbb{R}^{(n-p)\times(n-p)}$ any orthogonal matrix which satisfies
	\begin{equation}\label{eq:compress}
		e_{p,i}^\top A_{12}^{}Q_i^\top=\left\|e_{p,i}^\top A_{12}^{}\right\|e_{(n-p),(n-p)}^\top\ ,
	\end{equation}
	along with an integer $n_i\in1:(n-p)$. Consider also $\mathcal{B}_{\mathbf{W}}^{}\in\mathbb{R}^{p\times p}$ and $\mathcal{B}_{\mathbf{V}}^{}\in\mathbb{R}^{p\times m}$ with $e_{p,i}^\top\mathcal{B}_{\mathbf{W}}^{}e_{p,j}^{}\in\{0,1\}$ and $e_{p,i}^\top\mathcal{B}_{\mathbf{V}}^{}e_{m,k}^{}\in\{0,1\}$, for any $i,j\in1:p$ and any $k\in1:m$. Assume that, for some $K\in\mathbb{R}^{(n-p) \times p}$ and for each $i\in1:p$, we have:\smallskip
	
	\begin{enumerate}
		\item[$i)$] $e_{p,i}^\top(A_{11}-A_{12}K)e_{p,j}^{}\left(1-e_{p,i}^\top\mathcal{B}_{\mathbf{W}}^{}e_{p,j}^{}\right)=0,\ \forall j\in1:p$;\smallskip
		
		\item[$ii)$] $e_{p,i}^\top B_1e_{m,k}^{}\left(1-e_{p,i}^\top\mathcal{B}_{\mathbf{V}}^{}e_{m,k}^{}\right)=0,\ \forall k\in1:m$;\smallskip			
	\end{enumerate}

	\noindent
	and, whenever $\left\|e_{p,i}^\top A_{12}^{}\right\|\neq 0$, we also have:\smallskip
	
	\begin{enumerate}
		\item[$iii)$] $\begin{bmatrix}O &\hspace{-2mm} I_{n_i}\end{bmatrix}\hspace{-1mm} Q_i( K A_{11} - KA_{12}K + A_{21} -A_{22}K)e_{p,j}^{}\left(1-e_{p,i}^\top\mathcal{B}_{\mathbf{W}}^{}e_{p,j}^{}\right)\hspace{-1mm}=O,\forall j\in1:p$;\smallskip	
		
		\item[$iv)$] $\begin{bmatrix}O &\hspace{-2mm} I_{n_i}\end{bmatrix}\hspace{-1mm} Q_i(KB_1+B_2)e_{m,k}^{}\left(1-e_{p,i}^\top\mathcal{B}_{\mathbf{V}}^{}e_{m,k}^{}\right)=O,\forall k\in1:m$;\smallskip
		
		\item[$v)$] $\begin{bmatrix}O &\hspace{-2mm} I_{n_i}\end{bmatrix}\hspace{-1mm} Q_i(A_{22}+KA_{12})Q_i^\top\begin{bmatrix}I_{n-p-n_i} & O\end{bmatrix}^\top=O$;\smallskip
		
		\item[$vi)$] $\left\{\lambda\in\mathbb{C}\ :\ \det\left(\lambda I_{n_i}-\begin{bmatrix}O &\hspace{-2mm} I_{n_i}\end{bmatrix} Q_i(A_{22}+KA_{12})Q_i^\top\begin{bmatrix}O &\hspace{-2mm} I_{n_i}\end{bmatrix}^\top\right)=0\right\}\subseteq\mathbb{S}$.\smallskip			
	\end{enumerate}

	\noindent
	Then, it follows that:\smallskip
	
	\begin{enumerate}
		\item[$\mathbf{a)}$] The TFM $e_{p,i}^\top\begin{bmatrix}
			\mathbf{W}(\lambda) & \mathbf{V}(\lambda)
		\end{bmatrix}$ can be expressed, when $\left\|e_{p,i}^\top A_{12}^{}\right\|= 0$, via the constant matrix
	\end{enumerate}
	\begin{equation}\label{eq:linec}\tag{4.9a}
	\hspace{-1mm}e_{p,i}^\top\begin{bmatrix}
	\mathbf{W}(\lambda) & \mathbf{V}(\lambda)
	\end{bmatrix}=e_{p,i}^\top\begin{bmatrix}
	A_{11}-A_{12}K& B_1
	\end{bmatrix};
	\end{equation}
	\begin{enumerate}
		\item[\phantom{$\mathbf{a)}$}] or,	when $\left\|e_{p,i}^\top A_{12}^{}\right\|\neq 0$, via the realization of order $n_i$ given by
	\end{enumerate}
		\begin{equation}\label{eq:linenc}\tag{4.9b}
			e_{p,i}^\top\begin{bmatrix}
				\mathbf W(\lambda) &\mathbf V(\lambda)
			\end{bmatrix}= \left[\begin{array}{c|c}
				\begin{bmatrix}O & I_{n_i}\end{bmatrix} Q_i(A_{22}+KA_{12})Q_i^\top\begin{bmatrix}O & I_{n_i}\end{bmatrix}^\top  -\lambda I_{n_i}  & \widetilde{B}_{i2}  \\ 
				\hline  e_{p,i}^\top A_{12}Q_i^\top\begin{bmatrix}O & I_{n_i}\end{bmatrix}\hspace{-1mm}\phantom{}^\top   &  \widetilde{D}_i  \\ \end{array}\right],
		\end{equation}
	\begin{enumerate}
		\item[\phantom{$\mathbf{a)}$}] having defined \stepcounter{equation}
	\end{enumerate}
		\begin{subequations}
			\begin{align}
				\widetilde{B}_{i2}:=&\begin{bmatrix}O & I_{n_i}\end{bmatrix} Q_i\begin{bmatrix}
					K A_{11} - KA_{12}K + A_{21} -A_{22}K   &  KB_1+B_2
				\end{bmatrix},\label{eq:aux_defa}\\
				\widetilde{D}_i:=&\ e_{p,i}^\top\begin{bmatrix}
					A_{11} - A_{12}K & B_1
				\end{bmatrix};\label{eq:aux_defb}
			\end{align}
		\end{subequations}
	\begin{enumerate}
		\item[$\mathbf{b)}$] The SRTR pair $(\mathbf{W}(\lambda),\mathbf{V}(\lambda))$ obtained from $K$ via the realization from \eqref{VWdef} is formed of stable TFMs that, for any $i,j\in1:p$ and any $k\in1:m$, satisfy
	\end{enumerate}
	\begin{subequations}
		\begin{align}
			e_{p,i}^\top\mathcal{B}_{\mathbf{W}}^{}e_{p,j}^{}=0\ \Longrightarrow &\ \  e^\top_{p,i}\mathbf{W}(\lambda)e_{p,j}^{}\equiv0,\label{eq:sufW}\\
			e_{p,i}^\top\mathcal{B}_{\mathbf{V}}^{}e_{m,k}^{}=0\ \Longrightarrow &\ \  e^\top_{p,i}\mathbf{V}(\lambda)e_{m,k}^{}\equiv0.\label{eq:sufV}
		\end{align}
	\end{subequations}
\end{theorem}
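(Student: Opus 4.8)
The plan is to analyze the $i$-th (block-)row $e_{p,i}^\top\begin{bmatrix}\mathbf W(\lambda) & \mathbf V(\lambda)\end{bmatrix}$ of the SRTR realization \eqref{VWdef}, which inherits the state matrix $A_{22}+KA_{12}$, the combined input matrix $\begin{bmatrix}KA_{11}-KA_{12}K+A_{21}-A_{22}K & KB_1+B_2\end{bmatrix}$, the feedthrough $e_{p,i}^\top\begin{bmatrix}A_{11}-A_{12}K & B_1\end{bmatrix}$ and, crucially, the output row $e_{p,i}^\top A_{12}$. I would split the argument for point $\mathbf{a)}$ according to whether this output row is null. When $\left\|e_{p,i}^\top A_{12}^{}\right\|=0$ the output matrix vanishes, so the row transfer function collapses to its feedthrough term, which is exactly \eqref{eq:linec}.

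For the nontrivial case $\left\|e_{p,i}^\top A_{12}^{}\right\|\neq 0$, the key step is an orthogonal state-space similarity by $Q_i$. By its defining property \eqref{eq:compress}, this transformation concentrates the output row onto the trailing state coordinate, as $\left\|e_{p,i}^\top A_{12}^{}\right\|e_{(n-p),(n-p)}^\top$, while sending the state matrix to $Q_i(A_{22}+KA_{12})Q_i^\top$ and the input to $Q_i\begin{bmatrix}KA_{11}-KA_{12}K+A_{21}-A_{22}K & KB_1+B_2\end{bmatrix}$, the feedthrough being left unchanged. Partitioning the transformed state into its leading $(n-p-n_i)$ and trailing $n_i$ blocks, condition $v)$ forces the lower-left block of the transformed state matrix to vanish, so the pair is block upper-triangular; since the transformed output is supported entirely on the trailing block, the leading block is unobservable. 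A standard resolvent computation for a block upper-triangular realization with output concentrated on the trailing block then shows that the leading modes drop out of the transfer function, so the row equals the order-$n_i$ realization built from the trailing subsystem, which is precisely \eqref{eq:linenc} with the definitions \eqref{eq:aux_defa}--\eqref{eq:aux_defb}. This order-reduction step, namely tracking the partition and verifying that condition $v)$ together with the compressed output leave only the trailing subsystem, is the main obstacle and the point that most warrants care.

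For point $\mathbf{b)}$ I would treat stability and sparsity separately. Stability follows because the poles of each row lie among the eigenvalues of its realization's state matrix: when $\left\|e_{p,i}^\top A_{12}^{}\right\|=0$ the row is constant and hence stable, while when $\left\|e_{p,i}^\top A_{12}^{}\right\|\neq 0$ condition $vi)$ places all eigenvalues of the trailing block in $\mathbb{S}$; taking the union over $i\in1:p$ renders every entry of $\mathbf W(\lambda)$ and $\mathbf V(\lambda)$ stable. For the sparsity implication \eqref{eq:sufW}, I would observe that the $(i,j)$ entry of $\mathbf W(\lambda)$ is the sum of a feedthrough contribution $e_{p,i}^\top(A_{11}-A_{12}K)e_{p,j}^{}$ and a strictly proper contribution driven by the $j$-th column of the $\mathbf W$-part of the (compressed) input matrix. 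Whenever $e_{p,i}^\top\mathcal{B}_{\mathbf{W}}^{}e_{p,j}^{}=0$, condition $i)$ annihilates the former term and condition $iii)$ annihilates the latter input column in the trailing-block coordinates, so the entry is identically zero. The implication \eqref{eq:sufV} follows identically, with conditions $ii)$ and $iv)$ playing the roles of $i)$ and $iii)$ for the $\mathbf V$-block; the case $\left\|e_{p,i}^\top A_{12}^{}\right\|=0$ is subsumed, since only the feedthrough survives there and is already eliminated by conditions $i)$ and $ii)$.
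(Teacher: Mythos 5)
Your proposal is correct and follows essentially the same route as the paper's own proof: row-wise analysis of the realization \eqref{VWdef}, elimination of unobservable modes when $\left\|e_{p,i}^\top A_{12}^{}\right\|=0$, the orthogonal similarity $Q_i$ with condition $v)$ yielding a block upper-triangular state matrix whose leading block is unobservable (giving \eqref{eq:linenc}), condition $vi)$ for stability, and the zero-feedthrough-plus-zero-input-column argument via conditions $i)$--$iv)$ for the sparsity implications. No gaps to report.
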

\begin{proof}
	See Appendix~\ref{sec:ap_D}.
\end{proof}

\begin{remark}
	Notice the fact that the proposed procedure also caters to the situations in which we desire to implement block-rows of the obtained SRTR pair. Indeed, the sparsity pattern and stability constraints are satisfied if and only if all rows of the considered block-rows abide by the aforementioned constraints. Additionally, the dimensionality aspect of the problem can also be tackled by minimizing the individual orders of the realizations belonging to a block-row's component rows. Should the block-row problem be tackled directly, the algebraic conditions from points $i)$ through $vi)$ of Theorem~\ref{thm:MM} can easily be altered to accommodate this change of perspective, with the resulting procedure amounting to a mere mutation of the provided technique.\vspace{-2mm}
\end{remark}

\begin{remark}\label{rem:SRTR_via_NRF}
	In contrast to the approach from \cite{Realiz}, our technique places little emphasis on the sparsity pattern belonging to the realization of type \eqref{ss2a}-\eqref{ss2b}, employed in the enforcement of sparsity patterns on the resulting SRTR pairs. Furthermore, as will be shown in the numerical example from Section~\ref{sec:num_ex}, the proposed technique can produce SRTR pairs with distinctive network topologies and high degrees of sparsity, even when the realization employed by the procedure possesses a greatly reduced degree of sparsity. In spite of this, the arguments made in \cite{Realiz} remain relevant even in the considered case, which is to say that the discussed problem's feasibility is highly dependent upon the network's intrinsic flow of information. 
	Thus, the aforementioned example will employ the procedure from \cite{aug_sparse} to obtain a controller with just such an information flow, imparted by a structured NRF representation.\vspace{-2mm}
\end{remark}


Before moving on to the numerical example, we first discuss the computational aspects of the algorithms used to satisfy conditions $i)$ through $vi)$ from Theorem~\ref{thm:MM}. Firstly, the orthogonal matrices $Q_i$ with $i\in1:p$, which compress the individual rows of $A_{12}$, are straightforward to compute via the use of permutations and Householder transformations. Additionally, notice that conditions $i)$, $iv)$ and $v)$ only amount to a set of equality constraints in terms of linear combinations of the entries belonging to $K$, whereas condition $ii)$ must only be checked against the desired sparsity pattern.\vspace{-2mm}

\begin{remark}
	It has been shown via the identity from \eqref{eq:all_Gs} in Appendix~\ref{sec:ap_D} that $B_1$ is invariant with respect to the class of all minimal realizations of a system showcasing the structure from \eqref{ss2a}-\eqref{ss2b}. Therefore, the choice of minimal realization to be employed in Theorem~\ref{thm:MM} does not restrict the result's applicability and means of inducing sparsity in $B_1$ will be presented in the numerical example from Section~\ref{sec:num_ex}, by employing the synthesis procedures from \cite{aug_sparse}, as previously indicated in Remark~\ref{rem:SRTR_via_NRF}.\vspace{-2mm}
\end{remark}

We conclude that the main difficulty in applying Theorem~\ref{thm:MM} rests with satisfying conditions $iii)$ and $vi)$. Notably, the former involves a series of equality constraints containing bilinear terms of the entries which belong to the matrix $K$, whereas the latter requires that we restrict the eigenvalues of an affine expression of $K$ to $\mathbb{S}$. Fortunately, these types of problems have been jointly tackled in Section IV of \cite{aug_sparse} and Algorithm~1 therein can easily be adapted to accommodate the conditions laid out in Theorem~\ref{thm:MM}. While the procedures from \cite{aug_sparse} deal with continuous-time versions of condition $vi)$ from the aforementioned theorem, we point out that the discrete-time case can be handled just as elegantly, by employing Lemma 6 from \cite{Realiz} alongside the convexification techniques given in \cite{aug_sparse}. Alternatively, one may employ Lemma 8 from \cite{Realiz} to restrict the eigenvalues of $A_{22}+KA_{12}$ to $\mathbb{S}$ which, combined with condition $v)$ of Theorem~\ref{thm:MM}, is sufficient to satisfy condition $vi)$ of the aforementioned theorem.\vspace{-2mm}

\begin{remark}
	The high computational performance of the indicated algorithm, as highlighted in Table 1 of \cite{aug_sparse} and discussed comparatively in Section V.B of \cite{aug_sparse}, serves to justify the chosen approach and its distinctive formulation, given in Theorem~\ref{thm:MM}. Furthermore, a code implementation for this type of numerical procedure is publicly available, and can be accessed via the link in the footnote from the first page of \cite{aug_sparse}.
\end{remark}

\section{Numerical example}\label{sec:num_ex}

We now present an example of SRTR-based design for distributed control. To this end, we return to the network illustrated in Fig.~\ref{fig:three_hop} and modeled in Section~\ref{subsec:mdl}. Recall that, although this network possesses a highly particular topology, the latter's structure is completely obscured when inspecting the network via its TFM. The aim of this section is to illustrate how the framework developed in this paper can be used to obtain a controller, for the aforementioned network, which has the same ring topology and which can be implemented in a distributed manner, via a set of sparse and stabilizing control laws. The chosen example, inspired by the one showcased in Section V of \cite{aug_sparse}, will be presented in the continuous-time setting. We point out that, in doing so, our aim is to provide further contrast with respect to the framework discussed in \cite{SLS}, which caters exclusively to discrete-time systems. 

Consider a network of the type described in Section~\ref{subsec:mdl} which, instead of the $3$ nodes shown in Fig.~\ref{fig:three_hop}, is more generically formed from the ring-like interconnection of $p\in\mathbb{N}$ nodes, with $p\geq 2$. Thus, its NRF-like dynamics are captured via the identities
\begin{equation}\label{eq:NRF_ex}
	\mathbf{Y}_{(i\text{ mod }p)+1}(s) = \mathbf{\Phi}_\mathbf{G}(s)\mathbf{Y}_{((i-1)\text{ mod }p)+1}(s) + \mathbf{\Gamma}_\mathbf{G}(s)\mathbf{U}_{(i\text{ mod }p)+1}(s),\ \forall i\in 1:p,
\end{equation}
where $\mathbf{Y}_i(s)$ and $\mathbf{U}_i(s)$ denote the Laplace transforms of the network's scalar outputs and inputs, respectively, while the interconnection and input dynamics are given by the scalar TFMs $\mathbf{\Phi}_\mathbf{G}(s)$ and $\mathbf{\Gamma}_\mathbf{G}(s)$, respectively. Notice that these two scalar TFMs do not depend upon $i\in 1:p$, which indicates homogeneous dynamics among the network's nodes (see \cite{plutonizare} for a practical scenario where such a situation can occur).

The aim is to design a set of control laws which exhibit the same type of ring-like structure, \emph{i.e.}, each command signal may be computed using only information that is local or which originates from the node directly behind it in the ring. Moreover, given the homogeneous dynamics occurring in each of the controlled network's nodes, we wish to preserve this trait with respect to the controller's SRTR implementation. Therefore, we desire control laws of type \eqref{eq:SRTR_conv} which, for all $i\in 1:p$, are given by
\begin{align}\nonumber
	\mathbf{U}_{(i\text{ mod }p)+1}(s) =
	\tfrac{1}{s}\begin{bmatrix}
		\mathbf{W}_{local}(s) & \mathbf{V}_{local}(s)
	\end{bmatrix}\begin{bmatrix}
	\mathbf{U}_{(i\text{ mod }p)+1}^\top(s) & \mathbf{Y}_{(i\text{ mod }p)+1}^\top(s)
\end{bmatrix}^\top+\\+
	\tfrac{1}{s}\begin{bmatrix}
		\mathbf{W}_{prev}(s) & \mathbf{V}_{prev}(s)
	\end{bmatrix}\begin{bmatrix}
		\mathbf{U}_{((i-1)\text{ mod }p)+1}^\top(s) & \mathbf{Y}_{((i-1)\text{ mod }p)+1}^\top(s)
	\end{bmatrix}^\top.\label{eq:SRTR_ex}
\end{align}
Once more, note that the four stable and scalar TFMs, $\mathbf{W}_{local}(s)$, $\mathbf{V}_{local}(s)$, $\mathbf{W}_{prev}(s)$ and $\mathbf{V}_{prev}(s)$, which form the sought-after control law, do not depend upon $i\in 1:p$.

In this example, we consider that the network is characterized by $p=6$ and by
\begin{equation}\label{eq:net_ex}
	\mathbf{\Phi}_\mathbf{G}(s) = \left[\begin{array}{r|r}
		-0.5 - s & -0.1 \\\hline 1 & 0
	\end{array}\right] = \frac{-1}{10s+5}\ ,\quad \mathbf{\Gamma}_\mathbf{G}(s) = \left[\begin{array}{r|r}
		1 - s & 1 \\\hline 1 & 0
	\end{array}\right]= \frac{1}{s-1}\ .
\end{equation}
By employing the identities from \eqref{eq:NRF_ex} along with the realizations from \eqref{eq:net_ex}, and by defining the matrix which models the interconnection structure of the $\mathbf{\Phi}_\mathbf{G}(s)$ terms
\begin{equation}\label{eq:F_def}
	F_b:={\scriptsize\begin{bmatrix}
			0 & 0 & 0 & 0 & 0 & 1\\
			1 & 0 & 0 & 0 & 0 & 0\\
			0 & 1 & 0 & 0 & 0 & 0\\
			0 & 0 & 1 & 0 & 0 & 0\\
			0 & 0 & 0 & 1 & 0 & 0\\
			0 & 0 & 0 & 0 & 1 & 0
	\end{bmatrix}},
\end{equation}
we will express a state-space realization for our network as follows
\begin{equation}\label{eq:net_real}
	\mathbf{G}(s)=\left[\begin{array}{cc|c}
		(-0.5-s) I_6 - 0.1 F_b & -0.1 F_b & O \\
		O & (1-s)I_6 & I_6\\\hline
		I_6 & I_6 &O
	\end{array}\right].
\end{equation}

The state vector of the realization from \eqref{eq:net_real} is, as per the identities in \eqref{eq:net_ex}, the concatenation of the state vectors belonging to all $\mathbf{\Phi}_\mathbf{G}(s)$ subsystems and of those belonging to all $\mathbf{\Gamma}_\mathbf{G}(s)$ subsystems making up the interconnection. Moreover, routine computations show that the network's realization from \eqref{eq:net_real} is minimal, which also makes it both stabilizable and detectable. Thus, Theorem~\ref{thm:implem} can be freely applied.

\begin{remark}
	The observation pertaining to the state vector of the realization from \eqref{eq:net_real} is crucial. By devising distributed control laws which stabilize the state dynamics of the system given in \eqref{eq:net_real} in closed-loop configuration, this controller's distributed implementation will guarantee that the states of all the $\mathbf{\Phi}_\mathbf{G}(s)$ and $\mathbf{\Gamma}_\mathbf{G}(s)$ subsystems forming the network will remain bounded when they are excited by bounded exogenous signals (as a consequence of the closed-loop state dynamics being made asymptotically stable, see Section 5.3 in \cite{zhou} and Corollary III.9 from \cite{NRF}, along with its proof), and that these states (along with the state variables of the controller's implementation) will tend to zero when evolving freely from any finite initial conditions.
\end{remark}

In order to produce the control laws described by the identities from \eqref{eq:SRTR_ex}, we first employ the numerical procedures from \cite{aug_sparse} to obtain a controller whose NRF representation has the desired type of sparsity structure. This controller is given by a minimal state-space realization of order $n=12$ and type \eqref{ss2a}-\eqref{ss2b} denoted by
\begin{equation}\label{eq:ctl_real}
	\mathbf{K}(s)=\left[\begin{array}{cc|c} A_{11}   -\lambda I_6& A_{12}  &  B_1 \\
		A_{21}  & A_{22} -\lambda I_{6}  & B_2 \\ 
		\hline I_6 & O   & O  \end{array}\right],
\end{equation}
where we have that
\begin{align*}
	A_{11} =& \left[\scriptsize\begin{array}{rrrrrr}
		-11.8035 &  -2.5928 &   0.6483 &  -0.1498 &  -0.0059 &   1.5415\\
		  1.5415 & -11.8035 &  -2.5928 &   0.6483 &  -0.1498 &  -0.0059\\
		 -0.0059 &   1.5415 & -11.8035 &  -2.5928 &   0.6483 &  -0.1498\\
		 -0.1498 &  -0.0059 &   1.5415 & -11.8035 &  -2.5928 &   0.6483\\
		  0.6483 &  -0.1498 &  -0.0059 &   1.5415 & -11.8035 &  -2.5928\\
		 -2.5928 &   0.6483 &  -0.1498 &  -0.0059 &   1.5415 & -11.8035
	\end{array}\right],\\
	A_{12} =& \left[\scriptsize\begin{array}{rrrrrr}
		 -7.3460 & -12.0158 &  29.1222 & -12.1621 &   9.7523 & \hspace{1.5mm}-5.1182\\
		-16.0509 &  10.1202 &  25.3239 &  -8.1033 & -15.2385 &                 4.0548\\
		-23.7679 &  21.4320 &   8.6752 &  13.6752 &   1.5280 &                -5.1783\\
		-27.0198 &   9.8895 & -11.8960 &  -7.4120 &  16.8991 &                 2.4006\\
		-26.0184 & -17.3463 & -14.7520 &  -7.3268 &  -4.8881 &                -6.5282\\
		-15.8018 & -29.0018 &   9.5454 &  10.1566 &   5.3141 &                 1.5347
	\end{array}\right],\\
	A_{21} =& \left[\scriptsize\begin{array}{rrrrrr}
		 0.1795  &  0.0818  &  0.4039  &  0.4133  &  0.4137  &  \phantom{-}0.3992\\
		 0.4333  & -0.0452  & -0.2691  & -0.3175  &  0.0051  &             0.4693\\
		-0.4047  & -0.5266  & -0.2458  & -0.0123  &  0.2775  &             0.1615\\
		-0.4200  &  0.5344  &  0.0288  & -0.4924  &  0.4689  &             0.0625\\
		-0.2869  & -0.2283  &  0.5717  & -0.3033  & -0.4110  &             0.4398\\
		-0.2019  &  0.2441  & -0.2441  &  0.2778  & -0.2152  &             0.2834
	\end{array}\right],\\
	A_{22} =& \left[\scriptsize\begin{array}{rrrrrr}
		-2.2413  &  1.0141  &  0.3885  & -0.5740  & -0.4535  &  0.2336\\
		-1.0913  & -2.1134  & -2.7942  & -0.1511  &  0.1878  &  0.0216\\
		-0.3266  &  2.7957  & -2.1347  & -0.4690  &  0.0492  &  0.1115\\
		 0.6269  &  0.2681  &  0.3424  & -2.8351  &  4.0507  & -0.0716\\
		 0.2603  & -0.3236  & -0.2731  & -4.0526  & -2.8072  &  0.0350\\
		 0.1026  &  0.1898  & -0.0850  &  0.0140  & -0.1408  & -4.6168
	\end{array}\right],\\
	B_{1} =& \left[\scriptsize\begin{array}{rrrrrr}
		-1.0779  &  0.0000  &  0.0000  &  0.0000  &  0.0000  & 15.8441\\
		15.8441  & -1.0779  &  0.0000  &  0.0000  &  0.0000  &  0.0000\\
		 0.0000  & 15.8441  & -1.0779  &  0.0000  &  0.0000  &  0.0000\\
		 0.0000  &  0.0000  & 15.8441  & -1.0779  &  0.0000  &  0.0000\\
		 0.0000  &  0.0000  &  0.0000  & 15.8441  & -1.0779  &  0.0000\\
		 0.0000  &  0.0000  &  0.0000  &  0.0000  & 15.8441  & -1.0779
	\end{array}\right],\\
	B_{2} =& \left[\scriptsize\begin{array}{rrrrrr}
		 0.6417  &  1.1275  &  1.4058  &  1.4121  &  1.0183  &  0.5263\\
		-0.1623  & -1.0111  & -0.7933  &  0.4321  &  1.4028  &  1.0261\\
		-1.5048  & -0.7775  &  0.3057  &  0.7985  & -0.0171  & -1.2371\\
		 0.5132  & -0.2538  & -0.0823  &  0.3877  & -0.1357  &  0.1614\\
		 0.2803  &  0.2855  & -0.5768  & -0.2496  & -0.0568  & -0.3892\\
		 0.0177  &  0.0315  &  0.0806  &  0.1186  &  0.1352  &  0.0834
	\end{array}\right].
\end{align*}

\begin{remark}
	Notably, the obtained controller's TFM displays no distinctive sparsity pattern, having only nonzero entries. This is straightforward to check by simply computing the value of $\mathbf{K}(0)=\begin{bmatrix}
		-I_6 & O \end{bmatrix}\begin{bmatrix} A_{11} & A_{12} \\ A_{21} & A_{22} \end{bmatrix}^{-1}\begin{bmatrix} B_1^\top & B_2^\top \end{bmatrix}^\top$.
	As discussed throughout Section~\ref{adoua}, the classical TFM-based representation of a system is oftentimes opaque to its intrinsic structure, as otherwise captured by NRF or SRTR pairs.
\end{remark}

For the obtained realization of the controller, given in \eqref{eq:ctl_real}, we will employ a set of orthogonal matrices in order to compress each individual row of $A_{12}$. In the notation employed in Theorem~\ref{thm:MM}, these matrices are given explicitly by
\begin{align*}
	Q_{1} =& \left[\scriptsize\begin{array}{rrrrrr}
		-0.1411 &  -0.0389 &   0.0942 &  -0.0393 &   0.0315 &   0.9834\\
		0.2689   & 0.0741   &-0.1795   & 0.0750   & 0.9399   & 0.0315\\
		-0.3353  & -0.0924  &  0.2239  &  0.9065  &  0.0750  & -0.0393\\
		0.8029   & 0.2212   & 0.4639   & 0.2239   &-0.1795   & 0.0942\\
		-0.3313  &  0.9087  &  0.2212  & -0.0924  &  0.0741  & -0.0389\\
		-0.2025  & -0.3313  &  0.8029  & -0.3353  &  0.2689  & -0.1411
	\end{array}\right],\\
	Q_{2} =& \left[\scriptsize\begin{array}{rrrrrr}
		0.1118   &-0.0216   &-0.0541   & 0.0173   & 0.0326   & 0.9913\\
		-0.4201  &  0.0813  &  0.2033  & -0.0651  &  0.8776  &  0.0326\\
		-0.2234  &  0.0432  &  0.1081  &  0.9654  & -0.0651  &  0.0173\\
		0.6982   &-0.1350   & 0.6621   & 0.1081   & 0.2033   &-0.0541\\
		0.2790   & 0.9460   &-0.1350   & 0.0432   & 0.0813   &-0.0216\\
		-0.4425  &  0.2790  &  0.6982  & -0.2234  & -0.4201  &  0.1118
	\end{array}\right],\\
	Q_{3} =& \left[\scriptsize\begin{array}{rrrrrr}
		-0.1428  &  0.0510  &  0.0206  &  0.0325  &  0.0036  &  0.9877\\
		0.0421   &-0.0150   &-0.0061   &-0.0096   & 0.9989   & 0.0036\\
		0.3770   &-0.1346   &-0.0545   & 0.9141   &-0.0096   & 0.0325\\
		0.2392   &-0.0854   & 0.9654   &-0.0545   &-0.0061   & 0.0206\\
		0.5909   & 0.7891   &-0.0854   &-0.1346   &-0.0150   & 0.0510\\
		-0.6553  &  0.5909  &  0.2392  &  0.3770  &  0.0421  & -0.1428
	\end{array}\right],\\
	Q_{4} =& \left[\scriptsize\begin{array}{rrrrrr}
		0.0662&   -0.0103&    0.0124 &   0.0078&   -0.0177&    0.9975\\
		0.4659   &-0.0728   & 0.0876    &0.0546   & 0.8756   &-0.0177\\
		-0.2043  &  0.0319  & -0.0384   & 0.9761  &  0.0546  &  0.0078\\
		-0.3280  &  0.0512  &  0.9384   &-0.0384  &  0.0876  &  0.0124\\
		0.2726   & 0.9574   & 0.0512    &0.0319   &-0.0728   &-0.0103\\
		-0.7449  &  0.2726  & -0.3280   &-0.2043  &  0.4659  &  0.0662
	\end{array}\right],\\
	Q_{5} =& \left[\scriptsize\begin{array}{rrrrrr}
		-0.1800  & -0.0501  & -0.0426  & -0.0212  & -0.0141  &  0.9811\\
		-0.1348  & -0.0375  & -0.0319  & -0.0159  &  0.9894  & -0.0141\\
		-0.2020  & -0.0563  & -0.0478  &  0.9762  & -0.0159  & -0.0212\\
		-0.4067  & -0.1133  &  0.9037  & -0.0478  & -0.0319  & -0.0426\\
		-0.4782  &  0.8668  & -0.1133  & -0.0563  & -0.0375  & -0.0501\\
		-0.7173  & -0.4782  & -0.4067  & -0.2020  & -0.1348  & -0.1800
	\end{array}\right],\\
	Q_{6} =& \left[\scriptsize\begin{array}{rrrrrr}
		0.0423   & 0.0236  & -0.0078  & -0.0083  & -0.0043  &  0.9988\\
		0.1465   & 0.0816  & -0.0269  & -0.0286  &  0.9850  & -0.0043\\
		0.2800   & 0.1559  & -0.0513  &  0.9454  & -0.0286  & -0.0083\\
		0.2632   & 0.1466  &  0.9518  & -0.0513  & -0.0269  & -0.0078\\
		-0.7996  &  0.5547 &   0.1466 &   0.1559 &   0.0816 &   0.0236\\
		-0.4356  & -0.7996 &   0.2632 &   0.2800 &   0.1465 &   0.0423		
	\end{array}\right].
\end{align*}

With all of these elements now computed, we proceed to employ the procedure given in Section~\ref{subsec:struc}, with the aim of obtaining a structured and stabilizing SRTR representation for our controller. To this end, we will employ Theorem~\ref{thm:MM} and we point out the fact that $\left\|e_{p,i}^\top A_{12}^{}\right\|\neq 0,\ \forall i\in1:6$. We then inspect the structure of the control law given in \eqref{eq:SRTR_ex} and we select the two binary matrices from the statement of Theorem~\ref{thm:MM} as follows $\mathcal{B}_{\mathbf{W}}=\mathcal{B}_{\mathbf{V}}=I_6+F_b$, recalling the matrix defined in \eqref{eq:F_def}. 

Additionally, it is desirable to find low-order state-space realizations for each row of the TFM given in \eqref{eq:SRTR_ctl} and, therefore, we choose $n_i=1,\ \forall i\in1:6$. Finally, recall that we wish to obtain an SRTR-based control law in which each command signal is computed via the expression given in \eqref{eq:SRTR_ex}. For the controller produced by the procedures from \cite{aug_sparse}, we obtain precisely these types of control laws when, in addition to the constraints stated in Theorem~\ref{thm:MM}, we also impose that $A_{22}+KA_{12}$ be diagonal and that it have the same element on each position of its diagonal. 

\begin{remark}\label{rem:modif}
	Notice that if the latter two constraints from the previous paragraph which pertain to $A_{22}+KA_{12}$ are satisfied, then condition $v)$ from Theorem~\ref{thm:MM} is implicitly fulfilled. Therefore, the constraints arising from condition $v)$ can be omitted from the optimization procedure's statement, for the sake of avoiding redundancy.
\end{remark}

A solution to the modified problem described in Remark~\ref{rem:modif} is given by the matrix
\begin{equation*}
	K=\left[\scriptsize\begin{array}{rrrrrr}
		-0.0259  &  0.0404  &  0.0610  &  0.0693  &  0.0776  &  0.0821\\
		 0.0563  & -0.0062  & -0.0746  & -0.0792  &  0.0163  &  0.1318\\
		-0.1053  & -0.0838  & -0.0574  &  0.0127  &  0.1007  &  0.0124\\
		 0.0257  &  0.1610  & -0.1578  & -0.0006  &  0.1571  & -0.1560\\
		-0.1736  &  0.1068  &  0.0956  & -0.1956  &  0.0737  &  0.0580\\
		 0.1935  & -0.1927  &  0.1912  & -0.1835  &  0.1914  & -0.1767
	\end{array}\right],
\end{equation*}
which produces, via \eqref{VWdef}, the SRTR-based control laws of type \eqref{eq:SRTR_ex} given by
\begin{equation}\label{eq:SRTR_obt}
	\begin{array}{ll}
		\mathbf{W}_{local}(s) = \dfrac{-5.255s - 55.9}{s+9.34}\ , & \mathbf{V}_{local}(s) = \dfrac{-1.078 s - 94.28}{s+9.34}\ ,\\
		\phantom{ }\\
		\mathbf{W}_{prev}(s)  \hspace{0.25mm}= \dfrac{-15.84}{s+9.34}\ , & \mathbf{V}_{prev}(s)  \hspace{0.25mm}= \dfrac{15.84 s - 15.84}{s+9.34}\ .\\
	\end{array}
\end{equation}

The obtained SRTR pair of the controller from \eqref{eq:ctl_real} is (continuous-time) stable, as can be seen from \eqref{eq:SRTR_obt}. Additionally, since this SRTR pair belongs to a stabilizing controller of the network, whose realization from \eqref{eq:net_real} is both stabilizable and detectable, the obtained control laws of type \eqref{eq:SRTR_conv} can be implemented as in Theorem~\ref{thm:implem} with the aim of stabilizing the network's state dynamics, as given in \eqref{eq:net_real}.

\section{Conclusions} 
In this paper, we have introduced the SRTR representation for networked systems and we have presented a comprehensive discussion on the intrinsic connections between SRTR pairs, their NRF counterparts introduced in \cite{NRF}, the response TFMs from \cite{SLS} with their associated implementation matrices in \cite{Separ}, and the consecrated coprime factorizations treated extensively in \cite{V}. Akin to our previous results from \cite{NRF}, we have provided a unitary mathematical framework that is not restricted to either continuous- (see \cite{Explicit}) or discrete-time (see \cite{SLS}) network models, while also highlighting the proposed paradigm's rich potential for system analysis. Finally, we have extended the closed-loop internal stability guarantees from \cite{NRF} to the current setting, thus paving the way for distributed SRTR-based implementations of control laws and validating our proposed framework's capacity for controller design.\bigskip

\appendix

\section{Rosenbrock-type realizations}\label{sec:ap_A}\medskip

We state here several results related to a type of realization which has been fully investigated in \cite{Rose70}. These remarkable representations are the generalizations of state-space \cite{Wonham}, descriptor \cite{Verg81} and pencil \cite{CF,SIMAX} realizations, being instrumental in constructing many of the proofs given in Appendix~\ref{sec:ap_D}.

\begin{definition} \label{rose_real}
	A collection of $4$ \emph{polynomial} matrices $(T(\lambda),U(\lambda),V(\lambda),W(\lambda))$ with coefficients in $\mathbb{R}$ and with $\det \left(T(\lambda)\right)\not\equiv 0$ that satisfy
	\begin{equation}\label{eq:rose_def}
		\mathbf{G}(\lambda)=V(\lambda)T^{-1}(\lambda)U(\lambda)+W(\lambda)=:\left[\footnotesize\begin{array}{r|r}
			-T(\lambda)&U(\lambda)\\\hline V(\lambda)&W(\lambda)
		\end{array}\right]
	\end{equation}
	is called a \emph{Rosenbrock-type realization} of $\mathbf{G}(\lambda)\in\mathbb{R}(\lambda)^{p\times m}$.
\end{definition}

\begin{definition}
	For a Rosenbrock-type realization of $\mathbf{G}(\lambda)$ as in \eqref{eq:rose_def}, the polynomial matrix denoted as
	\begin{equation}\label{eq:sysmat}
		\mathcal{S}_{\mathbf{G}}(\lambda):=\left[\small\begin{array}{rr}
			-T(\lambda)&U(\lambda)\\ V(\lambda)&W(\lambda)
		\end{array}\right]
	\end{equation}
	is called a \emph{system matrix} of $\mathbf{G}(\lambda)$.
\end{definition}

\begin{proposition}\label{prop:fnr}
	$\mathbf{G}(\lambda)$ has full row/column normal rank if and only if $\mathcal{S}_{\mathbf{G}}(\lambda)$ has full row/column normal rank.
\end{proposition}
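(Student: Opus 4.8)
The plan is to exploit the hypothesis $\det T(\lambda)\not\equiv 0$ in order to perform block Gaussian elimination on the system matrix $\mathcal{S}_{\mathbf{G}}(\lambda)$ and reduce it to a block-diagonal form from which the normal rank splits additively. With $T(\lambda)$ of size $q\times q$, $U(\lambda)$ of size $q\times m$, $V(\lambda)$ of size $p\times q$ and $W(\lambda)$ of size $p\times m$, I would first introduce the two rational matrices
\[
L(\lambda):=\begin{bmatrix}I&O\\ V(\lambda)T^{-1}(\lambda)&I\end{bmatrix},\qquad R(\lambda):=\begin{bmatrix}I&T^{-1}(\lambda)U(\lambda)\\ O&I\end{bmatrix}.
\]
Both are well defined as TFMs, since $T^{-1}(\lambda)$ exists as a rational matrix, and each is block-triangular with identity diagonal blocks, so each has determinant identically equal to $1$. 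Consequently $L(\lambda)$ and $R(\lambda)$ are invertible over the field of rational functions, and their evaluations are nonsingular at all but finitely many points of $\mathbb{C}_\infty$.

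Next, a direct computation that uses the defining identity $\mathbf{G}(\lambda)=V(\lambda)T^{-1}(\lambda)U(\lambda)+W(\lambda)$ from \eqref{eq:rose_def} yields
\[
L(\lambda)\,\mathcal{S}_{\mathbf{G}}(\lambda)\,R(\lambda)=\begin{bmatrix}-T(\lambda)&O\\O&\mathbf{G}(\lambda)\end{bmatrix}.
\]
Since pre- and post-multiplication by a rational matrix that is nonsingular for almost all $\lambda$ leaves the normal rank unchanged, the normal rank of $\mathcal{S}_{\mathbf{G}}(\lambda)$ equals that of the block-diagonal matrix on the right. The normal rank of a block-diagonal matrix is the sum of the normal ranks of its diagonal blocks, and $\det T(\lambda)\not\equiv 0$ forces $T(\lambda)$ to have normal rank exactly $q$; hence the normal rank of $\mathcal{S}_{\mathbf{G}}(\lambda)$ equals $q$ plus the normal rank of $\mathbf{G}(\lambda)$.

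I would then read off both equivalences from this identity. The matrix $\mathcal{S}_{\mathbf{G}}(\lambda)$ has $q+p$ rows and $q+m$ columns. It has full row normal rank, i.e.\ normal rank $q+p$, if and only if $\mathbf{G}(\lambda)$ has normal rank $p$, which is precisely full row normal rank of $\mathbf{G}(\lambda)$; likewise, $\mathcal{S}_{\mathbf{G}}(\lambda)$ has full column normal rank $q+m$ if and only if $\mathbf{G}(\lambda)$ has normal rank $m$, i.e.\ full column normal rank. This settles both directions simultaneously.

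The only delicate point, and thus the main obstacle, is justifying that the reduction remains valid at the level of normal rank even though $L(\lambda)$ and $R(\lambda)$ are rational rather than polynomial (and in particular may have poles where $T(\lambda)$ is singular), together with the additivity of normal rank across the block-diagonal decomposition. Both facts follow from the genericity of the normal rank, being the rank attained at almost every point of $\mathbb{C}_\infty$: the determinants of $L(\lambda)$ and $R(\lambda)$ are identically $1$, so only finitely many points are excluded, and such a finite exceptional set does not affect the generic rank. The behaviour at $\infty$ is handled by the same argument applied to the realization evaluated in $\tfrac{1}{\lambda}$, consistent with the definition of infinite poles and zeros given in the Preliminaries.
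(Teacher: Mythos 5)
Your proof is correct. The computation $L(\lambda)\,\mathcal{S}_{\mathbf{G}}(\lambda)\,R(\lambda)=\operatorname{diag}\bigl(-T(\lambda),\mathbf{G}(\lambda)\bigr)$ checks out (the $(2,1)$ block vanishes by $VT^{-1}(-T)+V=O$ and the $(1,2)$ block by $-TT^{-1}U+U=O$), and from it the rank count $\operatorname{nrank}\mathcal{S}_{\mathbf{G}}=q+\operatorname{nrank}\mathbf{G}$ gives both equivalences at once. Note, however, that the paper itself states Proposition~\ref{prop:fnr} \emph{without proof}, treating it as a standard fact from Rosenbrock's theory of system matrices, so there is no in-paper argument to compare yours against; your Schur-complement/block-elimination argument is the classical way to establish it. Two small remarks on the part you flag as delicate: the cleanest justification is simply that the normal rank of a rational matrix \emph{is} its rank over the field $\mathbb{R}(\lambda)$, and $L(\lambda)$, $R(\lambda)$ are invertible over that field (determinant identically $1$), so no genericity or exceptional-point argument is really needed; and the closing remark about handling the point at infinity is superfluous, since the normal rank is by definition the rank attained at almost every point of $\mathbb{C}_\infty$, hence insensitive to any single point, including $\lambda=\infty$.
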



The use of Rosenbrock-type realizations in Appendix~\ref{sec:ap_D} becomes justified due to the fact that improper TFMs, such as the $\lambda I_p-\mathbf{W}(\lambda)$ factor generated by an SRTR pair, cannot be modeled by using state-space realizations. Furthermore, we are particularly interested in a certain class of Rosenbrock-type realizations.

\begin{definition}\label{def:ired}
	A representation of type \eqref{eq:rose_def} for which $\begin{bmatrix}
		-T(\lambda)&U(\lambda)
	\end{bmatrix}$ and $\begin{bmatrix}
		-T^\top(\lambda)&V^\top(\lambda)
	\end{bmatrix}^\top$ have full row and, respectively, column rank $\forall\lambda\in\mathbb{C}$ is called an \emph{irreducible} Rosenbrock-type realization.
\end{definition}

The next result plays a key role in proving Theorem~\ref{CupruMin}.

\begin{theorem}{\cite[pp. 111, Theorem 4.1]{Rose70}}\label{thm:iso}
	Let $\mathbf{G}(\lambda)$ be given by an irreducible 
	realization of type \eqref{eq:rose_def}. Then:
	\begin{enumerate}
		\item[$\mathbf{a)}$] The finite poles of $\mathbf{G}(\lambda)$ are the finite zeros of $T(\lambda)$;
		
		\item[$\mathbf{b)}$] The finite zeros of $\mathbf{G}(\lambda)$ are the finite zeros of $\mathcal{S}_{\mathbf{G}}(\lambda)$.
	\end{enumerate}
\end{theorem}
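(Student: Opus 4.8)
The plan is to reduce the whole statement to the behaviour of local Smith--McMillan forms under a single block factorization of the system matrix, and then to let the two rank conditions of Definition~\ref{def:ired} carry the argument at the points where that factorization degenerates. First I would record the elementary block-elimination identity
\begin{equation*}
	\begin{bmatrix} I & 0 \\ V(\lambda)T^{-1}(\lambda) & I \end{bmatrix}\mathcal{S}_{\mathbf{G}}(\lambda)\begin{bmatrix} I & T^{-1}(\lambda)U(\lambda) \\ 0 & I \end{bmatrix}=\begin{bmatrix} -T(\lambda) & 0 \\ 0 & \mathbf{G}(\lambda)\end{bmatrix},
\end{equation*}
which follows directly from $\mathbf{G}=VT^{-1}U+W$. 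Both outer factors have determinant $1$, so they are unimodular as rational matrices; the catch is that they are not polynomial, their only finite poles lying at the zeros of $\det T(\lambda)$.

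Next I would localize. Fixing $\lambda_0\in\mathbb{C}$ and working in the ring of rational functions analytic at $\lambda_0$, I recall that the local pole and zero structure of any rational matrix is invariant under left and right multiplication by matrices that are analytic and nonsingular at $\lambda_0$ (local units). If $\det T(\lambda_0)\neq 0$, then $T^{-1}$ is analytic at $\lambda_0$, both outer factors above are local units, and $-T(\lambda_0)$ is nonsingular; hence $\mathcal{S}_{\mathbf{G}}$ is locally equivalent to $\mathrm{diag}(-T,\mathbf{G})$ and, after discarding the unit block $-T$, to $\mathbf{G}$ itself. This settles both claims away from the zeros of $T$: there $\mathbf{G}$ has no finite pole, and the local zero structures of $\mathcal{S}_{\mathbf{G}}$ and $\mathbf{G}$ coincide.

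The substance lies at a zero $\lambda_0$ of $\det T$, where the outer factors cease to be local units. Here I would reinterpret irreducibility as coprimeness: $\begin{bmatrix} -T & U\end{bmatrix}$ having full row rank for all $\lambda$ makes $T$ and $U$ left coprime, while $\begin{bmatrix} -T^\top & V^\top\end{bmatrix}^\top$ having full column rank makes $T$ and $V$ right coprime. For part~$\mathbf{a)}$, since $\mathbf{G}=VT^{-1}U+W$ with $V,U,W$ polynomial, every finite pole of $\mathbf{G}$ is already a zero of $\det T$; the left and right coprimeness are precisely what forbid any cancellation of the local invariant factors of $T^{-1}$ against $U$ on the right or $V$ on the left, so the full local zero structure of $T$ survives as the pole structure of $\mathbf{G}$. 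For part~$\mathbf{b)}$, I would use the same two rank conditions to build local-unimodular row and column operations splitting off from $\mathcal{S}_{\mathbf{G}}$ a block equivalent to a diagonal of the invariant factors of $T$, leaving a residual block locally equivalent to the numerator part of $\mathbf{G}$; comparing with the local Smith--McMillan form of $\mathbf{G}$ then identifies the finite zeros of $\mathcal{S}_{\mathbf{G}}$ with those of $\mathbf{G}$. The normal-rank compatibility used throughout follows from the displayed factorization and is reflected in Proposition~\ref{prop:fnr}.

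The main obstacle is entirely concentrated in this final local analysis at the zeros of $\det T$: one must show, through careful bookkeeping with local Smith forms, that the two coprimeness conditions rule out every pole--zero cancellation and enforce an exact correspondence between the invariant factors of $T$ and the pole part of $\mathbf{G}$ on one hand, and between the surplus invariant factors of $\mathcal{S}_{\mathbf{G}}$ and the zero part of $\mathbf{G}$ on the other. Everything away from those points is automatic, so the whole weight of the theorem rests on verifying that irreducibility is exactly the non-cancellation condition.
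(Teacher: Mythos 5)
First, a point of reference: the paper does not prove this statement at all. Theorem~\ref{thm:iso} is quoted directly from Rosenbrock's book (\cite[pp.~111, Theorem 4.1]{Rose70}) and is invoked in Appendix~\ref{sec:ap_D} as an external ingredient, so there is no in-paper argument to compare yours against; your attempt has to stand on its own as a proof of the classical result.

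On its own terms, it has a genuine gap. Your block identity
\[
\begin{bmatrix} I & 0 \\ V(\lambda)T^{-1}(\lambda) & I\end{bmatrix}\mathcal{S}_{\mathbf{G}}(\lambda)\begin{bmatrix} I & T^{-1}(\lambda)U(\lambda) \\ 0 & I \end{bmatrix} = \begin{bmatrix} -T(\lambda) & 0 \\ 0 & \mathbf{G}(\lambda)\end{bmatrix}
\]
is correct, and the localization argument away from the zeros of $\det T$ is sound, but it only settles the points where nothing happens: there $\mathbf{G}$ is analytic, $T$ has no zeros, and both sides of claim $\mathbf{b)}$ trivially carry the same local structure. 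Everything you assert about the points $\lambda_0$ with $\det T(\lambda_0)=0$ --- that left coprimeness of $(T,U)$ and right coprimeness of $(T,V)$ ``forbid any cancellation,'' so the local zero structure of $T$ equals the local pole structure of $\mathbf{G}$ (claim $\mathbf{a)}$), and that $\mathcal{S}_{\mathbf{G}}$ splits locally into the invariant factors of $T$ plus a ``numerator part of $\mathbf{G}$'' (claim $\mathbf{b)}$) --- is precisely the content of the theorem, and you state it rather than prove it; your own closing paragraph concedes as much. To actually close claim $\mathbf{a)}$ you would need, for instance, to take a local Smith form $T=E\Lambda F$ at $\lambda_0$, translate the two rank conditions of Definition~\ref{def:ired} into linear independence at $\lambda_0$ of the rows of $E^{-1}U$ and the columns of $VF^{-1}$ indexed by the nonunit invariant factors of $\Lambda$, complete those families to matrices invertible at $\lambda_0$, and invoke the fact that the local pole structure is unchanged by adding a matrix analytic at $\lambda_0$; claim $\mathbf{b)}$ needs an analogous but more delicate reduction, since the ``numerator part of $\mathbf{G}$'' must itself be defined through the local Smith--McMillan form and the blocks involved are not square. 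None of that is routine bookkeeping that can be waved through: it is where the entire theorem lives, which is presumably why the paper cites Rosenbrock instead of reproving the result.
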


\bigskip

\section{Non-symmetrical Algebraic Riccati Equations}\label{sec:ap_B}\medskip

We show here that the right stabilizing solutions of \eqref{Riccati}, as introduced in Definition~\ref{def:stab}, can be readily expressed and reliably computed by employing a particular invariant subspace (see Section 1.3 of \cite{genricc}) of the matrix
\begin{equation}\label{eq:A_plus}
	A^+ :=\begin{bmatrix} \phantom{-(}A_{11} +F_1\phantom{)}  & - A_{12} \\  -(A_{21} +F_2) &  \phantom{-}A_{22}  \end{bmatrix}.
\end{equation}
Let $\begin{bmatrix}V_1^\top &  V_2^\top \end{bmatrix}^\top$ have $w$ columns and full column rank, for some integer $w\in1:n$, with $V_1$ having $p$ rows and $V_2$ having $n-p$ rows. By point 2 of Theorem 1.3.2 in \cite{genricc}, we have that if
\begin{equation}\label{eq:invar}
	\begin{bmatrix} \phantom{-(}A_{11} +F_1\phantom{)}  & - A_{12} \\  -(A_{21} +F_2) &  \phantom{-}A_{22}  \end{bmatrix}\begin{bmatrix}V_1 \\  V_2 \end{bmatrix}=\begin{bmatrix}V_1 \\  V_2 \end{bmatrix} S
\end{equation}
for some $w\times w$ matrix $S$, then $\begin{bmatrix}V_1^\top &  V_2^\top \end{bmatrix}^\top$ is a basis that spans an invariant subspace of dimension $w$ belonging to $A^+$. By the same result, we also have that the collection (which allows for repeated elements) of $w$ eigenvalues belonging to $S$ is included in the collection (also allowing for repeated elements) of $n$ eigenvalues belonging to $A^+$.

\begin{remark}
	An invariant subspace of $A^+$ will be termed \emph{disconjugate} if it admits a basis $\begin{bmatrix}V_1^\top & V_2^\top \end{bmatrix}^\top$ with $V_1$ having full column rank. Moreover, the disconjugacy of an invariant subspace is \emph{independent} with respect to the choice of basis for it.
\end{remark}

\begin{definition}
	When $S$ has all its eigenvalues located in $\mathbb{S}$, the subspace being spanned by the basis from $\eqref{eq:invar}$ is called \emph{$\mathbb{S}$-invariant}.
\end{definition}

Using these notions, we now express the following result, which characterizes the right stabilizing solutions of CTNAREs.

\begin{proposition}[Corollary 4.2 in \cite{EJC}] \label{Ric}
	The CTNARE from \eqref{Riccati} has a right stabilizing solution $K$ if and only if $A^+$ has a disconjugate $\mathbb{S}$-invariant subspace of dimension $p$. In such cases, if we denote any basis for such a subspace as $\begin{bmatrix}V_1^\top &  V_2^\top \end{bmatrix}^\top$, then $V_1$ is invertible and $K=V_2V_1^{-1}$ is a right stabilizing solution of \eqref{Riccati}.
\end{proposition}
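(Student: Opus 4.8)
The plan is to exploit the standard correspondence between solutions of the CTNARE and invariant subspaces of the block matrix $A^+$ from \eqref{eq:A_plus}, exactly as the framework built around \eqref{eq:invar} suggests. The governing observation is that a matrix $K\in\mathbb{R}^{(n-p)\times p}$ solves \eqref{Riccati} precisely when the columns of $\begin{bmatrix} I_p \\ K \end{bmatrix}$ span an invariant subspace of $A^+$ whose associated matrix $S$ equals $A_{11}+F_1-A_{12}K$. Indeed, a direct multiplication gives
\begin{equation*}
	A^+\begin{bmatrix} I_p \\ K \end{bmatrix}=\begin{bmatrix} (A_{11}+F_1)-A_{12}K \\ -(A_{21}+F_2)+A_{22}K \end{bmatrix},
\end{equation*}
and setting $S:=(A_{11}+F_1)-A_{12}K$, the invariance identity $A^+\begin{bmatrix} I_p \\ K \end{bmatrix}=\begin{bmatrix} I_p \\ K \end{bmatrix}S$ holds in its top block automatically, while its bottom block reads $-(A_{21}+F_2)+A_{22}K=KS$, which is nothing other than \eqref{Riccati} after rearrangement. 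This single computation underpins both implications.

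For the forward direction, I would assume $K$ is a right stabilizing solution. Then $S=A_{11}+F_1-A_{12}K$ has all its eigenvalues in $\mathbb{S}$ by Definition~\ref{def:stab}, and the display above shows that $\begin{bmatrix} I_p \\ K \end{bmatrix}$ satisfies \eqref{eq:invar}. Since its top block $V_1=I_p$ is invertible and hence of full column rank, this basis spans a $p$-dimensional invariant subspace that is both disconjugate and $\mathbb{S}$-invariant, as required.

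For the converse, I would start from any basis $\begin{bmatrix} V_1^\top & V_2^\top \end{bmatrix}^\top$ of a $p$-dimensional disconjugate $\mathbb{S}$-invariant subspace, with $V_1\in\mathbb{R}^{p\times p}$ and $V_2\in\mathbb{R}^{(n-p)\times p}$. Disconjugacy forces $V_1$ to have full column rank, hence to be invertible, so I may post-multiply the relation \eqref{eq:invar} by $V_1^{-1}$ to obtain a normalized basis $\begin{bmatrix} I_p \\ K \end{bmatrix}$ with $K:=V_2V_1^{-1}$ spanning the same subspace. The transformed relation becomes $A^+\begin{bmatrix} I_p \\ K \end{bmatrix}=\begin{bmatrix} I_p \\ K \end{bmatrix}\widehat{S}$ with $\widehat{S}:=V_1SV_1^{-1}$. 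Reading off the top block identifies $\widehat{S}=A_{11}+F_1-A_{12}K$, while the bottom block is precisely \eqref{Riccati}; and since $\widehat{S}$ is similar to $S$, its spectrum stays inside $\mathbb{S}$, so $K=V_2V_1^{-1}$ is a right stabilizing solution.

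The main subtlety I would watch is the interplay between the similarity transformation and the spectral location: the eigenvalue constraint is stated for $A_{11}+F_1-A_{12}K$, whereas the invariant-subspace data only supplies the eigenvalues of $S$ (or of $\widehat{S}$). The point to verify carefully is that normalizing the basis turns $S$ into $\widehat{S}=V_1SV_1^{-1}$, a similar matrix whose eigenvalues coincide with those of $S$ and equal those of $A_{11}+F_1-A_{12}K$; this is exactly where disconjugacy, i.e. the invertibility of $V_1$, becomes indispensable. A secondary point is to confirm that disconjugacy and $\mathbb{S}$-invariance are basis-independent, so that the recovered $K$ does not depend on the particular basis chosen, as already flagged in the remark following \eqref{eq:invar}.
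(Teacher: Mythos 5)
Your proof is correct. Note that the paper itself does not prove Proposition~\ref{Ric} at all --- it imports the statement wholesale from Corollary 4.2 of the cited reference \cite{EJC} --- so your argument supplies a proof the paper delegates elsewhere. What you give is the standard graph-subspace correspondence for nonsymmetric Riccati equations: the single identity $A^+\begin{bmatrix} I_p \\ K \end{bmatrix}=\begin{bmatrix} I_p \\ K \end{bmatrix}\bigl(A_{11}+F_1-A_{12}K\bigr)$ being equivalent to \eqref{Riccati}, combined with the normalization $K=V_2V_1^{-1}$ enabled by disconjugacy and the similarity $\widehat{S}=V_1SV_1^{-1}$ preserving the spectrum inside $\mathbb{S}$. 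Both directions are airtight, and you correctly isolate the one place where disconjugacy (invertibility of the square matrix $V_1$) is indispensable; the only cosmetic addition one might make is the one-line check that $K$ is independent of the chosen basis, since any two bases differ by right-multiplication with an invertible $M$, which cancels in $V_2MM^{-1}V_1^{-1}$ --- but the statement as phrased only requires the conclusion for each basis separately, which you establish.
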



\begin{remark} \label{Cristian} Since we consider only LCFs over $\mathbb{S}$ with representations of type \eqref{Kontroller}, and since $A^+$ and the state matrix of the state-space realization from \eqref{Kontroller} are related by a similarity transformation given by $\footnotesize\begin{bmatrix}
		I_p&O\\O&-I_{n-p}
	\end{bmatrix}$, it follows that $A^+$ has all its eigenvalues located in $\mathbb{S}$. Therefore, all its invariant subspaces, such as those spanned by the eigenvectors of $A^+$, are actually $\mathbb{S}$-invariant. By freely placing the eigenvalues of $A^+$ in distinct points of $\mathbb{S}\cap \mathbb{R}$ (due to Assumption~\ref{Observability}), the $n\times n$ matrix $T_{eig}$ whose columns are chosen from the resulting real eigenvectors of $A^+$ (selecting one eigenvector for each distinct eigenvalue) will be invertible. Thus, $\begin{bmatrix}
		I_p&O
	\end{bmatrix}T_{eig}$ has full row rank or, equivalently, at least one $p\times p$ nonzero minor. Due to this minor existing, we may always choose those $p$ columns of $T_{eig}$ that correspond to (one of) the nonzero $p\times p$ minor(s) of $\begin{bmatrix}
		I_p&O
	\end{bmatrix}T_{eig}$ to obtain a basis for a disconjugate $\mathbb{S}$-invariant subspace of dimension $p$ belonging to $A^+$. Therefore, Assumption~\ref{Observability} ensures that the CTNARE from \eqref{Riccati} can always be made to have a (real) right stabilizing solution.\bigskip

	
\end{remark}

\section{Auxiliary Results}\label{sec:ap_C}\medskip

We discuss here a number of technical results, which will be employed in proving the main theorems presented in this paper. The first of these results gives necessary and sufficient conditions, under the assumptions made in Sections~\ref{adoua} and~\ref{sec:main}, for both of the TFMs that make up an SRTR pair to be stable.

\begin{lemma}\label{lem:SRTR_stab}
	The TFM $\begin{bmatrix}
		\mathbf{W}(\lambda)&\mathbf{V}(\lambda)
	\end{bmatrix}$, as given in \eqref{VWdef}, is stable if and only if $A_{22}+KA_{12}$ has all its eigenvalues located in $\mathbb{S}$.
\end{lemma}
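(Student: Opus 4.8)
The plan is to show that the realization displayed in \eqref{VWdef} is \emph{minimal}, since then the poles of $\begin{bmatrix}\mathbf{W}(\lambda)&\mathbf{V}(\lambda)\end{bmatrix}$ coincide exactly with the eigenvalues of its state matrix $A_{22}+KA_{12}$ (by the correspondence between poles and the pole pencil of a controllable-and-observable realization recalled in the preliminaries). Because SRTR pairs are proper by Definition~\ref{obiectu}, the TFM has no poles at infinity, so it is stable if and only if its finite poles all lie in $\mathbb{S}$, i.e.\ if and only if $\mathrm{spec}(A_{22}+KA_{12})\subseteq\mathbb{S}$. With minimality in hand, this yields both implications at once; without it, only the ``if'' direction is automatic (the poles of any realization are a subset of the spectrum of its state matrix), so the work is entirely in establishing controllability and observability of \eqref{VWdef}.

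First I would dispatch observability. The observability pencil of \eqref{VWdef} is $\bigl[\,(A_{22}+KA_{12}-\lambda I_{n-p})^\top\ \ A_{12}^\top\,\bigr]^\top$, and it factors as $\begin{bmatrix} I_{p} & K\\ O & I_{n-p}\end{bmatrix}$ acting on the left of $\bigl[\,(A_{22}-\lambda I_{n-p})^\top\ \ A_{12}^\top\,\bigr]^\top$. Wait --- more precisely, the invertible left factor $\begin{bmatrix} I & K\\ O & I\end{bmatrix}$ maps the latter pencil onto the former, so they have the same column rank at every $\lambda$. Since $(A_{12},A_{22})$ is observable by Remark~\ref{Leuenberger} (equivalently Assumption~\ref{Observability}), the PBH test passes for all $\lambda\in\mathbb{C}$, and the realization \eqref{VWdef} is observable. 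This is the routine step.

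The crux is controllability of the pair $\bigl(A_{22}+KA_{12},\,B_{\mathbf{W}}\bigr)$, where $B_{\mathbf{W}}:=\begin{bmatrix} KA_{11}-KA_{12}K+A_{21}-A_{22}K & KB_1+B_2\end{bmatrix}$. My plan here is to reduce it to the controllability of the \emph{full} realization \eqref{ss2a}-\eqref{ss2b}, which holds by Assumption~\ref{Controllability}. Starting from the $n$-row controllability pencil of the full system, I would left-multiply by the invertible $\begin{bmatrix} I_p & O\\ K & I_{n-p}\end{bmatrix}$ and right-multiply by a unimodular block matrix that subtracts $K$ times the middle block-column from the first; a short computation (with the $\lambda K$ terms cancelling) transforms the pencil so that its \emph{bottom} block-row becomes precisely $\begin{bmatrix} A_{22}+KA_{12}-\lambda I_{n-p} & B_{\mathbf{W}}\end{bmatrix}$, up to a harmless reordering of columns. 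Since these operations are invertible, the transformed pencil retains full row rank $n$ for every $\lambda$. Now, if some nonzero row vector $v$ annihilated $\begin{bmatrix} A_{22}+KA_{12}-\mu I_{n-p} & B_{\mathbf{W}}\end{bmatrix}$ at a point $\mu$, then $\begin{bmatrix} 0 & v\end{bmatrix}$ would be a nonzero left null vector of the full rank-$n$ transformed pencil at $\mu$ --- a contradiction. Hence the bottom block-row has full row rank at all $\lambda$, so $\bigl(A_{22}+KA_{12},B_{\mathbf{W}}\bigr)$ is controllable.

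Combining the two steps, \eqref{VWdef} is minimal, whence the poles of $\begin{bmatrix}\mathbf{W}(\lambda)&\mathbf{V}(\lambda)\end{bmatrix}$ are exactly the eigenvalues of $A_{22}+KA_{12}$; together with properness this closes the equivalence. I expect the controllability reduction to be the main obstacle: the observability argument is a one-line output-injection factorization, but controllability is not inherited by passing to a lower-order subsystem in any obvious way, and the payoff comes only after finding the correct pair of invertible pre/post-multiplications that isolate $\begin{bmatrix} A_{22}+KA_{12}-\lambda I_{n-p} & B_{\mathbf{W}}\end{bmatrix}$ inside the full controllability pencil.
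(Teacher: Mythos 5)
Your proposal is correct and takes essentially the same route as the paper's own proof: observability of \eqref{VWdef} follows from output-injection invariance of the observability of $(A_{12},A_{22})$ (Assumption~\ref{Observability} via Remark~\ref{Leuenberger}), and controllability follows by recognizing $\begin{bmatrix} A_{22}+KA_{12}-\lambda I_{n-p} & B_{\mathbf{W}}\end{bmatrix}$ as the bottom block-row of
$\begin{bmatrix}I_p&O\\K&I_{n-p}\end{bmatrix}\begin{bmatrix}A-\lambda I_n & B\end{bmatrix}\begin{bmatrix}I_p&O&O\\-K&I_{n-p}&O\\O&O&I_m\end{bmatrix}$,
which is exactly the paper's pencil \eqref{eq:ctrb}, with the same full-row-rank extraction argument. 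The only blemish is cosmetic: the identity blocks in your observability factor should be $I_{n-p}$ (top-left) and $I_p$ (bottom-right) rather than the reverse; with that fixed, minimality, and hence the stated equivalence, goes through exactly as in the paper.
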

\begin{proof}
	The sufficiency of $A_{22}+KA_{12}$ having all its eigenvalues located in $\mathbb{S}$ is clear, since this implies that the TFM from \eqref{VWdef} cannot have poles located outside of $\mathbb{S}$. For the necessity of this condition, we begin by showing that \eqref{VWdef} is a minimal realization. Notice first that the pair $(A_{12},A_{22}+KA_{12})$ is observable $\forall\ K\in\mathbb{R}^{(n-p)\times p}$ if and only if the pair $(A_{12},A_{22})$ is observable. The latter is ensured via Assumption~\ref{Observability} and via Remark~\ref{Leuenberger}. Moving on, define now the following pencil
	\begin{multline}\label{eq:ctrb}
		\left[\begin{array}{c}
			S_1(\lambda)\\\hdashline S_2(\lambda)
		\end{array}\right]:=\left[\footnotesize\begin{array}{ccc}
			A_{11}-A_{12}K-\lambda I_p & A_{12} &  B_1\\\hdashline
			K A_{11} - KA_{12}K + A_{21} -A_{22}K   \hspace{-1mm}& A_{22}+ K A_{12}  -\lambda I_{n-p}  \hspace{-1mm}& KB_1+B_2
		\end{array}\right]=\\=\begin{bmatrix}I_p&O\\K&I_{n-p}\end{bmatrix}\begin{bmatrix}
			A-\lambda I_n & B
		\end{bmatrix}\begin{bmatrix}
			I_p&O&O\\-K&I_{n-p}&O\\O&O&I_m
		\end{bmatrix},\normalsize
	\end{multline}
	and notice that, due to Assumption~\ref{Controllability}, $\begin{bmatrix}
		S_1^\top(\lambda) &\hspace{-2mm} S_2^\top(\lambda)
	\end{bmatrix}^\top$ has full row rank $\forall\lambda\in\mathbb{C}$ and $\forall K\in\mathbb{R}^{(n-p)\times p}$. Then, $S_2(\lambda)$ must share this property, which makes the realization from \eqref{VWdef} controllable. Since this realization is also observable, we get that it is minimal and that all the eigenvalues of $A_{22}+KA_{12}$ are the poles of its TFM. To be stable, these eigenvalues must be located in $\mathbb{S}$, thus proving the statement's necessity.
\end{proof}

The following result is a classical state-space derivation, originating from \cite{Nett} and generalized in \cite{Lucic} (see also Proposition 1 in \cite{Lucic}). It proves crucial in establishing the connections between stable SRTR pairs and LCFs over $\mathbb{S}$, in Sections~\ref{subsec:s2c} and~\ref{subsec:c2s}.

\begin{theorem}
	\label{SIMAX99}
	Let $\mathbf G(\lambda)\in\mathbb{R}^{p \times m}(\lambda)$ be a strictly proper TFM. All LCFs over $\mathbb{S}$ of $\mathbf G(\lambda) = \mathbf{M}^{-1}(\lambda )\mathbf{N}(\lambda)$, with $\begin{bmatrix} \mathbf M(\lambda) &  \mathbf N(\lambda) \end{bmatrix}$ having the same McMillan degree as $\mathbf{G}(\lambda)$, are given by
	\begin{equation}\label{dc2}
		\begin{bmatrix} \mathbf M(\lambda) &  \mathbf N(\lambda) \end{bmatrix}\normalsize
		= \left[\begin{array}{c|cc} A + FC - \lambda I_n & F & B \\ \hline
			UC & U & O \end{array}\right]\normalsize,
	\end{equation}
	where $A, B, C, F$ and $U$ are real matrices such that:
	
	\begin{enumerate}
		
		\item[$\mathbf{a)}$] $U$ is an invertible matrix in $\mathbb{R}^{p\times p}$;
		
		\item[$\mathbf{b)}$] $F$ is a matrix which ensures that $A+FC$ has all its eigenvalues located in $\mathbb{S}$;
		
		\item[$\mathbf{c)}$] $\mathbf G(\lambda) =  \left[\begin{array}{c|c}A - \lambda I_n & B \\ \hline C & O\end{array}\right]$ is both controllable and observable.
	\end{enumerate}

	
\end{theorem}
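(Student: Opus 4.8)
The plan is to prove the two inclusions hidden in the phrase ``all LCFs over $\mathbb{S}$ $\ldots$ are given by \eqref{dc2}''. For the \emph{sufficiency} direction I would fix matrices $A,B,C,F,U$ obeying $\mathbf{a)}$--$\mathbf{c)}$, build $\mathbf{M}(\lambda)$ and $\mathbf{N}(\lambda)$ from \eqref{dc2}, and check that the pair is an LCF over $\mathbb{S}$ of $\mathbf{G}(\lambda)$ with the stated McMillan degree. The first step is the factorization identity. Since the feedthrough $U$ is invertible, the standard state-space inversion formula gives $\mathbf{M}^{-1}(\lambda)=\left[\begin{array}{c|c}A-\lambda I_n & FU^{-1}\\\hline -C & U^{-1}\end{array}\right]$. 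I would then form the series product $\mathbf{M}^{-1}(\lambda)\mathbf{N}(\lambda)$, whose $2n$-dimensional realization has state matrix $\left[\begin{array}{cc}A & FC\\ O & A+FC\end{array}\right]$, input column $\left[\begin{array}{c}O\\ B\end{array}\right]$ and output row $\begin{bmatrix}-C & C\end{bmatrix}$, and apply the state similarity $T=\left[\begin{array}{cc}I_n & I_n\\ O & I_n\end{array}\right]$ (as in \eqref{similea}). This block-diagonalizes the state matrix to $\left[\begin{array}{cc}A & O\\ O & A+FC\end{array}\right]$ and makes the $A+FC$ copy unobservable, so deleting it leaves exactly $\left[\begin{array}{c|c}A-\lambda I_n & B\\\hline C & O\end{array}\right]=\mathbf{G}(\lambda)$.

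It then remains to certify coprimeness and the spectral properties. I would first argue that \eqref{dc2} is minimal of order $n$: observability of $(UC,A+FC)$ follows from observability of $(C,A)$ in $\mathbf{c)}$ via invariance under output injection and invertibility of $U$, while controllability of $\big(A+FC,\begin{bmatrix}F & B\end{bmatrix}\big)$ follows by writing $\begin{bmatrix}A+FC-\lambda I_n & F & B\end{bmatrix}=\begin{bmatrix}A-\lambda I_n & F & B\end{bmatrix}\,M$ with $M$ invertible, then using controllability of $(A,B)$. Minimality of order $n$ yields the matching McMillan degree, and invertibility of the square $\mathbf{M}(\lambda)$ gives full row normal rank for $\begin{bmatrix}\mathbf{M}(\lambda)&\mathbf{N}(\lambda)\end{bmatrix}$. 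The poles lie at the eigenvalues of $A+FC$, hence inside $\mathbb{S}$ by $\mathbf{b)}$. For the zeros I would reduce the system matrix $\begin{bmatrix}A+FC-\lambda I_n & F & B\\ UC & U & O\end{bmatrix}$ by subtracting $FU^{-1}$ times its second block-row from the first, obtaining $\begin{bmatrix}A-\lambda I_n & O & B\\ UC & U & O\end{bmatrix}$; a left-nullspace computation using invertibility of $U$ and controllability of $(A,B)$ shows this has full row rank for every finite $\lambda$, so there are no finite zeros, while $\begin{bmatrix}\mathbf{M}(\infty)&\mathbf{N}(\infty)\end{bmatrix}=\begin{bmatrix}U & O\end{bmatrix}$ has full row rank, ruling out zeros at infinity. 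Hence the pair is an LCF over $\mathbb{S}$.

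For the \emph{necessity} direction I would start from an arbitrary LCF over $\mathbb{S}$ whose $\begin{bmatrix}\mathbf{M}(\lambda)&\mathbf{N}(\lambda)\end{bmatrix}$ has McMillan degree $n$, and take a minimal realization $\big(A_M,\begin{bmatrix}B_{M1}&B_{M2}\end{bmatrix},C_M,\begin{bmatrix}D_{M1}&D_{M2}\end{bmatrix}\big)$ of order $n$. Strict properness of $\mathbf{G}(\lambda)$ together with $\mathbf{N}=\mathbf{M}\mathbf{G}$ forces $D_{M2}=O$, and the absence of infinite zeros for an LCF over $\mathbb{S}$ forces $\begin{bmatrix}D_{M1}&O\end{bmatrix}=\begin{bmatrix}\mathbf{M}(\infty)&\mathbf{N}(\infty)\end{bmatrix}$ to have full row rank, so $U:=D_{M1}$ is invertible, giving $\mathbf{a)}$. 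Repeating the inversion-plus-series computation of the first part on this realization produces, after the same similarity and truncation, a realization $\big(A_M-B_{M1}U^{-1}C_M,\,B_{M2},\,U^{-1}C_M,\,O\big)$ of $\mathbf{G}(\lambda)$ of order $n$, which is minimal precisely because $n$ equals the McMillan degree of $\mathbf{G}(\lambda)$. Setting $A:=A_M-B_{M1}U^{-1}C_M$, $C:=U^{-1}C_M$, $B:=B_{M2}$ and $F:=B_{M1}$ gives $C_M=UC$ and $A_M=A+FC$, so the chosen realization coincides verbatim with \eqref{dc2}; condition $\mathbf{c)}$ is the minimality just established, and $\mathbf{b)}$ holds since the eigenvalues of $A+FC=A_M$ are the stable poles of the LCF.

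The routine state-space inversion and series-connection algebra is mechanical; the genuinely delicate points I would treat most carefully are the two uses of the ``no zeros at infinity'' property: in sufficiency, to confirm that \eqref{dc2} has no infinite zeros, and in necessity, to deduce invertibility of $U=D_{M1}$, which is what makes the normalization of the feedthrough term possible. A secondary care point is justifying that the truncated $n$-th order realization of $\mathbf{G}(\lambda)$ obtained in the necessity part is actually minimal; this rests entirely on the matching-McMillan-degree hypothesis rather than on any direct controllability or observability check.
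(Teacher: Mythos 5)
Your proof is correct, but note that the paper itself never proves this statement: Theorem~\ref{SIMAX99} is imported as a known result, credited to \cite{Nett} and \cite{Lucic} (Proposition 1 therein), so there is no in-paper argument to compare against. What you have written is a self-contained proof of the cited result, and it holds up. In the sufficiency direction, the inversion formula for $\mathbf{M}$, the cascade realization of $\mathbf{M}^{-1}\mathbf{N}$, the block-diagonalizing similarity, the minimality argument (output injection plus invertibility of $U$ for observability; the factorization $\begin{bmatrix}A+FC-\lambda I_n & F & B\end{bmatrix}=\begin{bmatrix}A-\lambda I_n & F & B\end{bmatrix}M$ for controllability), and the zero analysis (constant row operation on the system matrix for finite $\lambda$, evaluation $\begin{bmatrix}U & O\end{bmatrix}$ at infinity) are all sound; your reliance on the equivalence between transmission zeros and rank drops of the system matrix is legitimate because you establish minimality first, and it is exactly what the paper's Appendix~\ref{sec:ap_A} machinery (Theorem~\ref{thm:iso}, Proposition~\ref{prop:fnr}) licenses. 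In the necessity direction, the two uses of the definition of an LCF over $\mathbb{S}$ -- properness from ``no poles outside $\mathbb{S}$'' and invertibility of $D_{M1}$ from ``no zeros at infinity'' plus full row normal rank -- are precisely the right hinges, and the matching-McMillan-degree hypothesis is correctly identified as what makes the truncated $n$-th order realization of $\mathbf{G}(\lambda)$ minimal. The only blemish is cosmetic: with your similarity $T=\left[\begin{smallmatrix}I_n & I_n\\ O & I_n\end{smallmatrix}\right]$ the truncated realization comes out as $(A,-B,-C,O)$ rather than $(A,B,C,O)$; the two sign flips cancel in the transfer matrix, so the conclusion $\mathbf{M}^{-1}(\lambda)\mathbf{N}(\lambda)=\mathbf{G}(\lambda)$ is unaffected, but the word ``exactly'' overstates it.
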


A direct consequence of the previous result is the following lemma, which will be employed in proving the main result of Section~\ref{subsec:s2c}.

\begin{lemma}\label{lem:indeed_coprime}
	Let $\mathbf{G}(\lambda)$ be given by a minimal realization of type \eqref{ss2a}-\eqref{ss2b} and let $A_x\in\mathbb{R}^{p\times p}$ have all its eigenvalues located in $\mathbb{S}$. Let both $B_x\in\mathbb{R}^{p\times p}$ and $C_x\in\mathbb{R}^{p\times p}$ be invertible and let $K\in\mathbb{R}^{(n-p)\times p}$ ensure that $A_{22}+KA_{12}$ has all its eigenvalues located in $\mathbb{S}$. Then, the system
	\begin{equation} \label{MN}
		\begin{bmatrix} \mathbf M(\lambda) & \hspace{-2mm}\mathbf N(\lambda) \end{bmatrix} =\left[\scriptsize\begin{array}{cc|cc} \hspace{-1mm}A_x  -\lambda I_{n-p}& B_x A_{12}  & A_xB_x -B_x A_{11} +B_xA_{12}K  & B_xB_1 \\
			O & \hspace{-2mm}A_{22}+KA_{12} -\lambda I_p & \hspace{-1mm}KA_{12}K+A_{22}K-KA_{11}-A_{21}  &\hspace{-1mm} KB_1+B_2 \\ 
			\hline C_x & O & C_xB_x  & O \\
		\end{array}\right] 
	\end{equation}
	designates an LCF over $\mathbb{S}$ of $\mathbf{G}(\lambda)=\mathbf{M}^{-1}(\lambda)\mathbf{N}(\lambda)$.
\end{lemma}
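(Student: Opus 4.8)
The plan is to show that the realization in \eqref{MN} fits the template of Theorem~\ref{SIMAX99}, up to a state-space equivalence transformation and a change of the output-normalizing matrix $U$. The key observation is that Theorem~\ref{SIMAX99} characterizes \emph{all} LCFs over $\mathbb{S}$ of $\mathbf{G}(\lambda)$ whose $\begin{bmatrix}\mathbf{M}(\lambda)&\mathbf{N}(\lambda)\end{bmatrix}$ has minimal McMillan degree, so it suffices to exhibit the realization in \eqref{MN} as one such parametrized factorization. First I would work with a minimal realization of type \eqref{ss2a}-\eqref{ss2b}, namely $A=\begin{bmatrix}A_{11}&A_{12}\\A_{21}&A_{22}\end{bmatrix}$, $B=\begin{bmatrix}B_1\\B_2\end{bmatrix}$ and $C=\begin{bmatrix}I_p&O\end{bmatrix}$, which is controllable and observable by hypothesis, thereby satisfying condition $\mathbf{c)}$ of Theorem~\ref{SIMAX99}.

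Next I would identify the injection matrix $F$ and the normalizer $U$. Choosing $F:=\begin{bmatrix}F_1\\F_2\end{bmatrix}$ with a suitable block structure and $U:=C_xB_x$ (which is invertible, since both $C_x$ and $B_x$ are, giving condition $\mathbf{a)}$), the template realization \eqref{dc2} becomes $\begin{bmatrix}A+FC-\lambda I_n & F & B\\ UC & U & O\end{bmatrix}$. The plan is then to produce an invertible state-similarity transformation $T=\begin{bmatrix}T_{11}&T_{12}\\T_{21}&T_{22}\end{bmatrix}$, built from $B_x$ and the right stabilizing matrix $K$ (concretely, a transformation of the form incorporating $\begin{bmatrix}B_x&O\\ *&I_{n-p}\end{bmatrix}$ together with the shear $\begin{bmatrix}I_p&O\\K&I_{n-p}\end{bmatrix}$ already used in \eqref{eq:ctrb}), that carries $(A+FC,F,B,UC,U)$ onto the explicit block form displayed in \eqref{MN}. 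Applying the equivalence rules \eqref{similea} to the state matrix, input matrices and output matrix should reproduce exactly the four blocks $A_x-\lambda I$, $A_{22}+KA_{12}-\lambda I$, the two right-hand columns $A_xB_x-B_xA_{11}+B_xA_{12}K$, $B_xB_1$ (and their lower counterparts), and the output blocks $C_x$, $C_xB_x$. I would verify these block identities by direct substitution, matching entries against the Riccati-type relation \eqref{Riccati} where needed.

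Finally, I would check that conditions $\mathbf{a)}$ and $\mathbf{b)}$ of Theorem~\ref{SIMAX99} hold for this realization. Condition $\mathbf{a)}$ is immediate from $U=C_xB_x$ being invertible. Condition $\mathbf{b)}$ requires that the transformed state matrix $A+FC$ have all eigenvalues in $\mathbb{S}$; after the transformation this matrix is block upper-triangular with diagonal blocks $A_x$ and $A_{22}+KA_{12}$, whose eigenvalues lie in $\mathbb{S}$ by the hypotheses on $A_x$ and on $K$. This is where Lemma~\ref{lem:SRTR_stab} is in effect: the stability requirement on $A_{22}+KA_{12}$ is precisely what guarantees the lower diagonal block is stable, while $A_x$ being stable by assumption handles the upper one. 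Since similarity preserves both the McMillan degree and the transfer function, the resulting factorization is an LCF over $\mathbb{S}$ of the same minimal degree, completing the argument.

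I expect the main obstacle to be the bookkeeping in constructing the explicit transformation $T$ and verifying that it simultaneously produces the correct off-diagonal zero block (the $O$ in the $(2,1)$ position of the state matrix in \eqref{MN}) and the correct mixed terms involving $B_x$, $K$ and $A_{12}$. The zeroing of that $(2,1)$ block is exactly where the right stabilizing property of $K$ through \eqref{Riccati} must be invoked, so aligning the chosen $F_1,F_2$ with the Riccati relation is the delicate step rather than a routine one.
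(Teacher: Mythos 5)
Your proposal follows essentially the same route as the paper's proof: apply the state equivalence transformation $T=\begin{bmatrix} B_x & O\\ K & I_{n-p}\end{bmatrix}$ to the minimal realization of $\mathbf{G}(\lambda)$, pick the output injection $F$ and the normalizer $U=C_xB_x$, and invoke Theorem~\ref{SIMAX99}; the block identities you plan to check by direct substitution do come out exactly as displayed in \eqref{MN}, and conditions $\mathbf{a)}$, $\mathbf{b)}$, $\mathbf{c)}$ are verified just as you describe.

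One correction to your closing paragraph, though: in this lemma the CTNARE \eqref{Riccati} is neither a hypothesis nor needed, so a proof that genuinely \emph{invoked} it would be appealing to an assumption the lemma does not make. Here $K$ is only required to render $A_{22}+KA_{12}$ stable, and $F$ is a completely free output-injection parameter; one simply \emph{defines} $F_2:=KA_{12}K+A_{22}K-KA_{11}-A_{21}$, which cancels the $(2,1)$ block of the transformed state matrix identically, and $F_1:=A_xB_x-B_xA_{11}+B_xA_{12}K$, which turns the $(1,1)$ block into $A_x$. The Riccati relation is a \emph{consequence} of this choice (it is precisely what Theorem~\ref{inversa} exploits when going in the converse direction, from a given LCF back to an SRTR pair), not an ingredient of it. So the step you flag as delicate is in fact routine bookkeeping; no alignment with \eqref{Riccati} is required. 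Likewise, Lemma~\ref{lem:SRTR_stab} need not be cited for stability of the lower diagonal block, since the stability of $A_{22}+KA_{12}$ is assumed outright in the lemma's statement.
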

\begin{proof}
	Let $\mathbf G(\lambda)$ be given by a minimal realization of type \eqref{ss2a}-\eqref{ss2b}, namely
	$$\mathbf G(\lambda)=\left[\begin{array}{cc|c}  A_{11} - \lambda I_p&  A_{12}  &B_1 \\ 
		A_{21}  & A_{22} - \lambda I_{n-p}   & B_2 \\ \hline
		I_p & O & O \\ \end{array}\right].$$
	We wish to retrieve \eqref{MN} by employing Theorem~\ref{SIMAX99}. Thus, we first apply a state equivalence transformation given by $\displaystyle T:=\footnotesize\begin{bmatrix} B_x & O \\ K & I_{n-p} \end{bmatrix}$ in order to obtain the minimal realization expressed as
	\begin{equation} \label{oooo2}
			\mathbf G(\lambda) = \left[\small\begin{array}{cc|c}B_x (A_{11}- A_{12} K)B_x^{-1}-\lambda I_p  & B_x A_{12}  & B_x B_1 \\ 
				\big( K A_{11} - KA_{12}K + A_{21} -A_{22}K  \big)B_x^{-1}  &KA_{12}+ A_{22} - \lambda I_{n-p}  & K B_1+B_2 \\  \hline
				B_x^{-1} & O & O \\ \end{array}\right].\normalsize
	\end{equation}
	Define now
	$
	F:=\scriptsize\begin{bmatrix}  A_xB_x - B_x A_{11} + B_xA_{12}K
		\\ KA_{12}K+A_{22}K-KA_{11}-A_{21} \end{bmatrix}$ and $U:=C_xB_x$. Since $B_x$ and $C_x$ are invertible, then so is $U$. Moreover, denote by $A_\mathbf{G}$ and $C_\mathbf{G}$ the state and, respectively, the output matrix of the realization from \eqref{oooo2}. Since $A_x$ and $A_{22}+KA_{12}$ have all their eigenvalues located in $\mathbb{S}$, we have that $A_\mathbf{G}+FC_\mathbf{G}=\scriptsize\begin{bmatrix}
		A_x  & B_x A_{12}\\
		O & A_{22}+KA_{12}
	\end{bmatrix}$ has all its eigenvalues located in $\mathbb{S}$. Since \eqref{oooo2} is minimal, points $\mathbf{a)}$, $\mathbf{b)}$ and $\mathbf{c)}$ of Theorem~\ref{SIMAX99} are satisfied and, thus, \eqref{MN} designates a left coprime factorization over $\mathbb{S}$ of $\mathbf{G}(\lambda)$.
\end{proof}\newpage

The final auxiliary result stated here will be crucial in proving Theorem~\ref{Main}.

\begin{lemma}\label{lem:SRTR_aux}
	Let $\mathbf{G}_1\in\mathbb{R}(\lambda)^{p\times p}$ and $\mathbf{G}_2\in\mathbb{R}(\lambda)^{p\times m}$ such that we can express the following realization of order $n-p>0$, given explicitly by
	\begin{equation}\label{eq:auxr1}
		\begin{bmatrix}\mathbf{G}_1(\lambda)&\mathbf{G}_2(\lambda)\end{bmatrix}=\left[\scriptsize\begin{array}{c|cc}{A}-\lambda I_{n-p}&{B}_1&{B}_2\\\hline{C}&{D}_1&{D}_2\end{array}\right].
	\end{equation}
	Then, we can obtain the following realization of order $n$ for the TFM defined as
	\begin{equation}\label{eq:auxr2}
		\overline{\mathbf{G}}(\lambda):=\big(I_p-\lambda^{-1}\mathbf{G}_1(\lambda)\big)^{-1}\big(\lambda^{-1}\mathbf{G}_2(\lambda)\big)=\left[\scriptsize\begin{array}{cc|c}\hspace{-1mm}{D}_1-\lambda I_{p}&{C}&{D}_2\\{B}_1&\hspace{-1mm}{A}-\lambda I_{n-p}\hspace{-1mm}&{B}_2\\\hline I_p&O&O\end{array}\right].
	\end{equation}
\end{lemma}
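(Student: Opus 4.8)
The plan is to reduce the claim to a routine state-space verification, after first simplifying the defining expression for $\overline{\mathbf{G}}(\lambda)$. First I would clear the $\lambda^{-1}$ factors: writing $I_p - \lambda^{-1}\mathbf{G}_1(\lambda) = \lambda^{-1}\big(\lambda I_p - \mathbf{G}_1(\lambda)\big)$, the definition collapses to $\overline{\mathbf{G}}(\lambda) = \big(\lambda I_p - \mathbf{G}_1(\lambda)\big)^{-1}\mathbf{G}_2(\lambda)$, which is precisely a relationship of the SRTR type from \eqref{WV}. Well-posedness is not an issue: since \eqref{eq:auxr1} is a state-space realization, $\mathbf{G}_1(\lambda)$ is proper, so $\lambda I_p - \mathbf{G}_1(\lambda)$ is improper with determinant not identically zero and hence invertible as a TFM.

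Next I would compute the TFM of the candidate realization in \eqref{eq:auxr2} directly. Denoting $M(\lambda) := \begin{bmatrix} \lambda I_p - D_1 & -C \\ -B_1 & \lambda I_{n-p} - A\end{bmatrix}$, the realization's TFM is $\begin{bmatrix} I_p & O\end{bmatrix}M^{-1}(\lambda)\begin{bmatrix} D_2^\top & B_2^\top\end{bmatrix}^\top$. I would set $\begin{bmatrix} x^\top & \eta^\top\end{bmatrix}^\top := M^{-1}(\lambda)\begin{bmatrix} D_2^\top & B_2^\top\end{bmatrix}^\top$ and solve the associated linear system block-wise: eliminating $\eta$ from the second block-row via $\eta = (\lambda I_{n-p} - A)^{-1}(B_2 + B_1 x)$ and substituting into the first yields $\big[\lambda I_p - D_1 - C(\lambda I_{n-p} - A)^{-1}B_1\big]x = D_2 + C(\lambda I_{n-p} - A)^{-1}B_2$. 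Recognizing the bracketed operator as $\lambda I_p - \mathbf{G}_1(\lambda)$ and the right-hand side as $\mathbf{G}_2(\lambda)$, I obtain $x = \big(\lambda I_p - \mathbf{G}_1(\lambda)\big)^{-1}\mathbf{G}_2(\lambda) = \overline{\mathbf{G}}(\lambda)$. Since the output matrix selects precisely the $x$-block, the realization's TFM equals $\overline{\mathbf{G}}(\lambda)$, as desired.

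Equivalently --- and perhaps more in keeping with the paper's viewpoint --- I could argue by state augmentation: interpret \eqref{eq:auxr1} as the map $z = \mathbf{G}_1 w_1 + \mathbf{G}_2 w_2$ with internal state $\xi$ of dimension $n-p$, impose the feedback $w_1 = y$, $w_2 = u$ together with $\sigma y = z$ (mirroring \eqref{WVin5}), and read off the equations $\sigma y = D_1 y + C\xi + D_2 u$ and $\sigma\xi = B_1 y + A\xi + B_2 u$ with output $y$. Collecting $\begin{bmatrix} y^\top & \xi^\top\end{bmatrix}^\top$ as the order-$n$ state then produces exactly \eqref{eq:auxr2}.

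I do not anticipate a genuine obstacle here, as the result is a direct computation. The only points warranting care are the invertibility of $\lambda I_p - \mathbf{G}_1(\lambda)$ noted above and the bookkeeping of the block partition when eliminating $\eta$; I would also emphasize that the statement asserts only equality of transfer-function matrices, so that the order-$n$ realization in \eqref{eq:auxr2} need not be minimal and no controllability or observability argument is required.
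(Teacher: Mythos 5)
Your proof is correct, but it takes a genuinely different route from the paper's. The paper proceeds \emph{constructively}: it composes the realization $\tfrac{1}{\lambda}I_p=\left[\begin{array}{c|c}-\lambda I_p&I_p\\\hline I_p&O\end{array}\right]$ with \eqref{eq:auxr1} via series-connection and inversion formulas to build an intermediate realization of order $2n$ for $\big(I_p-\lambda^{-1}\mathbf{G}_1\big)^{-1}\big(\lambda^{-1}\mathbf{G}_2\big)$, then applies a state equivalence transformation $T_o$ and deletes $n$ unobservable state components to arrive at \eqref{eq:auxr2}. You instead \emph{verify} the stated realization directly: after noting $\big(I_p-\lambda^{-1}\mathbf{G}_1(\lambda)\big)^{-1}\big(\lambda^{-1}\mathbf{G}_2(\lambda)\big)=\big(\lambda I_p-\mathbf{G}_1(\lambda)\big)^{-1}\mathbf{G}_2(\lambda)$, you compute the transfer matrix of \eqref{eq:auxr2} by block elimination on the pole pencil
\begin{equation*}
M(\lambda)=\begin{bmatrix}\lambda I_p-D_1 & -C\\ -B_1 & \lambda I_{n-p}-A\end{bmatrix},
\end{equation*}
recognize the Schur complement $\lambda I_p-D_1-C(\lambda I_{n-p}-A)^{-1}B_1$ as $\lambda I_p-\mathbf{G}_1(\lambda)$ and the eliminated right-hand side as $\mathbf{G}_2(\lambda)$, and conclude. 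Since the lemma hands you the target realization explicitly, this verification settles exactly what is claimed, and your attention to the two genuine technical points (invertibility of $\lambda I_p-\mathbf{G}_1(\lambda)$ as a rational matrix, which follows from properness of $\mathbf{G}_1$, and the fact that no minimality assertion is being made) is appropriate. What each approach buys: yours is shorter and more elementary, avoiding the $2n$-dimensional intermediate object and the bookkeeping of identifying which states are unobservable; the paper's construction, by contrast, shows how the realization is \emph{discovered} rather than merely checked, and it stays within the same toolkit of realization-composition and unobservable-mode elimination that recurs in the proofs of Theorems 3.10 and 4.13, which gives the appendix a uniform style. Your secondary state-augmentation sketch (imposing $\sigma y = z$ and reading off the equations) mirrors the derivation of \eqref{WVin5} in Section 3.1 and is also sound, provided one again invokes the same invertibility fact to justify that the interconnection's transfer matrix is $\overline{\mathbf{G}}(\lambda)$.
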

\begin{proof}
	By first employing the realization $\frac{1}{\lambda}I_p=\left[\scriptsize\begin{array}{c|c}-\lambda I_p&I_p\\\hline I_p&O\end{array}\right]$ alongside the one given in \eqref{eq:auxr1}, routine state-space-based computations yield the following identity
	\begin{align}\label{eq:aux1}
		\overline{\mathbf{G}}(\lambda)=&\ \big(I_p-\lambda^{-1}\mathbf{W}(\lambda)\big)^{-1}\big(\lambda^{-1}\mathbf{V}(\lambda)\big)\\=&\left(I_p-\left[\footnotesize\begin{array}{c|c}
			-\lambda I_{p}&I_p\\\hline I_p&O
		\end{array}\right]\left[\footnotesize\begin{array}{c|c}
			{A}-\lambda I_{n-p}&{B}_1\\\hline{C}&{D}_1
		\end{array}\right]\right)^{-1}\left[\footnotesize\begin{array}{c|c}
			-\lambda I_{p}&I_p\\\hline I_p&O
		\end{array}\right]\left[\footnotesize\begin{array}{c|c}
			{A}-\lambda I_{n-p}&{B}_2\\\hline{C}&{D}_2
		\end{array}\right]\nonumber\\
		=&\footnotesize\left[\begin{array}{cc|c}
			-\lambda I_{p}&{C}&{D}_1\\
			O&{A}-\lambda I_{n-p}&{B}_1\\\hline
			-I_p&O&I_p
		\end{array}\right]^{-1}\hspace{-1mm}\footnotesize\left[\begin{array}{cc|c}
			-\lambda I_{p}&{C}&{D}_1\\
			O&{A}-\lambda I_{n-p}&{B}_1\\\hline
			I_p&O&O
		\end{array}\right]\nonumber\\=&\scriptsize\left[\begin{array}{cccc|c}{D}_1-\lambda I_{p}&{C}&{D}_1&O&O\\{B}_1&{A}-\lambda I_{n-p}&{B}_1&O&O\\
			O&O&-\lambda I_p&{C}&{D}_2\\
			O&O&O&{A}-\lambda I_{n-p}&{B}_2\\\hline I_p&O&I_p&O&O\end{array}\right].\nonumber
	\end{align}
	Applying now a state equivalence transformation $T_o:=\tiny\begin{bmatrix}
		I_p&O&I_p&O\\O&I_{n-p}&O&I_{n-p}\\O&O&I_p&O\\O&O&O&I_{n-p}
	\end{bmatrix}$ to the state-space realization of order $2n$ from \eqref{eq:aux1}, we obtain the fact that
	\begin{align}\label{eq:aux2}
		\overline{\mathbf{G}}(\lambda)
		=\hspace{-0.5mm}\left[\scriptsize\begin{array}{cccc|c}\hspace{-1mm}{D}_1-\lambda I_{p}&{C}&O&O&{D}_2\\{B}_1&{A}-\lambda I_{n-p}&O&O&{B}_2\\
			O&O&-\lambda I_p&{C}&{D}_2\\
			O&O&O&{A}-\lambda I_{n-p}&{B}_2\\\hline I_p&O&O&O&O\end{array}\right].
	\end{align}
	Finally, after eliminating the last $n$ unobservable components from the state vector of the realization from \eqref{eq:aux2}, we retrieve the realization of $\overline{\mathbf{G}}(\lambda)$ given in \eqref{eq:auxr2}.
\end{proof}

\section{Proofs of the main results}\label{sec:ap_D}\medskip

{\bf Proof of Theorem~\ref{Main}} We begin by stating that, throughout this entire proof, the TFM of the system given by \eqref{ss2a}-\eqref{ss2b} will be denoted by $\mathbf{G}(\lambda)$, just as in Definition~\ref{obiectu}. In order to prove point $\mathbf{a)}$, we begin by first defining the following matrix $A_K:=K A_{11} - KA_{12}K + A_{21} -A_{22}K$ and by employing Lemma~\ref{lem:SRTR_aux} for the $\big( \mathbf W(\lambda), \mathbf V(\lambda) \big)$ pair along with its associated realization given in \eqref{VWdef}, to get that
\begin{equation}\label{eq:idGt}
	\overline{\mathbf{G}}(\lambda):=\big(I_p-\lambda^{-1}\mathbf{W}(\lambda)\big)^{-1}\big(\lambda^{-1}\mathbf{V}(\lambda)\big)\hspace{-0.5mm}=\hspace{-0.5mm}\left[\tiny\begin{array}{cc|c}\hspace{-1mm}A_{11}-A_{12}K-\lambda I_{p}&A_{12}&B_1\\A_K&\hspace{-1mm}A_{22}+KA_{12}-\lambda I_{n-p}\hspace{-1mm}&\hspace{-1mm}KB_1+B_2\hspace{-1mm}\\\hline I_p&O&O\end{array}\right],
\end{equation}
and by applying a state equivalence transformation given by $\widetilde{T}:=\scriptsize\begin{bmatrix}I_p&O\\-K&I_{n-p}\end{bmatrix}$ to the realization in \eqref{eq:idGt}, we obtain precisely the realization from \eqref{ss2a}-\eqref{ss2b}. Therefore
\begin{equation}\label{eq:idGGt}
	\mathbf{G}(\lambda)=\overline{\mathbf{G}}(\lambda)=\big(I_p-\lambda^{-1}\mathbf{W}(\lambda)\big)^{-1}\big(\lambda^{-1}\mathbf{V}(\lambda)\big)=\big(\lambda I_p-\mathbf{W}(\lambda)\big)^{-1}\mathbf{V}(\lambda).
\end{equation}

In addition to the equality from \eqref{eq:idGGt}, notice that $\mathbf W(\lambda)$ and $\mathbf V(\lambda)$ are expressed via state-space realizations, which means that they are guaranteed to be proper TFMs. Moreover, from the expression of $\begin{bmatrix}
	\mathbf W(\lambda)&\mathbf V(\lambda)
\end{bmatrix}$ given in \eqref{VWdef}, we get that the latter TFM's McMillan degree is at most $n-p$. thus, by Definition~\ref{obiectu}, any $\displaystyle \big( \mathbf W(\lambda), \mathbf V(\lambda) \big)$ obtained via \eqref{VWdef} is indeed an SRTR pair of the system from the result's statement.

To prove point $\mathbf{b)}$, we first show that minimality is necessary for \eqref{VWdef} to produce, given the plant's realization, every SRTR pair associated with the latter realization and having a McMillan degree of at most $n-p$. Let us assume that this is not the case and consider the system described by the realization of type \eqref{ss2a}-\eqref{ss2b}
\begin{equation*}
	\mathbf{G}(\lambda)=\left[\scriptsize\begin{array}{cc|c}
		-1-\lambda&1&1\\0&-1-\lambda&0\\\hline 1&0&0
	\end{array}\right],
\end{equation*}
which is observable (as per Assumption~\ref{Leuenberger}), but not controllable, and for which \eqref{VWdef} produces the class of SRTR pairs
\begin{align}
	\begin{bmatrix}
		\mathbf W(\lambda) & \mathbf V(\lambda)
	\end{bmatrix}= &\left[\footnotesize\begin{array}{c|cc}
		k-1&-k^2&k\\\hline 1&-1-k&1 \end{array}\right]=\begin{bmatrix}
		-\frac{(k+1)\lambda+1}{\lambda+1-k}&\ \frac{\lambda+1}{\lambda+1-k}
	\end{bmatrix},\ \forall\ k\in\mathbb{R}.\label{eq:clk}
\end{align}
Note also that $\begin{bmatrix}
	\overline{\mathbf W}(\lambda) & \overline{\mathbf V}(\lambda)
\end{bmatrix}=\begin{bmatrix}
	-2&
	\frac{\lambda+2}{\lambda+1}
\end{bmatrix}$ forms an SRTR pair of the same system having a McMillan degree equal to $1$ and that this representation cannot be obtained for any real $k$ in \eqref{eq:clk}. This contradiction shows, therefore, that minimality of the realization from \eqref{ss2a}-\eqref{ss2b} is necessary for \eqref{VWdef} to characterize the desired class.

To show the sufficiency of starting from a system given by minimal realization \eqref{ss2a}-\eqref{ss2b} of order $n$, having $\mathbf{G}(\lambda)$ as its TFM, we will first consider an arbitrary SRTR pair $\displaystyle \big( \mathbf W(\lambda), \mathbf V(\lambda) \big)$ associated with the aforementioned system. Thus, by Definition~\ref{obiectu}, we can always obtain a realization of order $n-p$ for the following TFM\vspace{-1mm}
\begin{equation}\label{eq:SRTR_init}
	\begin{bmatrix}\mathbf{W}(\lambda)&\mathbf{V}(\lambda)\end{bmatrix}=\left[\scriptsize\begin{array}{c|cc}\overline{A}-\lambda I_{n-p}&\overline{B}_1&\overline{B}_2\\\hline\vspace{-2mm}\\\overline{C}&\overline{D}_1&\overline{D}_2\end{array}\right],\vspace{-1mm}
\end{equation}
which we will employ, in order to obtain a minimal realization of $\mathbf{G}(\lambda)$. Since \eqref{WV} holds, then $\mathbf{G}(\lambda)=\big(I_p-\lambda^{-1}\mathbf{W}(\lambda)\big)^{-1}\big(\lambda^{-1}\mathbf{V}(\lambda)\big)$ and, by employing Lemma~\ref{lem:SRTR_aux} for the $\big( \mathbf W(\lambda), \mathbf V(\lambda) \big)$ pair along with its associated realization from \eqref{eq:SRTR_init}, we get that
\begin{equation}\label{eq:idGr}
	\mathbf{G}(\lambda)
	=\left[\scriptsize\begin{array}{cc|c}\overline{D}_1-\lambda I_{p}&\overline{C}&\overline{D}_2\\\overline{B}_1&\overline{A}-\lambda I_{n-p}&\overline{B}_2\\\hline I_p&O&O\end{array}\right].
\end{equation}
Notice that the obtained realization from \eqref{eq:idGr} has the same order as the McMillan degree of $\mathbf{G}(\lambda)$, making it minimal. We proceed to exploit the minimality of this realization and its particular structure, in order to show that \eqref{eq:SRTR_init} is related to \eqref{VWdef} by a state equivalence transformation, for some $K\in\mathbb{R}^{(n-p)\times p}$.

Consider all matrices $\overline{T}:=\scriptsize\begin{bmatrix}
	I_p&O\\LK&L
\end{bmatrix}$, where $K$ is arbitrary in $\mathbb{R}^{(n-p)\times p}$ and $L$ is any invertible matrix in $\mathbb{R}^{(n-p)\times(n-p)}$. Then, $\overline{T}^{-1}=\scriptsize\begin{bmatrix}
	I_p&O\\-K&L^{-1}
\end{bmatrix}$ represents the class of all invertible matrices in $\mathbb{R}^{n\times n}$ having the first block-row equal to $\begin{bmatrix}
	I_p&O
\end{bmatrix}$. Thus, we get that $\begin{bmatrix}
	I_p&O
\end{bmatrix}=\begin{bmatrix}
	I_p&O
\end{bmatrix}\overline{T}^{-1}$ if and only if $\overline{T}^{-1}$ has the indicated structure. Starting the minimal realization of type \eqref{ss2a}-\eqref{ss2b} from the result's statement and by applying the state-equivalence transformation given by $\overline{T}$, we obtain the class of all minimal realizations of $\mathbf{G}(\lambda)$ having an output matrix equal to $\begin{bmatrix}
	I_p&O
\end{bmatrix}$, given by
	\begin{equation}\label{eq:all_Gs}
		\mathbf{G}(\lambda)=\left[\scriptsize\begin{array}{cc|c}
			A_{11}-A_{12}K-\lambda I_p&A_{12}L^{-1}&B_1\\
			\hspace{-1mm}L(K A_{11} - KA_{12}K + A_{21} -A_{22}K  ) & L(A_{22}+ K A_{12})L^{-1}  -\lambda I_{n-p}&L(KB_1+B_2)\\\hline
			I_p&O&O
		\end{array}\right].
	\end{equation}

Thus, there must exist a $K$ and an invertible $L$ such that the rightmost term from \eqref{eq:idGr} coincides with \eqref{eq:all_Gs}, which makes \eqref{eq:SRTR_init} coincide with the following realization\vspace{1mm}
\begin{equation}\label{eq:all_SRTRs}
	\scriptsize\begin{bmatrix}\mathbf{W}(\lambda)&\hspace{-2mm}\mathbf{V}(\lambda)\end{bmatrix}= \left[\scriptsize\begin{array}{c|cc}
		\hspace{-1mm}L(A_{22}+ K A_{12})L^{-1}  -\lambda I_{n-p}  & L(K A_{11} - KA_{12}K + A_{21} -A_{22}K  ) &  \hspace{-2mm}L(KB_1+B_2)  \\ 
		\hline  A_{12}L^{-1}   & A_{11} - A_{12}K & B_1  \\ \end{array}\right]\normalsize\hspace{-1mm}.
\end{equation}
Apply now a state equivalence transformation $T_L:=L^{-1}$ to \eqref{eq:all_SRTRs} in order to conclude that, when starting from a minimal realization of type \eqref{ss2a}-\eqref{ss2b} of $\mathbf{G}(\lambda)$, there must exist a $K\in\mathbb{R}^{(n-p)\times p}$ which allows $\begin{bmatrix}\mathbf{W}(\lambda)&\mathbf{V}(\lambda)\end{bmatrix}$ to be expressed via \eqref{VWdef}. 



\medskip



{\bf Proof of Theorem~\ref{CupruMin}} 
In order to prove our statement, we first require a Rosenbrock-type realization of $\begin{bmatrix} \lambda I_p -\mathbf W(\lambda) &  \mathbf V(\lambda) \end{bmatrix}$. Obtain first a state-space realization of $\begin{bmatrix} -\mathbf W(\lambda) &  \mathbf V(\lambda) \end{bmatrix}$ by using \eqref{VWdef} and multiplying its input and feedthrough matrices to the right with $\footnotesize\begin{bmatrix}
	-I_p&O\\O&I_m
\end{bmatrix}$. Write now that $\begin{bmatrix}
	\lambda I_p& O
\end{bmatrix}=\left[\footnotesize\begin{array}{c|cc}
	I_p&-\lambda I_p&O\\\hline I_p&O&O
\end{array}\right]$ and, by expressing the parallel connection formula between this Rosenbrock-type representation and the computed realization of $\begin{bmatrix} -\mathbf W(\lambda) &  \mathbf V(\lambda) \end{bmatrix}$, use it to form the sought-after Rosenbrock-type realization of $\begin{bmatrix} \lambda I_p -\mathbf W(\lambda) &  \mathbf V(\lambda) \end{bmatrix}$, given explicitly by
\begin{multline} \label{coprime}
		\begin{bmatrix} \lambda I_p-\mathbf W(\lambda) &\hspace{-2mm} \mathbf V(\lambda) \end{bmatrix} = \left[\footnotesize\begin{array}{c|c}
			\widetilde{A}-\lambda\widetilde{E}&\widetilde{B}-\lambda\widetilde{F}  \\\hline
			\widetilde{C}&\widetilde{D} 
		\end{array}\right]:= \\
		 :=\left[\footnotesize\begin{array}{cc|cc} \hspace{-2mm}A_{22}+ K A_{12}  -\lambda I_{n-p} & O  & \hspace{-1mm}KA_{12}K + A_{22}K -K A_{11} - A_{21} & \hspace{-2mm}KB_1+B_2 \hspace{-1mm} \\
			O & I_p &  - \lambda I_p  & O \\ 
			\hline A_{12} & I_p & -A_{11} + A_{12}K  & B_1 \\
		\end{array}\right]\hspace{-1mm}.\normalsize
\end{multline}

If the realization from \eqref{coprime} is irreducible then, by Theorem~\ref{thm:iso}, the finite zeros of $\begin{bmatrix} \lambda I_p -\mathbf W(\lambda) &  \mathbf V(\lambda) \end{bmatrix}$ are the finite zeros of the realization's system matrix, namely
	\begin{equation} \label{mare}
	\mathcal{S}(\lambda) := \scriptsize\begin{bmatrix} A_{22}+ K A_{12}  -\lambda I_{n-p}  & O  & KA_{12}K + A_{22}K -K A_{11} - A_{21}   & KB_1+B_2 \\
		O & I_p & - \lambda I_p  & O \\ 
		A_{12} & I_p &  -A_{11} + A_{12}K  & B_1 \\
	\end{bmatrix}.\normalsize\vspace{-1mm}
\end{equation}
Additionally, the infinite zeros of $\begin{bmatrix} \lambda I_p -\mathbf W(\lambda) &  \mathbf V(\lambda) \end{bmatrix}$ can be deduced by using an irreducible Rosenbrock-type realization of $\begin{bmatrix}
	\frac{1}{\lambda}I_p-\mathbf{W}\left(\frac{1}{\lambda}\right)&\mathbf{V}\left(\frac{1}{\lambda}\right)
\end{bmatrix}$ and by computing the latter TFM's zeros at $\lambda=0$.

Thus, we divide the proof into two distinct parts. In part {\bf I)}, we show that the realization from \eqref{coprime} is indeed irreducible and that $\mathcal{S}(\lambda)$ from \eqref{mare} has full row rank $\forall\lambda\in\mathbb{C}$. In part {\bf II)}, we compute an irreducible Rosenbrock-type realization for $\begin{bmatrix}
	\frac{1}{\lambda} I_p-\mathbf{W}\left(\frac{1}{\lambda}\right)&\mathbf{V}\left(\frac{1}{\lambda}\right)
\end{bmatrix}$ and show that its own system matrix has full row rank at $\lambda=0$. Therefore, $\begin{bmatrix} \lambda I_p -\mathbf W(\lambda) &  \mathbf V(\lambda) \end{bmatrix}$ must have no finite or infinite zeros, and it must also have full row normal rank, due to Proposition~\ref{prop:fnr}.

{\bf I)} Due to Assumption~\ref{Observability} and Remark~\ref{Leuenberger}, notice that\vspace{-1mm}
\begin{equation*}
	\footnotesize\begin{bmatrix}
		\widetilde{A}-\lambda\widetilde{E}\\\widetilde{C}
	\end{bmatrix}=\footnotesize\begin{bmatrix} A_{22}+ K A_{12}  -\lambda I_{n-p}  & O \\
		O  & I_p \\
		A_{12} & I_p \end{bmatrix} \normalsize   =\footnotesize\begin{bmatrix} I_{n-p} & -K & K \\ O & I_p & O \\ O & O & I_p  \end{bmatrix} \footnotesize\begin{bmatrix} A_{22}  -\lambda I_{n-p}  & O \\
		O  & I_p \\
		A_{12} & I_p \end{bmatrix}\vspace{-1mm}
\end{equation*}
has full column rank $\forall\lambda\in\mathbb{C}$. Moreover, due to Assumption~\ref{Controllability}, note also that the pencil from the right-hand term of the following equality \vspace{-1mm}
\begin{equation*} 
	\begin{array}{c}
		
		\footnotesize\begin{bmatrix} A_{22}+ K A_{12}  -\lambda I_{n-p}  & O  & KA_{12}K + A_{22}K -K A_{11} - A_{21}   & KB_1+B_2 \\
			O & I_p &   - \lambda I_p  & O \end{bmatrix}
		\normalsize=\\

		\footnotesize\begin{bmatrix}
			K&I_{n-p}\\-I_p&O
		\end{bmatrix} \small\begin{bmatrix} \lambda I_p -A_{11} & - A_{12} & B_1 & -I_p\\
			-A_{21}  &\lambda I_{n-p} - A_{22}  & B_2  & K\\  \end{bmatrix}\scriptsize\begin{bmatrix}
			O&O&I_p&O\\
			-I_{n-p}&O&-K&O\\
			O&O&O&I_m\\
			A_{12}&I_p&A_{12}K-A_{11}&B_1\\
		\end{bmatrix} 
	\end{array}
\end{equation*}\vspace{-1mm}
has full row rank $\forall\lambda\in\mathbb{C}$, which means that so does $\small\begin{bmatrix}
	\widetilde{A}-\lambda\widetilde{E}&\widetilde{B}-\lambda\widetilde{F}
\end{bmatrix}$.

By Definition~\ref{def:ired}, the realization from \eqref{coprime} is irreducible and we now inspect its finite zeros. It follows from the following identity
\begin{equation}\label{eq:huge}
	\begin{array}{c}
		\footnotesize\begin{bmatrix} A_{22}+ K A_{12}  -\lambda I_{n-p}& O  & KA_{12}K + A_{22}K -K A_{11} - A_{21} & KB_1+B_2 \\
			O & I_p & - \lambda I_p  & O \\ 
			A_{12} & I_p & -A_{11} + A_{12}K  & B_1 \\
		\end{bmatrix}\normalsize  =\\

		\footnotesize\begin{bmatrix}
			K&I_{n-p}&O\\
			O&O&I_p\\
			I_p&O&I_p
		\end{bmatrix}\footnotesize\begin{bmatrix}  \lambda I_p - A_{11} &  -A_{12} & B_1 &O \\
			-A_{21}  &  \lambda I_{n-p} - A_{22}    & B_2 &O \\
			-\lambda I_p & O & O & I_p \end{bmatrix}\footnotesize\begin{bmatrix}
			O&O&I_p&O\\
			-I_{n-p}&O&-K&O\\
			O&O&O&I_m\\
			O&I_p&O&O
		\end{bmatrix}\normalsize
	\end{array}
\end{equation}
that $\mathcal{S}(\lambda)$ has full row rank $\forall\lambda\in\mathbb{C}$ since so does the pencil from the right-hand term in \eqref{eq:huge}, once again due to Assumption~\ref{Controllability}. We conclude, by Proposition~\ref{prop:fnr}, that $\begin{bmatrix} \lambda I_p -\mathbf W(\lambda) &  \mathbf V(\lambda) \end{bmatrix}$ has full row normal rank and, by Theorem~\ref{thm:iso}, no finite zeros.

{\bf II)} 
Recall the Rosenbrock-type realization from \eqref{coprime} to get
\begin{equation}\label{eq:lcf_mu}
{\begin{bmatrix}
	\frac{1}{\lambda}I_p-\mathbf{W}\left(\frac{1}{\lambda}\right)&\mathbf{V}\left(\frac{1}{\lambda}\right)
\end{bmatrix}}=\widetilde{C}(\lambda\widetilde{A}-\widetilde{E})^{-1}(\widetilde{F}-\lambda\widetilde{B})+\widetilde{D}.
\end{equation}
Remark that, since the realization given by \eqref{coprime} is irreducible, then we have that $\begin{bmatrix}
\widetilde{E}-\lambda\widetilde{A}&\widetilde{F}-\lambda\widetilde{B}
\end{bmatrix}$ and $\begin{bmatrix}
\widetilde{E}^\top-\lambda\widetilde{A}^\top&\widetilde{C}^\top
\end{bmatrix}^\top$ have full row and, respectively, column rank $\forall\lambda\in\mathbb{C}\backslash\{0\}$. It is straightforward to check that
$
\begin{bmatrix}
\widetilde{E}&\widetilde{F}
\end{bmatrix}=\footnotesize\begin{bmatrix}
I_{n-p}&O&O&O\\O&O&I_p&O
\end{bmatrix}
$
has full row rank, while
$
\begin{bmatrix}
\widetilde{E}^\top&\widetilde{C}^\top
\end{bmatrix}^\top=\footnotesize\begin{bmatrix}
I_{n-p}&O&A_{12}^\top\\O&O&I_p
\end{bmatrix}^\top
$
has full column rank. Therefore, we conclude that \eqref{eq:lcf_mu} is an irreducible Rosebrock-type realization and we proceed to define the associated system matrix
$\widetilde{\mathcal{S}}(\lambda):=\small\begin{bmatrix}
\widetilde{E}-\lambda\widetilde{A}&\widetilde{F}-\lambda\widetilde{B}\\\widetilde{C}&\widetilde{D}
\end{bmatrix}$ in order to investigate the latter's zeros at $\lambda=0$. Notice that 
$
\normalsize\widetilde{\mathcal{S}}(0)=\scriptsize\begin{bmatrix}
I_{n-p}&O&O&O\\O&O&I_p&O\\A_{12}&I_p&-A_{11}+A_{12}K&B_1
\end{bmatrix}
$
has full row rank and, thus, it follows that \eqref{eq:lcf_mu} has no zeros at $\lambda=0$, which implies that $\begin{bmatrix} \lambda I_p -\mathbf W(\lambda) &  \mathbf V(\lambda) \end{bmatrix}$ does not have any zeros at infinity.

By the conclusions drawn in parts {\bf I)} and {\bf II)} of the proof, we get the fact that $\begin{bmatrix} \lambda I_p -\mathbf W(\lambda) &  \mathbf V(\lambda) \end{bmatrix}$ has full row normal rank along with no zeros, be they finite or infinite. Then, it follows that this compound TFM designates an FLCF of $\mathbf{G}(\lambda)$. \medskip

{\bf Proof of Theorem~\ref{step2}} 
Point $\mathbf{a})$ boils down to the displacement of poles at infinity, via the procedure from Theorem 3.1 of \cite{SIMAX}. We will apply this result using only state-space-based arguments, to avoid using the \emph{pencil} realizations from \cite{SIMAX}. 

Begin by recalling Theorem~\ref{Main} to conclude that the given SRTR pair must be expressed through a realization of type \eqref{VWdef} for some $K$ and a minimal realization of type \eqref{ss2a}-\eqref{ss2b}, mentioned in the result's statement. Thus, we can write that
\begin{equation}\normalsize
\mathbf \Theta(\lambda) \begin{bmatrix}  \lambda I_p -\mathbf W(\lambda)  &\hspace{-1mm}  \mathbf V(\lambda) \end{bmatrix}=\begin{bmatrix}
\lambda\mathbf{\Theta}(\lambda)&\hspace{-1mm}O
\end{bmatrix}+\mathbf \Theta(\lambda)\begin{bmatrix} \mathbf W(\lambda)  &\hspace{-1mm}  \mathbf V(\lambda) \end{bmatrix}\footnotesize\begin{bmatrix}
-I_p&O\\
O&I_m
\end{bmatrix}\normalsize,\label{eq:sum}
\end{equation}
where we have that
\begin{subequations}
\begin{align}
\begin{bmatrix}
	\lambda\mathbf{\Theta}(\lambda)&O
\end{bmatrix}=&\begin{bmatrix}
	C_x(\lambda I_p-A_x)^{-1}(\lambda I_p-A_x+A_x)B_x&O
\end{bmatrix}=\left[\footnotesize\begin{array}{c|cc}
	A_x-\lambda I_p&A_xB_x&O  \\\hline
	C_x&C_xB_x&O 
\end{array}\right]\label{eq:sum1_aux}\\
=&\left[\footnotesize\begin{array}{cc|cc}
	A_x-\lambda I_p&B_xA_{12}&A_xB_x&O  \\
	O&A_{22}+KA_{12}-\lambda I_{n-p}&O&O\\\hline
	C_x&O&C_xB_x&O
\end{array}\right].\label{eq:sum1}
\end{align}
\end{subequations}

The identity given in \eqref{eq:sum1} is obtained by introducing $n-p$ uncontrollable modes in the realization from \eqref{eq:sum1_aux}. We denote $A_K:=K A_{11} - KA_{12}K + A_{21} -A_{22}K$ and, using \eqref{VWdef} and \eqref{Theta}, we write
\begin{equation}\label{eq:sum2}
\mathbf \Theta(\lambda)\footnotesize\begin{bmatrix} \mathbf W(\lambda)  & \hspace{-2mm} \mathbf V(\lambda) \end{bmatrix}\begin{bmatrix}
-I_p&O\\
O&I_m
\end{bmatrix}\hspace{-1mm}=\hspace{-1mm}\left[\footnotesize\begin{array}{cc|cc}
\hspace{-2mm}A_x-\lambda I_p&\hspace{-3mm}B_xA_{12}&\hspace{-1mm}B_x(A_{12}K-A_{11}) &\hspace{-3mm} B_xB_1\hspace{-2mm} \\
\hspace{-2mm}O&\hspace{-3mm}A_{22}+KA_{12}-\lambda I_{n-p}\hspace{-1mm}&-A_K &\hspace{-3mm}  KB_1+B_2\hspace{-2mm}\\\hline
\hspace{-2mm}C_x&\hspace{-3mm}O&O&\hspace{-3mm}O\hspace{-2mm}
\end{array}\right].\normalsize
\end{equation}

Notice now that the state-space realizations from \eqref{eq:sum1} and \eqref{eq:sum2} have the same state and output matrices and, therefore, the parallel connection from \eqref{eq:sum} can be computed by adding together their input and feedthrough matrices, from which we directly obtain \eqref{MN}. Since the submatrices from the employed realization of the SRTR pair come from a minimal realization of type \eqref{ss2a}-\eqref{ss2b} for $\mathbf{G}(\lambda)$, we apply Lemma~\ref{lem:indeed_coprime} to get that the system from \eqref{eq:sum} describes an LCF over $\mathbb{S}$ of $\mathbf{G}(\lambda)$.

To prove point $\mathbf{b})$, begin by recalling the proof of Lemma~\ref{lem:SRTR_stab}, in order to state that the realization from \eqref{VWdef} is minimal, with all the eigenvalues of $A_{22}+KA_{12}$ being poles of the TFM. Notice also that, due to the Assumption~\ref{Observability}, Remark~\ref{Leuenberger} and the invertibility of $B_x$ and $C_x$, the realization given in \eqref{MN} is observable. 
Additionally, recall the matrix pencil from \eqref{eq:ctrb}, which has full row rank $\forall\,\lambda\in\mathbb{C}$. It follows that

\begin{equation*}
\begin{bmatrix}
B_x&O\\O&I_p
\end{bmatrix}\begin{bmatrix}
B_x^{-1}(A_x-\lambda I_p)&S_1(\lambda)\\O&S_2(\lambda)
\end{bmatrix}\begin{bmatrix}
I_p&O&B_x&O\\
O&O&-I_p&O\\
O&I_{n-p}&O&O\\
O&O&O&I_m
\end{bmatrix}
\end{equation*}
has full row rank $\forall\,\lambda\in\mathbb{C}$, since $B_x$ is invertible, which makes \eqref{MN} controllable. Since it is also observable, we get that the realization from \eqref{MN} is minimal, with all the eigenvalues of its pole pencil being poles of its associated TFM. Therefore, due to the minimality of both realizations from \eqref{VWdef} and \eqref{MN} and due to the shared eigenvalues of $A_{22}+KA_{12}$, the poles of \eqref{VWdef} are preserved among those of \eqref{MN}. \medskip

{\bf Proof of Theorem~\ref{inversa}} To prove point $\mathbf{a)}$, we first obtain from the definition of $A_x$ in the Theorem's statement and from the fact that $K$ is a solution of \eqref{Riccati} the following identity $
\begin{bmatrix}  F_1 \\ F_2\end{bmatrix}{=} \begin{bmatrix} A_x - A_{11} +A_{12}K
\\ A_{22}K  -KA_x - A_{21} \end{bmatrix}
$. We then plug this equality into the expression of \eqref{Kontroller} and we apply a state equivalence transformation which is given by $T:=\begin{bmatrix}I_p&O\\K&I_{n-p}\end{bmatrix}$, in order to obtain the realization from \eqref{MN1}.

To prove point $\mathbf{b)}$, we will simply reverse the procedure used to prove point $\mathbf{a)}$ in Theorem~\ref{step2}. This is done by choosing $B_x=I_p$, $C_x=U$ and $A_x$ as in the Theorem's statement, and then left-multiplying \eqref{MN1} with the inverse of \eqref{Theta}, as in \eqref{eq:lcf2srtr}. By doing so, we get that
\begin{align}
\footnotesize\begin{array}{r}
	(\lambda I_p-A_x)U^{-1}\hspace{-1mm}\begin{bmatrix}
		-\mathbf{M}(\lambda)&\hspace{-2mm}\mathbf{N}(\lambda)
	\end{bmatrix}
\end{array}\hspace{-2mm}\hspace{-1mm}=&\hspace{-1mm}\ (\lambda I_p-A_x)\begin{bmatrix}
I_p&O
\end{bmatrix}\times\nonumber\\
&\hspace{-1mm}\times\hspace{-1mm}\footnotesize\begin{bmatrix}
(\lambda I_p-A_x)^{-1}&\hspace{-2mm}(\lambda I_p-A_x)^{-1}A_{12}(\lambda I_{n-p}-A_{22}-KA_{12})^{-1}\\
O&(\lambda I_{n-p}-A_{22}-KA_{12})^{-1}
\end{bmatrix}\hspace{-1mm}\times\nonumber\\
&\hspace{-1mm}\times\hspace{-1mm}\footnotesize\begin{bmatrix}
A_{11} -A_{12}K +A_x & B_1\\
KA_{11}-KA_{12}K+A_{21}-A_{22}K  & \hspace{-1mm}KB_1+B_2
\end{bmatrix}\hspace{-1mm}+\hspace{-1mm}\begin{bmatrix}
A_x-\lambda I_p&\hspace{-2mm}O
\end{bmatrix}.\label{eq:horror1}
\end{align}

Denoting $A_K:=K A_{11} - KA_{12}K + A_{21} -A_{22}K$, the TFM in \eqref{eq:horror1} is equal to
\begin{align}
\footnotesize\begin{array}{r}
	(\lambda I_p-A_x)U^{-1}
\end{array}\hspace{-2mm}\footnotesize\begin{bmatrix}
-\mathbf{M}(\lambda)&\hspace{-2mm}\mathbf{N}(\lambda)
\end{bmatrix}=&\footnotesize\begin{bmatrix}
A_x-\lambda I_p\hspace{-2mm}&O
\end{bmatrix}+\footnotesize\begin{bmatrix}
I_p & A_{12}(\lambda I_{n-p}-A_{22}-KA_{12})^{-1}
\end{bmatrix}\times\nonumber\\
&\qquad\qquad\quad\,\times\footnotesize\begin{bmatrix}
A_{11} -A_{12}K -A_x  & B_1\\
A_K  & KB_1+B_2
\end{bmatrix}\nonumber\\
 =&\footnotesize\begin{bmatrix}
-\lambda I_p\hspace{-2mm}&O
\end{bmatrix}+\left[\footnotesize\begin{array}{c|cc}
\hspace{-2mm}A_{22}+KA_{12}-\lambda I_{n-p}&A_K&\hspace{-2mm}KB_1+B_2\hspace{-2mm}\\\hline
A_{12}&A_{11}-A_{12}K&\hspace{-2mm}B_1
\end{array}\right]\hspace{-1mm}.\hspace{-1mm}\label{eq:horror2}
\end{align}

Recall Definition~\ref{obiectu} and Theorem~\ref{Main} to notice that the system from the right-hand term in the final equality of \eqref{eq:horror2}, which is in parallel connection to $\begin{bmatrix}
-\lambda I_p&O
\end{bmatrix}$, has a realization of type \eqref{VWdef}. Therefore, its TFM designates an SRTR pair of the system and we proceed to rewrite \eqref{eq:horror2} as follows
\begin{equation*}\label{eq:horror3}
(\lambda I_p-A_x)U^{-1}\begin{bmatrix}
-\mathbf{M}(\lambda)&\mathbf{N}(\lambda)
\end{bmatrix}=\begin{bmatrix}
\mathbf W(\lambda)-\lambda I_p &\mathbf V(\lambda)
\end{bmatrix},
\end{equation*}
from which we retrieve \eqref{eq:lcf2srtr}, along with the stable SRTR pair.

Point $\mathbf{c)}$ follows directly by inspecting the realization of the TFM which designates the SRTR pair obtained in \eqref{eq:horror2} and by recalling that $K$ is the solution of \eqref{Riccati}.

\medskip

\begin{figure}[t]
\centering
\begin{tikzpicture}[scale=1, every node/.style={transform shape}]
\node(n1)at(0,0){};
\node(n2)[right of=n1, node distance=4em]{};
\node(n21)[circle,draw,minimum height=8,above of=n2, node distance=1em]{};
\node(n22)[circle,draw,minimum height=8,below of=n2, node distance=1em]{};
\node(n3)[minimum height=0.5cm, minimum width=0.5cm,draw,right of=n2, node distance=5em]{$ \begin{bmatrix}
		\frac{1}{\lambda}\mathbf{W}^\top(\lambda)\\\frac{1}{\lambda}\mathbf{V}^\top(\lambda)
	\end{bmatrix}^\top $};
\node(n4)[circle,draw,minimum height=8,right of=n3, node distance=5em]{};
\node(n41)[circle,draw,minimum height=8,above of=n4, node distance=5em]{};
\node(n42)[circle,draw,minimum height=8,below of=n4, node distance=5em]{};
\node(n42m)[below of=n4, node distance=2.5em]{$\zeta$};
\node(n41m)[above of=n4, node distance=2.5em]{$w$};
\node(n5)[minimum height=0.5cm, minimum width=0.5cm,draw,right of=n4, node distance=4em]{$\begin{bmatrix}
		I_m\\\mathbf{G}(\lambda)
	\end{bmatrix}$};

\draw[-latex](n41m)--(n4) node[near end, right]{$ $};
\draw[-latex](n42m)--(n42) node[near end, left]{$ $};
\draw[-latex](n41m)--(n41) node[midway, left]{\bf{--}};

\draw[-latex](n3)--(n4) node[near end, above]{$ $};
\draw[-latex](n4)--(n5);

\draw[-latex](20em,1em)--(23em,1em)|-(n41) node[near end, below]{$v$} node[very near end, below]{\hspace{-2em}$ $};
\draw[-latex](20em,-1em)--(23em,-1em)|-(n42) node[near end, above]{$y-\zeta$} node[very near end, above]{\hspace{-2em}$ $};

\draw[-latex](n41)-|(n21) node[near start, below]{$u$};
\draw[-latex](n42)-|(n22) node[near start, above]{$z-r$} node[very near end, right]{};

\draw[-latex](n21)--(5.8em,1em);
\draw[-latex](n22)--(5.8em,-1em);

\draw[-latex](0em,1em)--(n21) node[near start,above]{$\delta_u$} node[near end, above]{$ $};
\draw[-latex](0em,-1em)--(n22) node[near start,below]{$r$} node[near end, below]{$ $};
\end{tikzpicture}
\caption{Equivalent feedback interconnection}\label{fig:eq_con}
\hrulefill
\end{figure}
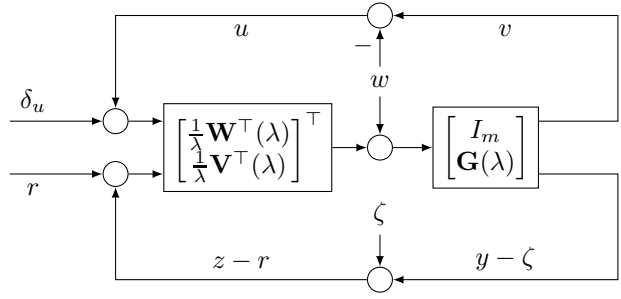

{\bf Proof of Theorem~\ref{thm:IO_stab}} The proof boils down to showing that $\mathbf{K}_d(\lambda)$ is a stabilizing controller for $\begin{bmatrix}
I_p&\mathbf{G}^\top(\lambda)
\end{bmatrix}^\top$, considering a standard feedback loop as in Fig.~\ref{fig:2Block}. This represents a sufficient condition for the closed-loop interconnection shown in Fig.~\ref{fig:eq_con}, located at the top of this page, to be internally stable. The desired result follows by noticing that Fig.~\ref{fig:eq_con} is equivalent to Fig.~\ref{fig:2BlockAgain}. In order to show that $\mathbf{K}_d(\lambda)$ is a stabilizing controller for $\begin{bmatrix}
I_p&\mathbf{G}^\top(\lambda)
\end{bmatrix}^\top$, we point out (see Section 5.3 of \cite{zhou}) that an equivalent condition for this is the fact that all the entries of $\mathbf{H}(\lambda)$ are stable, where
\begin{equation}\label{eq:all_cl}
\mathbf{H}(\lambda):=\small\begin{bmatrix}
I_p\\I_p\\\mathbf{G}(\lambda)
\end{bmatrix}\hspace{-1mm}\left(I_p-\begin{bmatrix}
\frac{1}{\lambda}\mathbf{W}(\lambda) & \frac{1}{\lambda}\mathbf{V}(\lambda)
\end{bmatrix}\begin{bmatrix}
I_p\\\mathbf{G}(\lambda)
\end{bmatrix}\right)^{-1}\begin{bmatrix}
I_p & \frac{1}{\lambda}\mathbf{W}(\lambda) & \frac{1}{\lambda}\mathbf{V}(\lambda)
\end{bmatrix}.\normalsize
\end{equation}

Express now any LCF over $\mathbb{S}$ of $\mathbf{G}^\top(\lambda)=\mathbf{M}_{\mathbf{G}}^{-1}(\lambda)\mathbf{N}_{\mathbf{G}}(\lambda)$ to get that $\mathbf{G}(\lambda)=\mathbf{N}^\top_{\mathbf{G}}(\lambda)\mathbf{M}_{\mathbf{G}}^{-\top}(\lambda)$, which designates a \emph{right coprime factorization (RCF) over $\mathbb{S}$} of $\mathbf{G}(\lambda)$. Then, select any stable $\mathbf{\Theta}(\lambda)$ as in Theorem~\ref{step2} and apply this result, to get that
\begin{equation}\label{eq:Theta_K}
\begin{bmatrix} \mathbf M_{\mathbf{K}}(\lambda) &  \mathbf N_{\mathbf{K}}(\lambda) \end{bmatrix}:= \mathbf \Theta(\lambda) \begin{bmatrix}   \lambda I_p -\mathbf W(\lambda)  &  \mathbf V(\lambda) \end{bmatrix}
\end{equation}
designates an LCF over $\mathbb{S}$ of the stabilizing controller $\mathbf{K}(\lambda)=\mathbf M_{\mathbf{K}}^{-1}(\lambda)\mathbf N_{\mathbf{K}}(\lambda)$.

Since $\mathbf{K}(\lambda)$ is a stabilizing controller for $\mathbf{G}(\lambda)$, we have by point $5$ of Lemma 5.10 in \cite{zhou} that $(\mathbf M_{\mathbf{K}}(\lambda)\mathbf M^\top_{\mathbf{G}}(\lambda)-\mathbf N_{\mathbf{K}}(\lambda)\mathbf N^\top_{\mathbf{G}}(\lambda))^{-1}$ must be stable. 
Finally, by \eqref{eq:Theta_K}, we have the fact that  $\mathbf{\Theta}(\lambda)=\mathbf{M}_{\mathbf{K}}(\lambda)(\lambda I_p-\mathbf{W}(\lambda))^{-1}$ and, by noticing that $\mathbf{M}_{\mathbf{K}}(\lambda)$, $\mathbf{W}(\lambda)$ along with $\mathbf{\Theta}(\lambda)$ are all stable TFMs, we get that $\lambda\mathbf{\Theta}(\lambda)=\mathbf{M}_{\mathbf{K}}(\lambda)+\mathbf{\Theta}(\lambda)\mathbf{W}(\lambda)$ is also a stable TFM. With this fact established, return to \eqref{eq:all_cl} and rewrite it as
\begin{align}
\mathbf{H}(\lambda)=&\footnotesize\begin{bmatrix}
I_p\\I_p\\\mathbf{G}(\lambda)
\end{bmatrix}\left(I_p-\left(\lambda I_p-\mathbf{W}(\lambda)\right)^{-1}\mathbf{V}(\lambda)\mathbf{G}(\lambda)\right)^{-1}\times\label{eq:Shai_Hulud}\\
&\qquad\qquad\qquad\qquad\qquad\qquad\qquad\times\left(\lambda I_p-
\mathbf{W}(\lambda)\right)^{-1}\begin{bmatrix}
\lambda I_p & \mathbf{W}(\lambda) & \mathbf{V}(\lambda)
\end{bmatrix}\nonumber\\
=&\footnotesize\begin{bmatrix}
\mathbf{M}_{\mathbf{G}}^{\top}(\lambda)\\\mathbf{M}_{\mathbf{G}}^{\top}(\lambda)\\\mathbf{N}_{\mathbf{G}}^{\top}(\lambda)
\end{bmatrix}\left(\mathbf{M}_{\mathbf{K}}(\lambda)\mathbf{M}_{\mathbf{G}}^{\top}(\lambda)-\mathbf{N}_{\mathbf{K}}(\lambda)\mathbf{N}^\top_{\mathbf{G}}(\lambda)\right)^{-1}\begin{bmatrix}
\lambda\mathbf{\Theta}(\lambda) & \mathbf{\Theta}(\lambda)\mathbf{W}(\lambda) & \mathbf{N}_{\mathbf{K}}(\lambda)
\end{bmatrix}.\nonumber
\end{align}

Notice that all the (block-)entries of the TFMs from the final equality in \eqref{eq:Shai_Hulud} are products of stable TFMs, which makes all of them stable. Thus, the closed-loop interconnections showcased in Figs.~\ref{fig:2BlockAgain} and~\ref{fig:eq_con} are internally stable. This confirms that the special implementation given in \eqref{eq:SRTR_ctl}-\eqref{eq:SRTR_conv} is indeed stabilizing, in both the continuous- and the discrete-time case. \medskip


{\bf Proof of Proposition~\ref{prop:not_stab}} We will prove our statement by showing that the TFM given in \eqref{eq:SRTR_ctl} is guaranteed to have $p$ poles in $s=0$ for any stable SRTR pair obtained via a minimal realization \eqref{ss2a}-\eqref{ss2b}, making it continuous-time unstable. Begin by considering the minimal state-space realization $\frac{1}{s}I_p=\left[\footnotesize\begin{array}{c|c}
-s I_p & I_p\\\hline I_p & O
\end{array}\right]$ and let the stable SRTR pair $(\mathbf{W}(s),\mathbf{V}(s))$ be as in \eqref{VWdef}. Then, a realization of $\mathbf{K}_d(s)$ is
\begin{multline} \label{VWalt}
	\mathbf{K}_d(s)=\left[\footnotesize\begin{array}{c|c}
		A_d-sI_n&B_d\\\hline C_d&D_d
	\end{array}\right]:=\\:= \left[\scriptsize\begin{array}{cc|cc}
		-sI_p & A_{12}   & A_{11} - A_{12}K & B_1  \\
		O & A_{22}+ K A_{12}  -s I_{n-p}  & K A_{11} - KA_{12}K + A_{21} -A_{22}K   &  KB_1+B_2  \\\hline 
		I_p & O & O & O
	\end{array}\right].\normalsize
\end{multline}
Notice now that, due to Assumption~\ref{Observability} and due to Remark~\ref{Leuenberger}, the matrix pencil
$$
\begin{bmatrix}
A_d-sI_n\\C_d
\end{bmatrix}=\footnotesize\begin{bmatrix}
I_p&O&O\\K&I_{n-p}&O\\O&O&I_p
\end{bmatrix}\footnotesize\begin{bmatrix}
-sI_p&A_{12}\\sK&A_{22}-sI_{n-p}\\I_p&O
\end{bmatrix}\normalsize
$$
has full column rank $\forall s\in\mathbb{C}$, making \eqref{VWalt} observable. Finally, consider the identity  
\begin{equation} \label{eq:Kd_ctrb}
	\small\begin{bmatrix}
		A_d-sI_n&\hspace{-2mm}B_d
	\end{bmatrix}=\footnotesize\begin{bmatrix}
		I_p&O\\K&I_{n-p}
	\end{bmatrix}\footnotesize\begin{bmatrix}
		-sI_p & \hspace{-1mm}A_{11} -sI_{p} & A_{12}    & B_1  \\
		sK & A_{21} & \hspace{-1mm}A_{22}-sI_{n-p}     &  B_2
	\end{bmatrix}\tiny\begin{bmatrix}
		I_p&O&-I_p&O\\
		O&O&\phantom{-}I_p&O\\
		O&I_{n-p}&-K&O\\
		O&O&\phantom{-}O&I_m
	\end{bmatrix}.\normalsize
\end{equation}
Due to Assumption~\ref{Controllability}, the left-hand pencil in \eqref{eq:Kd_ctrb} has full row rank $\forall s\in\mathbb{C}$, making the realization from the final equality in \eqref{VWalt} controllable. Thus, we conclude that the realization of $\mathbf{K}_d(s)$ from \eqref{VWalt} is minimal, which implies that all $p$ eigenvalues of $A_d$ at $s=0$ are poles of the system's TFM, making it continuous-time unstable.
\medskip


{\bf Proof of Theorem~\ref{thm:implem}} The proof largely follows the same reasoning as the one belonging to Theorem III.6 in \cite{NRF} and can be broken down into its two constituent parts: the discrete-time case $\mathbf{I)}$ and the continuous-time one $\mathbf{II)}$, respectively.\newpage

$\mathbf{I)}$ Begin with the discrete-time case and recall from Remark~\ref{rem:discr_stab} that $\mathbf{K}_d(z)$ is stable, when starting from a stable SRTR pair, and that each of its rows is also a stable TFM. Thus, the stabilizable and detectable realizations for each such row, represented by $e_i^\top\mathbf{K}_d(z)=\left[\footnotesize\begin{array}{c|c}
	\widetilde A_{di}-zI_{n_i}&\widetilde B_{di}\\\hline\vspace{-1.5mm}\\ \widetilde C_{di}&\widetilde D_{di}
\end{array}\right]$, possess pole pencils $\widetilde A_{di}-zI_{n_i}$ which have all their eigenvalues located in $\mathbb{S}$. Using these $p$ representations, form the realization
\begin{equation}
\mathbf{K}_d(z)=\left[\begin{array}{c:c:c}
	\mathbf{K}_d^\top(z)e_1&\dots&\mathbf{K}_d^\top(z)e_p
\end{array}\right]^\top=\left[\scriptsize\begin{array}{ccc|c}
	\widetilde A_{d1}-zI_{n_1}&&&\widetilde B_{d1}\\
	&\ddots&&\vdots\\
	&&\widetilde A_{dp}-zI_{n_p}&\widetilde B_{dp}\\\hline&&&\vspace{-1mm}\\
	\widetilde C_{d1}&&&\widetilde D_{d1}\\
	&\ddots&&\vdots\\
	&&\widetilde C_{dp}&\widetilde D_{dp}
\end{array}\right].\label{eq:z_col_concat}
\end{equation}

Due to its block-diagonal structure, the pole pencil of the realization from \eqref{eq:z_col_concat} will have all its eigenvalues located in $\mathbb{S}$, which ensures that the obtained realization for $\mathbf{K}_d(z)$ will be stabilizable and detectable, while Theorem~\ref{thm:IO_stab} shows that $\mathbf{K}_d(z)$ is a stabilizing controller for $\begin{bmatrix}
I&\mathbf{G}^\top(z)
\end{bmatrix}^\top$. Recall that $\mathbf{G}(z)=\left[\footnotesize\begin{array}{c|c}
A-zI_n&B\\\hline C&D
\end{array}\right]$ has a stabilizable and detectable realization and let $\begin{bmatrix}
I&\mathbf{G}^\top(z)
\end{bmatrix}^\top=\left[\footnotesize\begin{array}{c|cc}
A^\top-zI_n&O&C^\top\\\hline\vspace{-2mm}\\ B^\top&I_m&D^\top
\end{array}\right]^\top$, with the latter having a realization possessing the same state variables as the realization of $\mathbf{G}(z)$ and which is both stabilizable and detectable (since so is the realization of $\mathbf{G}(z)$). For these realizations of $\mathbf{G}(z)$ and $\begin{bmatrix}
I&\mathbf{G}^\top(z)
\end{bmatrix}^\top$, it is straightforward to check that the closed-loop state dynamics in Figs.~\ref{fig:2BlockAgain} and~\ref{fig:eq_con} are described by the same set of equations (with respect to the same sets of inputs and outputs). Thus, by the same arguments as those in part $\mathbf{(IV)}$ of the proof of Theorem III.6 in \cite{NRF}, we employ the discrete-time versions of Definition 5.2 and of Lemmas 5.2 and 5.3 from \cite{zhou} for the closed-loop interconnection in Fig.~\ref{fig:eq_con} to get the desired result.


$\mathbf{II)}$ For the continuous-time case, start by recalling the realization from \eqref{VWalt}, which was shown in the proof of Proposition~\ref{prop:not_stab} to be minimal. Since the chosen SRTR pair is stable, Lemma~\ref{lem:SRTR_stab} ensures that $A_{22}+KA_{12}$ has all its eigenvalues located in $\mathbb{S}$. Thus, $\mathbf{K}_d(s)$ has $p$ unstable poles in $s=0$ along with $n-p$ poles located in $\mathbb{S}$. Obtain now a realization for each $e_i^\top\mathbf{K}_d(s)$ as follows\vspace{-1mm}
\begin{multline} \label{VWalti}
	e_i^\top\mathbf{K}_d(s)=\left[\small\begin{array}{c|c}
		A_{di}-sI_{n-p+1}&B_{di}\\\hline C_{di}&D_{di}
	\end{array}\right]:=\\:= \left[\scriptsize\begin{array}{cc|cc}
		-s & e_i^\top A_{12}   & e_i^\top (A_{11} - A_{12}K) & e_i^\top B_1  \\
		O & A_{22}+ K A_{12}  -s I_{n-p}  & K A_{11} - KA_{12}K + A_{21} -A_{22}K   &  KB_1+B_2  \\\hline 
		1 & O & O & O
	\end{array}\right],\normalsize
\end{multline}
and notice that \eqref{VWalti} is implicitly controllable and observable at $\forall s\in\mathbb{C}\backslash\{\mathbb{S}\cup\{0\}\}$.

We now prove controllability and observability at $s=0$ for the realizations of type \eqref{VWalti}. The latter property follows from the fact that $A_{22}+KA_{12}$ has all its eigenvalues located in $\mathbb{S}$. Since the pencil from \eqref{eq:ctrb} has full row rank for $s=0$, then
\begin{equation*}
\begin{bmatrix}
	A_{di}&B_{di}
\end{bmatrix}=\begin{bmatrix}
	e_i^\top&O\\O&I_{n-p}
\end{bmatrix}\begin{bmatrix}
	O&S_1(0)\\O&S_2(0)
\end{bmatrix}\small\begin{bmatrix}
	1&O&O&O\\O&O&I_p&O\\O&I_{n-p}&O&O\\O&O&O&I_m
\end{bmatrix}\normalsize
\end{equation*}
has full row rank, which makes \eqref{VWalti} controllable at $s=0$.

We conclude that the realization from \eqref{VWalti} is both stabilizable and detectable and, since $A_{22}+KA_{12}$ has all its eigenvalues located in $\mathbb{S}$, each $e_i^\top\mathbf{K}_d(s)$ has one pole in $s=0$. Therefore, the pole pencil of any stabilizable and detectable realization belonging to $e_i^\top\mathbf{K}_d(s)$ will have only one unstable eigenvalue, at $s=0$. Compute now just such realizations for each $e_i^\top\mathbf{K}_d(s)=\left[\footnotesize\begin{array}{c|c}
	\widetilde A_{di}-sI_{n_i}&\widetilde B_{di}\\\hline\vspace{-1.5mm}\\ \widetilde C_{di}&\widetilde D_{di}
\end{array}\right]$ and use them to form
\begin{equation}
	\mathbf{K}_d(s)=\left[\begin{array}{c:c:c}
		\mathbf{K}_d^\top(s)e_1&\dots&\mathbf{K}_d^\top(s)e_p
	\end{array}\right]^\top=\left[\scriptsize\begin{array}{ccc|c}
		\widetilde A_{d1}-sI_{n_1}&&&\widetilde B_{d1}\\
		&\ddots&&\vdots\\
		&&\widetilde A_{dp}-sI_{n_p}&\widetilde B_{dp}\\\hline\vspace{-1mm}&&&\\
		\widetilde C_{d1}&&&\widetilde D_{d1}\\
		&\ddots&&\vdots\\
		&&\widetilde C_{dp}&\widetilde D_{dp}
	\end{array}\right].\label{eq:s_col_concat}
\end{equation}

The pole pencil of the realization from \eqref{eq:s_col_concat} will have, due to its block-diagonal structure, exactly $p$ eigenvalues at $s=0$, with the rest of them being located in $\mathbb{S}$. However, recall that $\mathbf{K}_d(s)$ has only $p$ unstable poles, with all of them being located at $s=0$. Thus, by part $\mathbf{(II)}$ in the proof of Theorem III.6 from \cite{NRF}, the obtained realization of $\mathbf{K}_d(s)$ must be both stabilizable and detectable and, once again, Theorem~\ref{thm:IO_stab} shows that $\mathbf{K}_d(s)$ is a stabilizing controller for $\begin{bmatrix}
	I&\mathbf{G}^\top(s)
\end{bmatrix}^\top$. 
Recall the fact that $\mathbf{G}(s)=\left[\footnotesize\begin{array}{c|c}
	A-sI_n&B\\\hline C&D
\end{array}\right]$ has a stabilizable and detectable realization and let $\begin{bmatrix}
	I&\mathbf{G}^\top(s)
\end{bmatrix}^\top=\left[\footnotesize\begin{array}{c|cc}
	A^\top-sI_n&O&C^\top\\\hline\vspace{-1mm}\\ B^\top&I_m&D^\top
\end{array}\right]^\top$, with the latter having a realization possessing the same state variables as the realization of $\mathbf{G}(s)$ and which is both stabilizable and detectable (since so is the realization of $\mathbf{G}(s)$). As in part $\mathbf{I)}$, the closed-loop state dynamics in Figs.~\ref{fig:2BlockAgain} and~\ref{fig:eq_con} are described by the same set of equations (with respect to the same sets of inputs and outputs). By the same arguments as those in part $\mathbf{(IV)}$ of the proof of Theorem III.6 in \cite{NRF}, we employ Definition 5.2 and Lemmas 5.2 and 5.3 from \cite{zhou} for the closed-loop interconnection in Fig.~\ref{fig:eq_con} to get the desired result.

\medskip

{\bf Proof of Theorem~\ref{thm:MM}} The proof hinges upon first obtaining the representations provided in point $\mathbf{a)}$ of the result. Therefore, we will begin by addressing those rows of $\begin{bmatrix} \mathbf{W}(\lambda)&\mathbf{V}(\lambda) \end{bmatrix}$ for which the corresponding rows of $A_{12}$ are identically zero, in order to obtain \eqref{eq:linec}, which constitutes the first part of point $\mathbf{a)}$. Regarding the second part of the latter point, we will show how the matrices $Q_i^\top$, which compress the individual rows of $A_{12}$, along with condition $v)$ enable us to obtain the realizations from \eqref{eq:linenc}, thus wrapping up with point $\mathbf{a)}$. Finally, we will prove point $\mathbf{b)}$ by showing how condition $vi)$ ensures stability, for those rows described by \eqref{eq:linenc}, whereas conditions $i)$ through $iv)$ are sufficient to impose the sparsity structure of the two binary matrices, from the result's statement, upon the obtained SRTR pair. 

We begin by considering, for the remainder of the proof, some fixed $K\in\mathbb{R}^{(n-p)\times p}$. For each $i\in1:p$, we employ this $K$ in order to form, via \eqref{VWdef}, the realizations
\begin{equation}\label{eq:row_real}
		e_{p,i}^\top\small\begin{bmatrix}
		\mathbf W(\lambda) &\hspace{-2mm} \mathbf V(\lambda)
	\end{bmatrix}= \left[\footnotesize\begin{array}{c|cc}
		A_{22}+ K A_{12}  -\lambda I_{n-p}  & K A_{11} - KA_{12}K + A_{21} -A_{22}K   &  KB_1+B_2  \\ 
		\hline\vspace{-2mm}\\ e_{p,i}^\top A_{12}   & e_{p,i}^\top(A_{11} - A_{12}K) & e_{p,i}^\top B_1  \\ \end{array}\right].\normalsize
\end{equation}

To prove the first part of point $\mathbf{a)}$, let us assume that $\exists\ i\in1:p$ for which $\left\|e_{p,i}^\top A_{12}^{}\right\|= 0$. Then, the identity given in \eqref{eq:linec} follows directly from the elimination of all the dynamical modes in \eqref{eq:row_real}, by noticing that all of them will be unobservable.

To prove the second part of point $\mathbf{a)}$, we consider instead some $i\in1:p$ for which $\left\|e_{p,i}^\top A_{12}^{}\right\|\neq 0$. Notice that, in contrast to the previous part, such an $i$ must always exist since, otherwise, this would imply that $A_{12}=O$, which would contradict Assumption~\ref{Observability} (as per Remark~\ref{Leuenberger}). For the chosen $i$, define the following matrices\vspace{-1mm}
\begin{align*}
	\left[\begin{array}{c:c}
		\widetilde{A}_{11}&\widetilde{A}_{12}\\\hdashline\widetilde{A}_{21}&\widetilde{A}_{22}
	\end{array}\right]:=&\left[\begin{array}{cc}
	I_{n-p-n_i} & O \\ \hdashline O & I_{n_i}
\end{array}\right] Q_i(A_{22}+KA_{12})Q_i^\top\left[\begin{array}{c:c}
I_{n-p-n_i} & O \\ O & I_{n_i},
\end{array}\right],\\
	\widetilde{B}_{i1}:=&\begin{bmatrix}I_{n-p-n_i} & O\end{bmatrix} Q_i\begin{bmatrix}
		K A_{11} - KA_{12}K + A_{21} -A_{22}K   &  KB_1+B_2
	\end{bmatrix},\vspace{-1mm}
\end{align*}
along with $\widetilde{B}_{i2}$ and $\widetilde{D}_i$, as given in \eqref{eq:aux_defa}-\eqref{eq:aux_defb}. Apply now a state equivalence transformation given by $T:=Q_i$ to \eqref{eq:row_real} and recall the identity from \eqref{eq:compress} to notice that, if condition $v)$ from the result's statement holds, then we obtain the identity
\begin{equation}\label{eq:obsv_desc}
	e_{p,i}^\top\begin{bmatrix}
		\mathbf W(\lambda) &\hspace{-2mm} \mathbf V(\lambda)
	\end{bmatrix}\hspace{-1mm}=\hspace{-1mm} \left[\footnotesize\begin{array}{cc|c}
							\hspace{-1mm}\widetilde{A}_{11}-\lambda I_{n-p-n_i} & \widetilde{A}_{12} & \widetilde{B}_{i1}\\
							O & \widetilde{A}_{22}-\lambda I_{n_i} & \widetilde{B}_{i2}\hspace{-1mm}\\\hline
							O & \left\|e_{p,i}^\top A_{12}^{}\right\|e_{n_i,n_i}^\top & \widetilde{D}_i
		 \end{array}\right]
	 			\hspace{-1mm}=\hspace{-1mm} \left[\footnotesize\begin{array}{c|c}
	 				\widetilde{A}_{22}-\lambda I_{n_i} & \widetilde{B}_{i2}\hspace{-1mm}\\\hline
	 				\hspace{-1mm}\left\|e_{p,i}^\top A_{12}^{}\right\|e_{n_i,n_i}^\top & \widetilde{D}_i
	 			\end{array}\right]\hspace{-1mm},\vspace{-1mm}
\end{equation}
after eliminating the first $n-p-n_i$ unobservable components from the state vector of the larger state-space realization from \eqref{eq:obsv_desc}. Note that this elimination must only be done for those rows where we have $n_i<n-p$. Moreover, by recalling the fact that $\widetilde{A}_{22}:=\begin{bmatrix}O & I_{n_i}\end{bmatrix} Q_i(A_{22}+KA_{12})Q_i^\top\begin{bmatrix}O & I_{n_i}\end{bmatrix}^\top$ along with the identity from \eqref{eq:compress}, we get that the smaller state-space realization from \eqref{eq:obsv_desc} is precisely the one in \eqref{eq:linenc}.

To prove point $\mathbf{b)}$, we will address the stability and sparsity constraints separately, beginning with the former. To this end, recall that any row satisfying $\left\|e_{p,i}^\top A_{12}^{}\right\|= 0$ must have a constant matrix as its TFM, given by the row vector from \eqref{eq:linec}, which is stable in both continuous- and discrete-time. Recall, additionally, that we wish to express all those rows satisfying $\left\|e_{p,i}^\top A_{12}^{}\right\|\neq 0$ via the realizations given in \eqref{eq:linenc}. If condition $vi)$ from our result's statement holds, then note that the pole pencils of all the aforementioned state-space realizations will have all their eigenvalues located in $\mathbb{S}$. Therefore, the TFMs belonging to these state-space realizations cannot have poles located outside of $\mathbb{S}$, making these TFMs stable, in either continuous- or discrete-time.

Regarding the sparsity structure of the SRTR pair, a critical observation is the fact that, for an arbitrary $\mathbf{K}\in\mathbb{R}(\lambda)^{p\times m}$ described by a generic state-space realization $\mathbf{K}(\lambda)=C_{\mathbf{K}}(\lambda I_{n_\mathbf{K}}-A_\mathbf{K})^{-1}B_\mathbf{K}+D_\mathbf{K}$ with $n_\mathbf{K}\neq 0$, we have, for any $j\in1:m$, that
\begin{equation}\label{eq:MM_ss}
	e_{m,j}^\top\begin{bmatrix}
		B_{\mathbf{K}}^\top & D_{\mathbf{K}}^\top
	\end{bmatrix} = O \Longrightarrow \mathbf{K}(\lambda)e_{m,j}\equiv O.
\end{equation}
Moreover, notice that if conditions $i)$ and $ii)$ from the result's statement hold, then\vspace{-1mm}
\begin{subequations}
	\begin{align}
		e_{p,i}^\top\mathcal{B}_{\mathbf{W}}^{}e_{p,j}^{}=0\ \Longrightarrow &\ \  e_{p,i}^\top(A_{11}-A_{12}K)e_{p,j}^{}=0,\label{eq:suf1}\\
		e_{p,i}^\top\mathcal{B}_{\mathbf{V}}^{}e_{m,k}^{}=0\ \Longrightarrow &\ \ e_{p,i}^\top B_1e_{m,k}^{}=0.\label{eq:suf2}
	\end{align}
\end{subequations}

For those rows satisfying $\left\|e_{p,i}^\top A_{12}^{}\right\|= 0$ and described by \eqref{eq:linec}, we point out that \eqref{eq:suf1} is equivalent to \eqref{eq:sufW}, whereas \eqref{eq:suf2} is equivalent to \eqref{eq:sufV}.  We now address those rows which satisfy $\left\|e_{p,i}^\top A_{12}^{}\right\|\neq 0$ in a similar way. Thus, if conditions $iii)$ and $iv)$ from the result's statement hold, then we obtain the implications\vspace{-1mm}
\begin{subequations}
	\begin{align}
		e_{p,i}^\top\mathcal{B}_{\mathbf{W}}^{}e_{p,j}^{}=0\Longrightarrow &\begin{bmatrix}O &  I_{n_i}\end{bmatrix} Q_i( K A_{11} - KA_{12}K + A_{21} -A_{22}K)e_{p,j}^{}=O,\label{eq:suf3}\\
		e_{p,i}^\top\mathcal{B}_{\mathbf{V}}^{}e_{m,k}^{}=0\Longrightarrow &\begin{bmatrix}O & I_{n_i}\end{bmatrix} Q_i(KB_1+B_2)e_{m,k}^{}= O.\label{eq:suf4}
	\end{align}
\end{subequations}
Since we express these rows via the realizations from \eqref{eq:linenc}, we use \eqref{eq:MM_ss} to obtain \eqref{eq:sufW}, from \eqref{eq:suf1} and \eqref{eq:suf3}, along with \eqref{eq:sufV}, from \eqref{eq:suf2} and \eqref{eq:suf4}.

\section*{Acknowledgment}

The authors would like to thank our colleague, Dr. Bogdan D. Ciubotaru, for the advice made during the elaboration of this manuscript.\\

\bibliographystyle{siamplain}
\bibliography{references}

@ARTICLE{Realiz,
	
	author={Mohan Vamsi, Andalam Satya and Elia, Nicola},
	
	journal={IEEE Transactions on Automatic Control}, 
	
	title={Optimal Distributed Controllers Realizable Over Arbitrary Networks}, 
	
	year={2016},
	
	volume={61},
	
	number={1},
	
	pages={129-144},
	
	doi={10.1109/TAC.2015.2434051}}

@ARTICLE{Explicit,
	
	author={Jensen, Emily and Bamieh, Bassam},
	
	journal={IEEE Transactions on Automatic Control}, 
	
	title={An Explicit Parametrization of Closed Loops for Spatially Distributed Controllers With Sparsity Constraints}, 
	
	year={2022},
	
	volume={67},
	
	number={8},
	
	pages={3790-3805},
	
	doi={10.1109/TAC.2021.3111863}}

@INPROCEEDINGS{Separ,
	
	author={Li, Jing Shuang and Ho, Dimitar},
	
	booktitle={2020 American Control Conference (ACC)}, 
	
	title={Separating Controller Design from Closed-Loop Design: A New Perspective on System-Level Controller Synthesis}, 
	
	year={2020},
	
	volume={},
	
	number={},
	
	pages={3529-3534},
	
	doi={10.23919/ACC45564.2020.9147736}}

@article{BBS,
	
	author={Boyd, S. and Baratt, C. and Norman, S.},
	
	journal={Proc. of the IEEE}, 
	
	title={Linear controller design: limits of performance via convex optimization}, 
	
	year={1990},
	
	volume={78},
	
	number={3},
	
	pages={529--574},
	
	doi={10.1109/5.52229}
}

@article{trian,  
	author={Rösinger, Christian A. and Scherer, Carsten W.},  
	journal={IEEE Trans. on Control of Network Systems},   
	title={{A Flexible Synthesis Framework of Structured Controllers for Networked Systems}},   year={2020},  
	volume={7},  
	number={1},  
	pages={6--18},  
	doi={10.1109/TCNS.2019.2914411}
}

@article{Lavaei,
	Author = {Ghazal Fazelnia and Ramtin Madani and Abdulrahman Kalbat and Javad Lavaei},
	Journal = {IEEE Transactions on Automatic Control},
	Number = {1},
	Pages = {206 -- 221},
	Title = {{Convex Relaxation for Optimal Distributed Control Problems}},
	Volume = {62},
	Year = {2017},
	doi={10.1109/TAC.2016.2562062}}

@article{Jovanovic,
	Author = {Fu Lin and Makan Fardad and Mihailo R. Jovanovic},
	Journal = {IEEE Transactions on Automatic Control},
	Pages = {2426--2431},
	Title = {{Design of Optimal Sparse Feedback Gains via the Alternating Direction Method of Multipliers}},
	Volume = {58},
	Year = {2013},
	doi={10.1109/TAC.2013.2257618}}

@article{Sznaier,
	Author = {Yin Wang  and Jose A. Lopez  and Mario Sznaier},
	Journal = {IEEE Transactions on Automatic Control},
	Number = {10},
	Pages = {3393 -- 3403},
	Title = {{Convex Optimization Approaches to Information Structured Decentralized Control}},
	Volume = {63},
	Year = {2018},
	doi={10.1109/TAC.2018.2830112}}

@article{DateChow,
	Author = {Ranjit A. Date and Joe H. Chow},
	Journal = {IEEE Transactions on Automatic Control,},
	Number = {2},
	Pages = {347--351},
	Title = {{Decentralized Stable Factors and a Parametrization of Decentralized Controllers}},
	Volume = {39},
	Year = {1994},
	doi={10.1109/9.272331}}

@article{DSF,
	title = {Robust dynamical network structure reconstruction},
	journal = {Automatica},
	volume = {47},
	number = {6},
	pages = {1230--1235},
	year = {2011},
	doi = {https://doi.org/10.1016/j.automatica.2011.03.008},
	author = {Ye Yuan and Guy-Bart Stan and Sean Warnick and Jorge Goncalves},
}

@INPROCEEDINGS{Tseng,
	
	author={Tseng, Shih-Hao},
	
	booktitle={2021 American Control Conference (ACC)}, 
	
	title={Realization, Internal Stability, and Controller Synthesis}, 
	
	year={2021},

	pages={3570--3577},
	
	doi={10.23919/ACC50511.2021.9482682}}

@article{CDC2013,
	
	author={S\u{a}bau, {\c{S}}erban and Oar\u{a}, Cristian and Warnick, Sean and Jadbabaie, Ali},
	
	journal={Proc. of the $52^\text{nd}$ IEEE Conference on Decision and Control}, 
	
	title={Structured coprime factorizations description of Linear and Time-Invariant networks}, 
	
	year={2013},
	
	pages={2720--2725},
	
	doi={10.1109/CDC.2013.6760294}}

@article{CF,
title = {Squaring down with zeros cancellation in generalized systems},
journal = {Systems \& Control Letters},
volume = {92},
pages = {5-13},
year = {2016},
author = {Cristian Oar{\u{a}} and Cristian Flutur and Marc Jungers},
doi = {https://doi.org/10.1016/j.sysconle.2016.02.019}
}

@book{genricc,
	title={{Generalized Riccati Theory and Robust Control: A Popov Function Approach}},
	author={Ionescu, V. and Oar{\u{a}}, C. and Weiss, M.},
	year={1999},
	publisher={John Wiley}
}

@article{reg,
	author={Matni, Nikolai and Chandrasekaran, Venkat},
	journal={IEEE Transactions on Automatic Control}, 
	title={{Regularization for Design}}, 
	year={2016},
	volume={61},
	number={12},
	pages={3991--4006},
	doi={10.1109/TAC.2016.2517570}
}

@article{Luca1,  
	author={Furieri, Luca and Zheng, Yang and Papachristodoulou, Antonis and Kamgarpour, Maryam},  
	journal={IEEE Control Systems Letters},   
	title={{An Input–Output Parametrization of Stabilizing Controllers: Amidst Youla and System Level Synthesis}},   
	year={2019},  
	volume={3},  
	number={4},  
	pages={1014--1019},
	doi={10.1109/LCSYS.2019.2920205}}

@article{Luca2,
title = {{System-level, input–output and new parameterizations of stabilizing controllers, and their numerical computation}},
journal = {Automatica},
volume = {140},
pages = {110211},
year = {2022},
author = {Yang Zheng and Luca Furieri and Maryam Kamgarpour and Na Li},
doi = {https://doi.org/10.1016/j.automatica.2022.110211}
}

@article{Luca3,
	author={Zheng, Yang and Furieri, Luca and Papachristodoulou, Antonis and Li, Na and Kamgarpour, Maryam},
	journal={IEEE Transactions on Automatic Control}, 
	title={{On the Equivalence of Youla, System-Level, and Input–Output Parameterizations}}, 
	year={2021},
	volume={66},
	number={1},
	pages={413--420},
	doi={10.1109/TAC.2020.2979785}}

@article{aug_sparse,
	author={Speril\u{a}, Andrei and Oar\u{a}, Cristian and Ciubotaru, Bogdan D. and Sab\u{a}u, \c{S}erban},
	journal={{IEEE Transactions on Automatic Control}}, 
	title={{Distributed Control of Descriptor Networks: A Convex Procedure for Augmented Sparsity}}, 
	year={2023},
	volume={68},
	number={12},
	pages={8067--8074},
	doi={10.1109/TAC.2023.3301949}
}

@book{zhou,
	title={{Robust and Optimal Control}},
	author={Zhou, Kemin and Doyle, John and Glover, Keith},
	year={1996},
	publisher={Prentice-Hall}
}

@book{Kai,
	title={{Linear Systems}},
	author={Kailath, Thomas},
	publisher={Prentice-Hall},
	year={1980}
}

@book{gantmacher,
	title={{The Theory of Matrices}},
	author={Gantmacher, Feliks},
	year={1959},
	publisher={American Math. Society}
}

@book{Rose70,
	title={State-space and multivariable theory},
	author={Rosenbrock, Howard Harry},
	year={1970},
	publisher={Nelson}
}

@article{Verg81,  
	author={Verghese, G. and Levy, B. and Kailath, T.},  
	journal={IEEE Transactions on Automatic Control},   
	title={A generalized state-space for singular systems},   
	year={1981},  
	volume={26},  
	number={4},  
	pages={811--831},  
	doi={10.1109/TAC.1981.1102763}
}

@book{Wonham,
	title={Linear Multivariable Control},
	author={Wonham, W Murray},
	subtitle={Optimal control theory and its applications},
	year={1985},
	publisher={Springer}
}

@book{V,
	title={Control System Synthesis: A Factorization Approach},
	author={Vidyasagar, Mathukumalli},
	journal={Synthesis lectures on control and mechatronics},
	year={1985},
	publisher={MIT Press}
}

@article{SIMAX,
	author = {Oar\u{a}, Cristian and Varga, Andras},
	title = {Minimal Degree Coprime Factorization of Rational Matrices},
	year = {1999},
	volume = {21},
	number = {1},
	doi = {10.1137/S0895479898339979},
	journal = {SIAM Journal on Matrix Analysis and Applications},
	pages = {245–-278},
}

@article{Voul1,  
	author={Voulgaris, P.G.},  
	journal={Proc. of the 2000 American Control Conference},   
	title={Control of nested systems},   
	year={2000},  
	volume={6},  
	pages={4442--4445},  
	doi={10.1109/ACC.2000.877064}
}

@article{Voul2,  
	author={Voulgaris, P.G.},  
	journal={Proc. of the 2000 American Control Conference},   
	title={A convex characterization of classes of problems in control with specific interaction and communication structures},   
	year={2001},  
	volume={4},  
	pages={3128--3133},  
	doi={10.1109/ACC.2001.946401}
}

@article{Rotko,  
	author={Rotkowitz, M. and Lall, S.},  
	journal={IEEE Transactions on Automatic Control},   
	title={{A Characterization of Convex Problems in Decentralized Control}},   year={2006},  
	volume={51},  
	number={2},  
	pages={274--286},  
	doi={10.1109/TAC.2005.860365}
}

@article{Shah,  
	author={Shah, Parikshit and Parrilo, Pablo A.},  
	journal={Proc. of the $50^{th}$ IEEE Conference on Decision and Control and European Control Conference},   
	title={An optimal controller architecture for poset-causal systems},   year={2011}, 
	pages={5522--5528},  
	doi={10.1109/CDC.2011.6160816}
}

@article{Nett,  
	author={Nett, C. and Jacobson, C. and Balas, M.},  
	journal={IEEE Transactions on Automatic Control},   
	title={A connection between state-space and doubly coprime fractional representations},   
	year={1984},  
	volume={29},  
	number={9},  
	pages={831--832},  
	doi={10.1109/TAC.1984.1103674}
}

@article{Lucic,  
	author={Lucic, V.M.},  
	journal={IEEE Transactions on Automatic Control},   
	title={Doubly coprime factorization revisited},   
	year={2001},  
	volume={46},  
	number={3},  
	pages={457--459},  
	doi={10.1109/9.911423}
}

@article{EJC,
	title = {{Continuous-Time Non-Symmetric Algebraic Riccati Theory: A Matrix Pencil Approach}},
	journal = {European Journal of Control},
	volume = {18},
	number = {1},
	pages = {74--81},
	year = {2012},
	doi = {https://doi.org/10.3166/ejc.18.74-81},
	author = {Marc Jungers and Cristian Oar\u{a}},
}

@article{Kron1,
	title = {{Towards Kron reduction of generalized electrical networks}},
	journal = {Automatica},
	volume = {50},
	number = {10},
	pages = {2586--2590},
	year = {2014},
	doi = {https://doi.org/10.1016/j.automatica.2014.08.017},
	author = {Sina Yamac Caliskan and Paulo Tabuada},
}

@article{Kron2,  
	author={Floriduz, Alessandro and Tucci, Michele and Riverso, Stefano and Ferrari-Trecate, Giancarlo},  
	journal={IEEE Trans. on Control Systems Technology}, 
	title={{Approximate Kron Reduction Methods for Electrical Networks With Applications to Plug-and-Play Control of AC Islanded Microgrids}},   year={2019},  
	volume={27},  
	number={6},  
	pages={2403--2416},  
	doi={10.1109/TCST.2018.2863645}
}

@article{Kron3,  
	author={Dorfler, Florian and Bullo, Francesco},  
	journal={IEEE Trans. on Circuits and Systems I: Regular Papers},   title={{Kron Reduction of Graphs With Applications to Electrical Networks}},
	year={2013},  
	volume={60},  
	number={1},  
	pages={150--163},  
	doi={10.1109/TCSI.2012.2215780}
}

@article{plutonizare,  
	author={Sab\u{a}u, {\c{S}}erban and Oar\u{a}, Cristian and Warnick, Sean and Jadbabaie, Ali},  
	journal={IEEE Transactions on Automatic Control},   
	title={{Optimal Distributed Control for Platooning via Sparse Coprime Factorizations}},   
	year={2017},  
	volume={62},  
	number={1},  
	pages={305--320},  
	doi={10.1109/TAC.2016.2572002}
}

@article{NRF,
	author={Sab\u{a}u, \c{S}erban and Speril\u{a}, Andrei and Oar\u{a}, Cristian and Jadbabaie, Ali},
	journal={{IEEE Transactions on Automatic Control}}, 
	title={{Network Realization Functions for Optimal Distributed Control}}, 
	year={2023},
	volume={68},
	number={12},
	pages={8059--8066},
	doi={10.1109/TAC.2023.3298549}
}

@article{SLS,  
	author={Wang, Yuh-Shyang and Matni, Nikolai and Doyle, John C.},  journal={IEEE Transactions on Automatic Control},   
	title={{A System-Level Approach to Controller Synthesis}},   
	year={2019},  
	volume={64},  
	number={10},  
	pages={4079--4093},  
	doi={10.1109/TAC.2018.2890753}
}
\end{document}